\newcommand{\etal}{et al.\ }
\DeclareMathOperator{\E}{\mathbb E}
\newtheorem{theorem}{Theorem}[section]
\newtheorem{lemma}[theorem]{Lemma}
\newtheorem{fact}[theorem]{Fact}
\newtheorem{observation}[theorem]{Observation}
\newtheorem{proposition}[theorem]{Proposition}
\newtheorem{corollary}[theorem]{Corollary}
\newtheorem{definition}[theorem]{Definition}
\newcommand{\eps}{\epsilon}
\newcommand{\cF}{\mathcal{F}}
\newcommand{\Sj}{j^*}
\newcommand{\Ss}{s^*}
\newcommand{\poi}{\textsf{Pois}}
\def\DEBUG{true}
	\def\rem#1{{\marginpar{\raggedright\scriptsize #1}}}
	\newcommand{\sjr}[1]{\rem{\small\textcolor{red}{$\bullet${\tiny #1}}}}
	\newcommand{\remove}[1]{{\color{lightgray} #1}}
	\newcommand{\sjr}[1]{}
\def\DEBUG{true}
	\def\rem#1{{\marginpar{\raggedright\scriptsize #1}}}
	\newcommand{\msr}[1]{\rem{\small\textcolor{red}{$\bullet${\tiny #1}}}}
	\newcommand{\msr}[1]{}
\newcommand{\cM}{\mathcal{M}}
\newcommand{\cG}{\mathcal{G}}
\newcommand{\cJ}{\mathcal{J}}
\begin{document}

\sloppy

\title{Weighted Completion Time Minimization for Unrelated Machines via Iterative Fair  Contention Resolution\thanks{This work 
is supported in part by NSF grants CCF-1409130, CCF-1617653, and CCF-1844939.}
}

\author{Sungjin Im\thanks{ Electrical Engineering and Computer Science, University of California, 5200 N. Lake Road, Merced CA 95344. \texttt{sim3@ucmerced.edu}.}  \and Maryam Shadloo\thanks{ Electrical Engineering and Computer Science, University of California, 5200 N. Lake Road, Merced CA 95344.  \texttt{mshadloo@ucmerced.edu}.}} 

\date{}
\maketitle
\thispagestyle{empty}

\begin{abstract}
	We give a 1.488-approximation for the classic scheduling problem of minimizing total weighted completion time on unrelated machines. This is a considerable improvement on the recent breakthrough of $(1.5 - 10^{-7})$-approximation (STOC 2016, 
	Bansal-Srinivasan-Svensson) and the follow-up result of $(1.5 - 1/6000)$-approximation (FOCS 2017, Li). Bansal \etal introduced a novel rounding scheme yielding strong negative correlations for the first time and applied it to the scheduling problem to obtain their breakthrough, which resolved the open problem if one can beat out the long-standing $1.5$-approximation barrier based on independent rounding. 
	Our key technical contribution is in achieving significantly stronger  negative correlations via \emph{iterative fair contention resolution}, which is of independent interest.  Previously,  Bansal \etal 	obtained strong negative correlations via a variant of  pipage type rounding  and Li used it as a black box. 
\end{abstract}
\clearpage
\setcounter{page}{1}
\section{Introduction}

The unrelated machines setting is a classic scheduling model that has been widely used to model fully heterogeneous parallel machines. In this setting, there is a set $\cM$ of $m$ machines and  a set $\cJ$ of $n$ jobs to be scheduled on the machines. Machines are unrelated in the sense that each job $j \in \cJ$ has an arbitrary  size/processing time $p_{ij}$ on each machine $i \in \cM$. Further, each job $j$ has  weight $w_{j}$.\footnote{We can handle more general weights $w_{ij}$, which depend on machines, but we assume each job's weight is the same on all machines following the convention.} In this paper we consider non-preemptive---and therefore non-migratory---scheduling, which means that each job $j$ must be executed without interruption on one of the machines. In this paper we seek to optimize one of the most popular objectives, namely minimizing total weighted completion time, i.e., $\sum_{j \in \cJ} w_j C_j$ where $C_j$ denotes job $j$'s completion time. 

The problem we consider, denoted as $R||\sum_j w_j C_j$ using the common three-field notation, is known to be strongly NP-hard and APX-hard \cite{hoogeveen2001non}. For this problem, more than fifteen years ago, Schulz and Skutella \cite{schulz2002scheduling} gave an $1.5 + \eps$-approximation based on a time indexed LP.  Later,  Skutella \cite{Skutella01} and Sethuraman and Squillante \cite{sethuraman1999optimal} gave 1.5-approximations based on novel convex programming. It had been a long-standing open problem whether there exists a better than 1.5-approximation for the problem \cite{chekuri2004approximation,schulz2002scheduling,kumar2008minimum,sviridenko2013approximating,schuurman1999polynomial} until it was recently answered by Bansal \etal \cite{BansalSS16} in the affirmative.

The breakthrough by Bansal \etal had two important technical ingredients. First they introduced a novel SDP (semi-definite programming) to capture the pairwise interaction between jobs. Intuitively, this is important as the weighted completion time objective is not linear in job sizes. This is because a job $j'$ can delay another job $j$ if $j'$ starts its execution before $j$ on the same machine. Another contribution was developing a novel rounding scheme. All the aforementioned previous works used an independent rounding that randomly assigns each job $j$ to machine $i$ with probability $x_{ij}$, which is obtained by solving linear or convex programming. Since the 1.5-approximation factor is the best one can hope for using independent rounding, they had to develop a new rounding scheme. Their rounding not only ensures negative correlation between any pair of jobs assigned to the same machine but also ensures that any pair of jobs grouped together on the same machine subject to a capacity constraint are \emph{strongly negatively} correlated. Their rounding was based on a variant of pipage style rounding. Using a SDP relaxation and the new rounding with a delicate grouping of jobs, they were able to obtain a $1.5 - 10^{-7}$-approximation. 

Later, Li observed that a time-indexed LP can be used instead of a SDP \cite{Li17}.  A time-indexed LP (fractionally) encodes when each job starts and ends on each machine. Using the special structure of the time-indexed LP solution, he was able to use Bansal et al.'s strong negative correlation rounding with a different grouping and obtained a better $1.5 - 1/6000$-approximation. 

Up to date, the only way to obtain a better than 1.5-approximation for $R||\sum_j w_j C_j$ has been based on using the novel dependent rounding scheme by Bansal et al., which introduced the notion of strong negative correlations for the first time.

\subsection{Our Results}

\begin{theorem}
	\label{thm:completion-time}
	For minimizing total weighted completion time on unrelated machines, $R||\sum_j w_j C_j$, there exists a randomized 1.488-approximation.
\end{theorem}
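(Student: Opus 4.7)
The plan is to build on Li's time-indexed LP, which fractionally assigns each job $j$ to each machine-time pair $(i,t)$ with variable $x_{ijt}$; let $x_{ij} = \sum_t x_{ijt}$ be the marginal of placing $j$ on machine $i$. Under any randomized rounding that preserves $\{x_{ij}\}$, the expected weighted completion time splits into a self term (each job's own processing) plus a delay term (the total processing time of jobs $j'$ started before $j$ on the same machine $i$). Independent rounding pays the delay term in full, producing the classical $3/2$ barrier; any improvement must come from pairwise negative correlation of coassignments, i.e.\ a bound
\[
\Pr[j \text{ and } j' \text{ both assigned to } i] \; \le \; \beta \cdot x_{ij} x_{ij'}
\]
with $\beta < 1$ as small as possible. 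Bansal \etal achieved such a bound with some $\beta$ just below $1$ via pipage-style rounding; the target here is a markedly smaller $\beta$, strong enough to push the ratio to $1.488$.

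\textbf{The rounding.} I would replace pipage rounding with an iterative fair contention resolution scheme. In each round, every still-unassigned job $j$ samples a proposal $(i,t)$ from its current (renormalized) marginals; each slot $(i,t)$ admits at most one proposer, chosen by a \emph{fair} selection rule whose weights are calibrated so that every job's overall acceptance probability equals its current marginal. Unassigned jobs carry to the next round with updated residual marginals, and the process iterates until every job is placed. This preserves $\{x_{ij}\}$ by construction, and because every round strictly thins the set of proposers on each slot, negative correlation accumulates between any pair of jobs co-assigned to the same machine. Plugging the resulting bound on $\Pr[j, j' \in i]$ into Li's per-machine delay accounting shrinks the delay-term coefficient from $1$ to $\beta$, and the standard $\alpha$-point argument relating fractional delay to $C_j^{\LP}$ yields a per-job bound of roughly $(1+\beta/2)\, C_j^{\LP}$; optimizing the scheme's parameters drives the final ratio to $1.488$.

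\textbf{Main obstacle.} The analytic heart is proving a sharp $\beta$ for the iterative scheme. The difficulty is that successive rounds induce nontrivial dependencies: a collision in round $k$ reshapes the conditional distributions driving round $k+1$, and the fairness weights have to be chosen simultaneously so as to preserve marginals at every round and to amplify, rather than erode, the negative correlation across rounds. My intended route is a round-by-round inductive analysis tracking the conditional coassignment probability of every pair $(j,j')$ on the same machine, coupled with a simpler proxy process (for instance independent Poisson proposals with rejection, suggested by the \poi notation introduced earlier) to extract a closed-form handle on $\beta$. Once this is in place, the final constant follows by optimizing the fairness weights jointly with Li's grouping of jobs by LP load; I expect the routine but careful part to be verifying that the marginal-preservation invariant survives renormalization in every round, and the creative part to be identifying the fairness rule that maximizes pairwise negative correlation subject to that invariant.
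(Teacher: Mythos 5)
Your sketch founders on its central premise: you cannot obtain $\Pr[i \leftarrow j \wedge i \leftarrow j'] \leq \beta\, x_{ij} x_{ij'}$ with a uniform $\beta < 1$ for \emph{all} pairs co-assigned to a machine. (Consider two jobs each with $x_{ij} = x_{ij'} = 1$ on the same machine; more generally, summing the pairwise bound against the marginal-preservation constraint forces $\beta = 1$ for some pairs.) This is exactly why Bansal et al.\ and the present paper only claim \emph{strong} negative correlation for pairs lying in the same group $G \in \cG_i$ with $\sum_{j \in G} x_{ij} \leq 1$, and ordinary negative correlation ($\beta = 1$) otherwise. Consequently your clean accounting ``delay coefficient shrinks from $1$ to $\beta$, giving $(1+\beta/2)C_j^{\LP}$'' does not exist: the gain is confined to pairs of \emph{bad} jobs (small $x_{ij}$, rectangle starting near $0$) whose $\theta$-values land in the same randomly chosen exponential grid interval and whose group's total height stays below $1$ (which itself requires a concentration argument, Lemma~\ref{lem:pair-negative-bound}). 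Good jobs are handled separately by shifting their rectangles to the right, and the $1.488$ constant emerges from balancing the shift parameter, the grid randomization, and the grouping threshold --- none of which your accounting touches.

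The rounding you describe is also structurally different from what is needed, in a way that matters. You have jobs proposing to machine-time slots and each slot fairly selecting one proposer; the contention is at the slots. But ``each slot takes at most one job'' provides no mechanism to negatively correlate a \emph{designated} pair $(j, j')$ on a \emph{designated} machine $i$ beyond what independent rounding already gives. The paper inverts the direction: each group $G \in \cG_i$ \emph{recommends} at most one of its jobs (the variable $B_G$), which is the sole source of strong negative correlation within a group, and the Poisson ticket construction ($N_{ij} = B_{ij}\tilde N_{ij}$ with $\tilde N_{ij} \sim \widetilde\poi(x_{ij})$) exists precisely so that the recommendation can be composed with a per-job contention resolution that assigns each surviving job with probability exactly $(1 - 1/e)x_{ij}$ per round, making the process iterable while preserving marginals. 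Without a group-recommendation step (or an equivalent device) your fairness weights have no degree of freedom that targets the fourth property of Theorem~\ref{thm:rounding-rephrase}, and the ``negative correlation accumulates across rounds'' claim is unsupported --- indeed, proving even plain negative correlation for the iterated process is the most delicate part of the paper (the de-grouping argument of Section 3.2). As it stands the proposal identifies the right starting LP and the right general flavor of technique, but the two load-bearing components --- a rounding primitive that actually yields strong negative correlation within capacity-bounded groups, and a delay analysis that extracts a ratio below $1.5$ from gains available only on those groups --- are both missing.
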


As mentioned, this is a considerable improvement over the previous approximation ratios, $1.5 - 10^{-7}$ and $1.5 - 1/6000$ if we measure the improvement by the margin over the 1.5-approximations \cite{Skutella01,sethuraman1999optimal} that are based on independent rounding. The  improvement primarily comes from our new randomized dependent rounding that achieves strong negative correlations. We formally state what our randomized rounding guarantees as follows. We defer to Section~\ref{sec:application} an overview of how we apply  this theorem to $R||\sum_j w_j C_j$, which is similar at a high-level to  how Li \cite{Li17} applied Bansal et al.'s strong negative correlation rounding to the same scheduling problem.

\begin{theorem}
	\label{thm:rounding-rephrase}
Suppose we are given a set $\cM$ of machines and a set $\cJ$ of jobs together with a fractional assignment $\{x_{ij}\}_{i \in \cM, j \in \cJ}$ of jobs to machines (meaning that job $j$ is assigned to machine $i$ by $x_{ij}$) such that $x_{ij} \in [0, 1]$ for all $i \in \cM, j \in \cJ$ and $\sum_{i \in \cM} x_{ij} = 1$ for all $j \in \cJ$. For each machine $i \in \cM$, select any family $\cG_i$ of disjoint subsets of jobs such that $\sum_{j \in G} x_{ij} \leq 1$ for all $G \in \cG_i$. Then, there is a randomized rounding algorithm that achieves the following properties:
\begin{enumerate}
\item (Feasible Integer Assignment) Each job $j \in \cJ$ is assigned to exactly one machine $i \in \cM$, which is denoted as $i \leftarrow j$. 
\item (Preserving Marginal Probabilities) For every $i \in \cM$ and $j \in \cJ$, $\Pr[i \leftarrow j] = x_{ij}$.
\item (Negative Correlation) For every $i \in \cM$ and $j \neq j' \in \cJ$ such that no $G \in \cG_i$ has $j$ and $j'$ simultaneously, 
$\Pr[i \leftarrow j \wedge i \leftarrow j']  \leq x_{ij} x_{ij'}$
\item (Strong Negative Correlation\footnote{
As noted in \cite{BansalSS16}, it is impossible to impose strong negative correlation on every pair of jobs; thus, it was suggested to obtain strong negative correlations only between jobs in the same group. 
}) For every $i \in \cM$ and $j \neq j' \in \cJ$ such that  $j, j' \in G$ for some $G \in \cG_i$, 
$\Pr[i \leftarrow j \wedge i \leftarrow j']  \leq \frac{1}{1 + e} (e^{x_{ij}} + e^{x_{ij'}}) x_{ij} x_{ij'} \leq x_{ij}x_{ij'}$.
\end{enumerate}
The algorithm terminates in $O(mn \log n)$ time in expectation and w.h.p.
\end{theorem}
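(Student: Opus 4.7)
The plan is to construct an iterative rounding algorithm in which, at each round, every still-unassigned job independently samples a proposal machine from its current residual distribution, and each machine then performs a fair within-group contention resolution that accepts at most one proposer per group $G \in \cG_i$. The residuals $y_{ij}^{(t)}$ start at $y_{ij}^{(0)} = x_{ij}$ and are updated after every round so that $\sum_i y_{ij}^{(t)} = 1$ for each surviving job and the marginal invariant $\Pr[i \leftarrow j \mid j \text{ unassigned after round } t] = y_{ij}^{(t)}$ is preserved. Here \emph{fair} means that a proposer's conditional acceptance probability inside its group depends only on the pair $(i, G)$ and not on the identity of the proposer itself, which is precisely what makes the reweighting coherent.

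First I would verify the marginal invariant by induction on $t$: when $j$ proposes to $i$ and is rejected by group contention, the fairness of the resolution implies that the conditional distribution of $j$'s eventual machine, restricted to the surviving choices, is a simple reweighting of $y^{(t)}$ that can be absorbed into $y^{(t+1)}$ without shifting aggregate probabilities. Combined with $y^{(0)}_{ij} = x_{ij}$, this yields Property~2 directly, and once we know the algorithm terminates it also yields Property~1. For termination and the $O(mn \log n)$ runtime, I would show that in each round a constant fraction of the surviving jobs is accepted in expectation, so the number of unassigned jobs decays geometrically and everything is assigned within $O(\log n)$ rounds with high probability, with $O(mn)$ work per round.

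The correlation properties both lean on the fact that proposals are drawn independently across distinct jobs. For plain negative correlation (Property~3), if $j, j'$ do not share any group on machine $i$, then in each round the events that $j$ and $j'$ are accepted by $i$ are independent conditional on the proposals of all other jobs; summing across rounds and applying the marginal invariant gives the desired product bound. For Property~4, when $j, j' \in G \in \cG_i$, the crucial structural fact is that $j$ and $j'$ can never both be accepted by $i$ in the same round, so the joint event forces the two acceptances to occur in distinct rounds, providing the strict gap over independent rounding.

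The main obstacle is extracting the precise analytic bound $\frac{1}{1+e}\bigl(e^{x_{ij}} + e^{x_{ij'}}\bigr) x_{ij} x_{ij'}$. My plan is to couple the iterative scheme with a continuous-time formulation in which every pair $(i, j)$ carries an independent exponential clock of rate $x_{ij}$, and the fair within-group acceptance probability is tuned so that $\Pr[i \leftarrow j \wedge i \leftarrow j']$ becomes a double integral over the first successful acceptance times of $j$ and $j'$ on $i$ that evaluates in closed form to $\frac{e^{x_{ij}} + e^{x_{ij'}}}{1+e} \cdot x_{ij} x_{ij'}$. The final inequality $\frac{e^{x_{ij}} + e^{x_{ij'}}}{1+e} \le 1$, which packages this as strong negative correlation, then follows from the convexity of $e^x$ together with the group constraint $x_{ij} + x_{ij'} \le 1$.
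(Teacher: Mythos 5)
Your overall architecture (iterate a fair, group-aware contention resolution; use disjointness of within-group acceptances to get the strong bound; sum a geometric series over rounds) is the right shape and matches the paper's strategy, but two essential pieces are missing. The first is Property~3. You assert that if $j$ and $j'$ share no group on machine $i$, then their acceptances by $i$ are ``independent conditional on the proposals of all other jobs.'' This is false in general: $j$ and $j'$ may share a group on some \emph{other} machine $i'$, and the group's single recommendation on $i'$ couples whether each of them gets assigned to $i'$, hence whether each survives to later rounds, hence their joint probability of eventually landing on $i$. This cross-machine coupling is exactly why the paper's proof of Property~3 is its longest and most delicate argument: it proceeds by contradiction, showing that if $\Pr[i\leftarrow j \wedge i\leftarrow j'] > x_{ij}x_{ij']}$ under a grouping $\cG$, the same strict inequality persists after de-grouping $j$ and $j'$ on one other machine at a time (via a careful comparison of conditional probabilities $\Phi(P_\cG,\cdot)$ versus $\Phi(P_{\cG'},\cdot)$, handled in four cases and combined through a linearity argument in $\zeta$), until one reaches a grouping where the two events really are independent. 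Your sketch contains no mechanism for this, and without it the claim does not go through.

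The second gap is the constant in Property~4. You propose to ``tune'' the acceptance rule so that a double integral over exponential clocks evaluates to $\frac{1}{1+e}(e^{x_{ij}}+e^{x_{ij'}})x_{ij}x_{ij'}$, but nothing is computed, and the tuning is precisely where all the content lies. In the paper the constant emerges from three concrete facts: (i) the per-round assignment probability of each surviving job is exactly $1-1/e$ to \emph{some} machine and the failure probability $1/e$ is machine-independent, so the residual distribution never needs reweighting (your $y^{(t)}$ update risks violating $\sum_{j\in G} y^{(t)}_{ij}\le 1$ in later rounds, breaking the grouping precondition); (ii) a per-round bound $\Pr[i\leftarrow_1 j \wedge \mathrm{nil}\leftarrow_1 j'] \le (1-1/e)(1/e)\,e^{x_{ij'}}x_{ij}$, whose proof again requires a de-grouping argument to reduce to the case where $j,j'$ are grouped together only on machine $i$; and (iii) the bound $\Pr[\mathrm{nil}\leftarrow_1 j \wedge \mathrm{nil}\leftarrow_1 j']\le 1/e^2$, which feeds a recursion $P \le (1-1/e)(1/e)(e^{x_{ij}}+e^{x_{ij'}})x_{ij}x_{ij'} + P/e^2$ whose solution produces the factor $\frac{1}{1+e}$. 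Until you exhibit an acceptance rule with machine-independent survival probability and carry out the analogues of (ii) and (iii), the claimed closed form is an aspiration rather than a proof. (Minor point: the final inequality $\frac{e^{x_{ij}}+e^{x_{ij'}}}{1+e}\le 1$ follows from $(e^{x_{ij}}-1)(e^{x_{ij'}}-1)\ge 0$ and $x_{ij}+x_{ij'}\le 1$, not from convexity per se.)
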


The theorem states the  properties our randomized rounding guarantees when rounding a fractional assignment of jobs to machines to an integer assignment.
The first property says the rounding always  outputs a feasible integer assignment when it terminates. 
The second property ensures that each job $j$ is assigned to machine $i$ with probability exactly $x_{ij}$, thus preserving the marginal probabilities. The third property says that two jobs $j \neq j'$ that are not grouped together on machine $i$ are assigned to the same machine $i$ with probability at most $x_{ij}x_{ij'}$; thus, the assignments are negatively correlated. All these properties can be achieved by simple independent rounding.

The last property, which is most interesting, says that for any two jobs grouped together on machine $i$, the probability they are simultaneously assigned to machine $i$ is at most $\frac{\exp( x_{ij}) + \exp( x_{ij'})}{e+ 1}x_{ij}x_{ij'} \leq x_{ij} x_{ij'}$ as $x_{ij} + x_{ij'} \leq 1$. So, if $x_{ij} + x_{ij'} $ is close to 1, the theorem doesn't guarantee strong negative correlation. However, it seems that we need strong negative correlation only when $x_{ij}$ and $x_{ij'}$ are sufficiently small, as will be discussed shortly. In the best scenario,  we have $\lim_{x_{ij}, x_{ij'} \rightarrow 0} \frac{\exp( x_{ij}) + \exp( x_{ij'})}{e+ 1} = \frac{2}{e+1} < 0.5379$.

We note that Theorem~\ref{thm:rounding-rephrase} is identical to Theorem 1.2 in \cite{BansalSS16} except that they have 
$\frac{107}{108} x_{ij} x_{ij'}$ instead of $\frac{1}{1 + e} (e^{x_{ij}} + e^{x_{ij'}}) x_{ij} x_{ij'}$ as the upper bound of the fourth property.  So, they have strong negative correlations regardless of value of $x_{ij} + x_{ij'}$. However, as alluded  above, Bansal \etal grouped $j$ and $j'$ together on machine $i$ only when $x_{ij}, x_{ij'} \leq 1/10$ and Li \cite{Li17} did only when $x_{ij}, x_{ij'} \leq 1/100$. Under their grouping of jobs, the coefficient of $x_{ij} x_{ij'}$ in the fourth property of Theorem~\ref{thm:rounding-rephrase} becomes $\frac{2 \exp(0.1)}{e+1} \simeq 0.5945$ and  $\frac{2 \exp(0.01)}{e+1} \simeq 0.5433$, respectively, meaning much stronger negative correlations when the theorem is actually needed.

Our method to achieve strong negative correlations is completely different from the method of Bansal \etal They used a very clever variant of pipage rounding. In contrast, we use a novel \emph{iterative fair contention resolution}. While our method is inspired by the fair contention resolution scheme by Feige and Vondr{\'{a}}k \cite{FeigeV06}, our method is considerably different and perhaps easier to understand, which we believe enables the iterative application of fair contention resolution. We discuss this in more detail in the following subsection.
\subsection{Our Techniques}

We first discuss two previous works that are most closely related to our new rounding (Theorem~\ref{thm:rounding-rephrase}).

\smallskip
\noindent
\textbf{A Variant of Pipage Rounding \cite{BansalSS16}.} As mentioned before, Bansal \etal \cite{BansalSS16} achieved the first strong negative correlations using a variant of pipage rounding. Note that the fractional assignment $\{x_{ij}\}_{i \in \cM, j \in \cJ}$ described in Theorem~\ref{thm:rounding-rephrase} can be thought of as a fractional matching of a bipartite graph over $(\cM, \cJ)$ saturating all jobs in $\cJ$. While there are many variations and extensions \cite{ageev1999approximation,arora1996new,chekuri2011multi,gandhi2006dependent,kahn1997stochastic}, a typical pipage rounding works as follows: In every iteration, the rounding finds a path or cycle only consisting of edges $(i,j)$ with fractional values, i.e., $x_{ij} \in (0,1)$. Then, it either maximally increases the weight of all odd (even, resp.) edges and simultaneously decreases the weight of all even (odd, resp.) edges by the same maximum margin---this choice is made at random in a way to preserve the marginal probabilities. This update  makes at least one more edge have value either 0 or 1; and the value of such edges remains fixed afterwards. Intuitively, two edges $(i,j)$ and $(i, j')$ incident to the same machine $i$ are chosen with negative correlation because the rounding never increases $x_{ij}$ and $x_{ij'}$ simultaneously. To obtain strong negative correlations additionally, in each iteration Bansal \etal carefully chose paths of length 4 based on a random 2-coloring of the edges.

\smallskip
\noindent
\textbf{Fair Contention Resolution \cite{FeigeV06}.} While our theorem statement is very similar to Bansal et al.'s corresponding theorem in \cite{BansalSS16}, our rounding scheme is completely different and is inspired by the fair contention resolution by Feige and Vondr{\'{a}}k \cite{FeigeV06}. To obtain a better than $1 - 1/e$-approximation for the Maximum Submodular Welfare problem\footnote{The goal of the Maximum Submodular Problem is to allocate items to players so to maximize the total utility where each player has a monotone submodular utility function.} and related problems, they developed a novel fair contention resolution scheme:  Suppose each player $a$ claims item $b$ independently with probability $q_{a,b}$.  Now we need to assign each item that was claimed by one or more players to exactly one of them. They gave an elegant contention resolution scheme where each  player $a$ gets an item $b$ with probability equal to $\frac{1 - \prod_{a'} (1- q_{a',b})}{\sum_{a'} q_{a',b}}$, conditioned on $a$ having claimed $b$. 
In the Maximum Submodular Welfare problem, if each player claims a subest of items according to her own distribution, the approximation guarantee was shown to be determined by the lowest probability that the player receives item $b$ conditioned on her having claimed it. The contention resolution scheme attempts to maximize the lowest probability across all players and items; thus, the name fair contention resolution comes.

As an attempt to obtain a theorem like Theorem~\ref{thm:rounding-rephrase} using the fair contention resolution scheme, think of each group $G$ defined in Theorem~\ref{thm:rounding-rephrase} as a player. 
Then, we let each group $G \in \cG_i$ claim a job $j \in G$ with probability $x_{ij}$ independently, hoping that this will help two different jobs in $G$ less likely be assigned to the same machine $i$ with the aid of the fair contention resolution. The resolution scheme  guarantees that $j$ is assigned to machine $i$ with probability at least 
$\frac{
1 - \prod_{i'} (1- x_{i'j}) 
}{
\sum_{i'} x_{i'j}
} 
x_{ij}
\geq (1 - \prod_{i'} e^{- x_{i'j}}) x_{ij}  =  (1 - 1/e) x_{ij}$. 
This seems like a good sign as the contention resolution preserves each group's choice up to $1 - 1/e$ factor.

Unfortunately, there are several issues. First, their rounding satisfies none of the desired properties claimed in Theorem~\ref{thm:rounding-rephrase}. It is not difficult to modify the scheme to satisfy the first two properties. However, their rounding scheme has no guarantees on the third and fourth properties. At a high-level, their algorithm focuses on the best guarantee on the first moment (recall that their goal was to give a better than $1 - 1/e$-approximation for some assignment problems) and their analysis is remarkably accurate. However, because of the very reason their algorithm and analysis don't  seem to readily extend to satisfy negative and strong negative correlations. 

\smallskip
\noindent
\textbf{Our Approach: Iterative Continuous Fair Contention Resolution.} We develop a new contention resolution scheme of a continuous flavor which we believe is perhaps more intuitive.\footnote{However, our method doesn't give as strong guarantee on the first moment as \cite{FeigeV06}. That is, using our method, each job $j$ is assigned to machine $i$ with probability exactly $(1 - 1/e) x_{ij}$ (in the first iteration). In contrast, the probability can be strictly greater in the method of \cite{FeigeV06} when $\{x_{ij}\}_i$ are not all tiny.} Thus, while the analysis is non-trivial, we are able to analyze its iterative application, thereby obtaining Theorem~\ref{thm:rounding-rephrase}.

Here, we sketch how we develop our fair contention resolution along with the intuitions behind.  To gain some intuitions, let's first focus on each job $j$. Instead of assigning $j$ to a machine $i$ with probability $x_{ij}$ as in independent rounding, we would like to have machine $i$ claim job $j$, taking other jobs into account,  with probability $x_{ij}$ in order to have a better control on jobs assigned to each machine. Then, an obvious issue is that job $j$ may be claimed by multiple machines (or by no machine). Therefore, we need to resolve this contention for job $j$ among machines. Towards this end, we generate $N_{ij}$ (lottery) \emph{tickets} for job $j$ on machine $i$, where $N_{ij} \sim \poi(x_{ij})$, a Poisson distribution with mean value $x_{ij}$. Note that this is equivalent to generating one ticket for job $j$ on machine $i$ with probability $\eps$ independently, for each $\eps$ unit of $x_{ij}$. Thus, this way of generating tickets allows us to view the problem more continutously. 
If no tickets are generated for $j$ across machines, which happens with probability $1/e$, job $j$ is not assigned; otherwise, we choose one ticket of job $j$ uniformly at random and assign the job to the machine from which the ticket originated. It is an easy exercise to see that $j$ is assigned to machine $i$ with probability exactly $(1 - 1/e)x_{ij}$.

Now, we want to impose strong negative correlations between jobs grouped together on each machine. Towards this end, we let each group $G \in \cG_i$ \emph{recommend} one job $j$ (or none) with probability $x_{ij}$ (if $j \in G$); here we use the fact that $\sum_{j \in G} x_{ij} \leq 1$. The idea is to ensure that $j$ has tickets on machine $i$ only when $G$ recommends job $j$ -- by doing so, if $j, j' \in G$, $j$ having tickets on $i$ will be negatively correlated with $j'$ having tickets on $i$. To continue to have the above nice contention resolution based on tickets,
our goal is to ensure $N_{ij} \sim  \poi(x_{ij})$ and $N_{ij} = 0$ unless $G$ recommends $j$ on machine $i$. To achive this, we use a simple trick. Let $\tilde N_{ij}$ denote the number of \emph{potential} tickets sampled from a distribution whose probability mass for each value $k >0$ is exactly $1/ x_{ij}$ times that of $\poi(x_{ij})$. Then, we set $N_{ij} := B_{ij} \cdot \tilde N_{ij}$ and have $N_{ij} \sim  \poi(x_{ij})$. Here,  $B_{ij}$ is an indicator random variable that has value one iff $j$ is recommended by the group where it belongs. In words, $j$ has $N_{ij} = \tilde N_{ij}$ \emph{real} tickets only when $B_{ij}=1$ occurs.

Using the above observation that each job is assigned to some machine with probability exactly $1 - 1/e$, we can repeat the whole process, excluding jobs that have already been assigned, until all jobs are assigned. 

The actual proof of the third and fourth properties is quite non-trivial, particularly the third property. This is because the random process of where two jobs $j$ and $j'$ are assigned depends on whether the two jobs are grouped together on each machine and how much they are assigned in the fractional solution. At a high-level, we show that the worst case for us happens when the two jobs $j$ and $j'$ are not grouped together on any machines possibly except on machine $i$---then the proof becomes relatively easy. To prove this we take a sequence of careful steps conditioning and deconditioning on some random variables. Perhaps proving negative correlations of our method is significantly more challenging than proving negative correlations of pipage rounding because our method assigns $(1 - 1/e)$-fraction of remaining jobs in each iteration unlike pipage rounding that assigns one job wlog in each iteration. Thus, we need to take a global view of the random process considering how each pair of jobs are grouped on all machines. 

\subsection{Applying Theorem~\ref{thm:rounding-rephrase} to the Scheduling Problem} 
	\label{sec:application}

To obtain Theorem~\ref{thm:completion-time} by applying Theorem~\ref{thm:rounding-rephrase} to our scheduling problem, we borrow some important ideas from Li's approach \cite{Li17}. Here we give a sketch of our rounding and briefly discuss the difference between our approach and his.  The time-indexed integer programming (IP) has an indicator variable $x_{ijs}$ which  is  1 if and only if $j$ starts its execution on machine $i$ at time $s$. Note that since we consider non-preemptive scheduling, if $x_{ijs} = 1$, then  $j$ completes at time $s + p_{ij}$.  Let $T$ be a sufficiently large upper bound on the number of time steps we need to consider. We assume $T$ is polynomially bounded in the input size since it was shown that this assumption is wlog with a loss of $(1+\eps)$ factor in the approximation ratio \cite{im2016better}. The time-indexed IP considered in \cite{Li17} is presented below.

Here, the first constraint ensures that every job is scheduled on some machine. The second constraint ensures that every machine processes at most one job at each time. The third constraint enforces that all jobs must complete by time $T$. By relaxing the last constraint  into $x_{ijs} \geq 0$, we obtain a valid LP relaxation. Let $x$ denote the optimal LP solution. Then, for each $x_{ijs} >0$, it will be convenient to think that we have a rectangle of height $x_{ijs}$ starting at time $s$ and ending at time $s + p_{ij}$, denoted as $R_{ijs}$. 
%
%
\begin{align}
\min \sum_{j \in \cJ} w_j &\sum_{i \in \cM, s \in [T]} x_{ijs} (s + p_{ij})\\
	\sum_{i \in \cM, s \in [T]} x_{ijs}  &= 1 &\forall j \in \cJ  \\
	\sum_{j \in \cJ, s \in (t - p_{ij}, t]} x_{ijs} &\leq 1 &\forall i \in \cM, t \in [T] \\
	x_{ijs} & = 0 &\forall i \in \cM,j \in \cJ,s > T - p_{ij} \\
	x_{ijs} & \in \{0, 1\} &\forall i \in \cM,j \in \cJ,s  \in [T]
\end{align}

We first review how the independent rounding gives a 1.5-approximation: for each job $j$, we choose a rectangle $R_{ijs}$ independently with probability $x_{ijs}$ -- then, $j$ is assigned to machine $i$. Next, we sample a random offset $\tau_j$ for each job uniformly at random from $[0, p_{ij}]$. Then, we set $\theta_j = \tau_j +  s$ conditioned on $R_{ijs}$ being chosen for $j$. Now schedule jobs assigned to the same machine in increasing order of $\theta_j$. To upper bound the expected completion time of job $j$, we need to know the expected size of jobs that are assigned to $i$ and have smaller $\theta$ values than job $j$. Due to the linearlity of expecation, we can focus on the expected size of each job $j'$ that is scheduled prior to $j$ on machine $i$, which we call the expected delay $j'$ causes to job $j$ on machine $i$. Thanks to  the uniform choice of $\theta, \tau$ values, one can show that fixing $\theta_j$, the expected delay $j'$ causes to job $j$ on machine $i$ is exactly the area  of rectangle $R_{ij's'}$ up to time $\theta_j$, if $R_{ij's'}$ is chosen for $j'$. This is becuase the probability that $j'$ has a smaller $\theta$ value than $j$ on the condition is exatly $\frac{ \max ( \theta_j - s', 0)   }{p_{ij'}}$, which is exacty the horizontal length of $R_{ij's'}$ up to the time $\theta_j$, divided by $p_{ij'}$. 
Since the total area of rectangles till time $\theta_j$ is at most $1 \cdot \theta_j = \theta_j$ (recall at most one unit of job can be scheduled at each time), we  have $\E[ C_j \; | \; \theta_j, R_{ijs}] \leq \theta_j + p_{ij}$. Knowing that $\E[ \theta_j \; | \; R_{ijs}]  = s + p_{ij} /2$, we have 
$\E[ C_j \; | \;  R_{ijs}]  \leq s + 1.5 p_{ij}$. Further, de-conditioning on the choice of $j$'s rectangle, we have $\E[ C_j] \leq \sum_{i,j,s}  x_{ijs} (s + 1.5 p_{ij} )$, which immediately gives a 1.5-approximation due to the linarity of expectation.

Now we sketch how we use strong negative correlations to obtain a better than $1.5$-approximation. For the purpose of better intuitions, let's assume that there is  \emph{at most} one rectangle $R_{ijs}$ with $x_{ijs} > 0$ for every pair of job $j$ and machine $i$. In other words, assume that each job has at most one rectangle on machine $i$. Further, assume all jobs have weight 1.
Note that in the upper bound of $\E[ C_j \; | \;  R_{ijs}]$, the coefficient of $s$ is just 1. Therefore, if $s$ is  considerably big compared to $p_{ij}$, then job $j$ is an easy job on machine $i$ towards obtaining a better than 1.5-approximation. Another case of job $j$ being easy on machine $i$ is when $x_{ijs}$ is large. This is because in fact we actually have a better upper bound of $\E[ C_j \; | \; \theta_j, R_{ijs}] \leq \theta_j + p_{ij} - \tau_j x_{ijs}$. The reason why we get the extra negative term is as follows: Recall that the expected delay other jobs cause to job $j$ on machine $i$ is upper bounded by $\theta_j$, which is a clear upper bound on the total area of rectangles of other jobs up to time $\theta_j$. Here, we can take off 
the area of $R_{ijs}$ of job $j$ before time $\theta_j$, which is exactly $(\theta_j - s) x_{ijs} = \tau_j x_{ijs}$. 
 Roughly speaking, the rectangle $R_{ijs}$ of each bad job $j$ starts near time 0 and has small height $x_{ijs}$. This was one of the key observations made by Li \cite{Li17} although the definition of easy jobs is slightly different from ours. 

To obtain a better than 1.5-approximation, we need to use Theorem~\ref{thm:rounding-rephrase} (or the corresponding theorem in \cite{BansalSS16}). To handle bad jobs on machine $i$ using Theorem~\ref{thm:rounding-rephrase}, we need to group them carefully. At a high level, we  group jobs of similar $\theta$ values. That is, we first sample $\theta$ value for each rectangle of a bad job, and we group jobs if their $\theta$ values fall into the same time interval, which is one of the intervals of exponentially increasing length partitioning the whole time horizon. Using the fact that $x_{ijs}$ is small for bad jobs and the rectangle of bad jobs starts near time 0, with some care, we can bound the total $x$ value of jobs in each group, which is needed to apply Theorem~\ref{thm:rounding-rephrase}. Then, thanks to strong negative correlations, when two jobs $j$ and $j'$ have similar $\theta$ values, they are less likely to be assigned to the same machine $i$ with a good probability, which is enough to give a better than 1.5-approximation. 

While our rounding is different from Li's in many places, the two main differences are as follows. First, we use a random partition of time horizon into the intervals of exponentially increasing lengths whileas Li used a  deterministic partition. This is because we found the random partition seemed to give a better grouping of jobs. Further, for analysis, we conceptually group non-overlapping rectangles, so that we have a linear combination of subsets of non-overlapping rectangles, which looks like a solution to the configuration LP \cite{sviridenko2013approximating}. Using this structure also helps to improve the approximation ratio slightly. However, as mentioned before, the improvement of approximation ratio primarily comes from our stronger negative correlations.

\subsection{Other Related Work}

Minimizing total (weighted) completion time is one of the most  popular scheduling objectives considered in the literature. 
For the single machine case, the algorithm highest-density-first, which favors jobs of highest $w_j / p_j$, is known to be optimal \cite{smith1956various}. The problem becomes NP-hard when there are multiple machines \cite{garey2002computers}. However, when machines are   identical ($P||\sum_{j}w_jC_j$) or uniformly related ($Q||\sum_{j}w_jC_j$), the problem admits PTASes \cite{afrati1999approximation,skutella2000ptas,chekuri2001ptas}. Interestingly, even when machines are unrelated, if the objective is to minimize total unweighted completion time ($R||\sum_{j} C_j$), the problem is polynomially solvable using a min-cost bipartite matching \cite{horn1973minimizing,bruno1974scheduling}. 
Kalaitzis \etal \cite{Ola2017unrelated} considered an important special case when each job's processing time is proportional to its weight, i.e., $R| w_{ij} / p_{ij} = 1 |\sum_{i,j} w_{ij} C_j$, and gave a 1.21-approximation. Interestingly, their result also achieves a bi-criteria 2-approximation for the makespan objective.
A configuration LP, which encodes all possible scheules on each machine, was 
shown to be solvable optimally within $(1+\eps)$-factor \cite{sviridenko2013approximating}; see \cite{im2016better} for a discussion on the strength of the configuration LP. 
The configuration LP was shown to have an integrality gap of at least 1.08 \cite{Ola2017unrelated}.
If jobs have different arrival times, the problem, $1|r_j|\sum_j w_j C_j$, is NP-hard \cite{lenstra1977complexity} even in the single machine case. When machines are identical ($P|r_j|\sum_j w_j C_j$) or related ($Q|r_j|\sum_j w_j C_j$),  the problem admits PTASes \cite{afrati1999approximation,chekuri2001ptas}. When machines are unrelated ($R|r_j|\sum_j w_j C_j$), 2-approximation \cite{schulz2002scheduling,Skutella01} had been the best approximation known for long until it was recently improved to 1.8687-approximation \cite{im2016better}. 

For the makespan objective on unrelated machines, i.e., $R||\max_j C_j$, a classic 2-approximation is known \cite{lenstra1990approximation}. There have been considerable efforts to improve this ratio for some special cases, e.g. \cite{svensson2012santa,ebenlendr2008graph}. For the dual objective of maximizing the minimum total load of all machines, see~\cite{bansal2006santa,asadpour2010approximation,asadpour2008santa,chakrabarty2009allocating,feige2008allocations}. For $R||(\sum_j (C_j)^k)^{1/k}$, see~\cite{azar2005convex,kumar2009unified}.  For $R|r_j|\sum_j (C_j - r_j)$, a poly-logarithmic approximation is known \cite{BansalK15}. For the special case of $R|r_j, p_{ij} \in \{p_j, \infty\}|\sum_j (C_j - r_j)$, see~\cite{GargK07,GargKM08}.  For a survey of other approximate scheduling results, see~\cite{chekuri2004approximation}.

\subsection{Organization}

We present our iterative fair contention resolution scheme in Section~\ref{sec:iter-algorithm}. Next, in Section~\ref{sec:iterative-analysis}, we show that the rounding scheme satisfies all the properties claimed in Theorem~\ref{thm:rounding-rephrase}.
We present our randomized rounding algorithm for the unrelated machines scheduling problem in Section~\ref{sec:unrelated-algo}.
We give the analysis of the algorithm in Section~\ref{sec:unrelated-analysis}, thereby proving Theorem~\ref{thm:completion-time}.
\section{Rounding Procedure Yielding Strong Negative Correlations}
	\label{sec:iter-algorithm}

\subsection{Preliminaries}

To describe our randomized rounding we need to introduce a probability distribution. Let $\poi(\lambda)$ denote the Poisson distribution with mean value $\lambda$. Recall that $\poi(\lambda)$ has pmf, $e^{-\lambda} \frac{\lambda^k}{k!}$, over $k \in \mathbb{Z}^+ := \{0, 1, 2, \ldots\}$. The following is a well-known property of Poisson distribution.

\begin{fact}
	\label{fact:poi}
	For any set of independent random variables $\{Z_h \sim \poi(\lambda_h)\}_{h \in [H]}$, it is the case that $\sum_{h \in [H]} Z_h \sim \poi(\sum_{h \in [H]} \lambda_h)$.
\end{fact}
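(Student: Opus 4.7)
The plan is direct since this is a textbook fact about Poisson distributions. I would proceed by induction on $H$, with the substantive work in the base case $H = 2$; the inductive step is immediate because partial sums of jointly independent random variables remain independent of the next summand, so one can simply reduce to combining two summands at a time.

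For the base case $H = 2$, I would compute the pmf of $Z_1 + Z_2$ by convolution. Using independence and the pmf of the Poisson distribution,
$$\Pr[Z_1 + Z_2 = k] \;=\; \sum_{i=0}^{k} e^{-\lambda_1}\frac{\lambda_1^{i}}{i!} \cdot e^{-\lambda_2}\frac{\lambda_2^{k-i}}{(k-i)!} \;=\; \frac{e^{-(\lambda_1+\lambda_2)}}{k!}\sum_{i=0}^{k} \binom{k}{i}\lambda_1^{i}\lambda_2^{k-i}.$$
The inner sum equals $(\lambda_1+\lambda_2)^k$ by the binomial theorem, and so $\Pr[Z_1+Z_2 = k] = e^{-(\lambda_1+\lambda_2)}(\lambda_1+\lambda_2)^k/k!$ for every $k \in \Z^+$, which is exactly the pmf of $\poi(\lambda_1 + \lambda_2)$.

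For the induction step, assuming the claim for $H-1$ independent Poissons, I would write $\sum_{h=1}^{H} Z_h = \left(\sum_{h=1}^{H-1} Z_h\right) + Z_H$, note that $\sum_{h=1}^{H-1} Z_h \sim \poi\!\left(\sum_{h=1}^{H-1} \lambda_h\right)$ by the inductive hypothesis and is independent of $Z_H$ (since $Z_1,\dots,Z_H$ are jointly independent, any measurable function of $Z_1,\dots,Z_{H-1}$ is independent of $Z_H$), and then apply the base case.

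There is essentially no obstacle here: the only non-routine step is the binomial-theorem manipulation in the $H=2$ case. As a cleaner alternative, one could give a one-line proof via moment generating functions, using $M_{\poi(\lambda)}(t) = \exp(\lambda(e^t - 1))$ and independence to get $M_{\sum_h Z_h}(t) = \prod_h M_{Z_h}(t) = \exp\!\left(\left(\sum_h \lambda_h\right)(e^t - 1)\right)$, which uniquely identifies the distribution as $\poi\!\left(\sum_h \lambda_h\right)$; either route is routine.
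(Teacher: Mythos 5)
Your proof is correct. The paper states this as a well-known fact and offers no proof of its own, so there is nothing to compare against; your convolution-plus-induction argument (and the moment-generating-function alternative you mention) are both standard and complete, with the base case $H=2$ handled correctly via the binomial theorem and the inductive step justified by the independence of $\sum_{h=1}^{H-1} Z_h$ from $Z_H$.
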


For any $\lambda > 0$, let $\widetilde \poi(\lambda)$ denote the probability distribution with the following pmf: 

$$
\begin{cases} e^{-\lambda} \frac{\lambda^{k-1}}{ k!} &\mbox{if $k \in \{1, 2, 3, \ldots\}$} \\ 
1 - \frac{1 - e^{-\lambda} }{\lambda} & \mbox{otherwise, i.e., $k = 0$} \end{cases}
$$

This pmf is well-defined since $\sum_{k \geq 1} e^{-\lambda} \frac{\lambda^k}{\lambda k!} = \frac{1 - e^{-\lambda} }{\lambda} \leq 1$.\footnote{Here, we used the well-known facts that $e^x = \sum_{k \geq 0} \frac{x^k}{k!}$ and $e^x \geq x+1$.} Note that $\widetilde \poi(\lambda)$'s probability mass for each value $k > 0$ is $1 / \lambda$ times that of $\poi(\lambda)$. 

\begin{observation}
	\label{o:between-two-dist}
	Let $\tilde N \sim \widetilde \poi(\lambda)$ and $B$ be a Bernoulli (0-1) random variable with mean value $\lambda$. Then, $\tilde N \cdot B \sim \poi(\lambda)$.
\end{observation}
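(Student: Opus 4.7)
The plan is to verify directly that the pmf of $\tilde N \cdot B$ agrees with that of $\poi(\lambda)$ at every nonnegative integer, assuming (as is clearly intended from the context of the paper) that $\tilde N$ and $B$ are independent. I would split into two cases: $k \geq 1$ and $k = 0$. For each case I would compute $\Pr[\tilde N \cdot B = k]$ using the definition of $\widetilde \poi(\lambda)$ given just above the observation, and then compare with $e^{-\lambda}\lambda^k/k!$.

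For $k \geq 1$, the event $\{\tilde N \cdot B = k\}$ forces $B = 1$ and $\tilde N = k$, so by independence the probability factors as
\[
\Pr[B = 1] \cdot \Pr[\tilde N = k] \;=\; \lambda \cdot e^{-\lambda}\,\frac{\lambda^{k-1}}{k!} \;=\; e^{-\lambda}\,\frac{\lambda^k}{k!},
\]
which is exactly the $\poi(\lambda)$ mass at $k$. For $k = 0$, I would decompose $\{\tilde N \cdot B = 0\}$ as the disjoint union $\{B = 0\} \cup \{B = 1,\, \tilde N = 0\}$, which by independence has probability $(1 - \lambda) + \lambda \bigl(1 - (1 - e^{-\lambda})/\lambda\bigr)$, and then simplify algebraically to $e^{-\lambda}$.

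There is no real obstacle here beyond bookkeeping; the definition of $\widetilde \poi(\lambda)$ was engineered precisely to make the $k \geq 1$ computation collapse to the Poisson pmf (the factor of $1/\lambda$ in the definition cancels the factor $\lambda$ coming from $\Pr[B=1]$), and the atom at $0$ was set to whatever residual probability makes the distribution sum to $1$, which forces the $k = 0$ identity to hold automatically. I would present the two cases as a short one-line calculation each and conclude that $\tilde N \cdot B \sim \poi(\lambda)$.
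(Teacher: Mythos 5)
Your proof is correct and is exactly the direct verification the paper intends (the observation is stated without proof precisely because the pmf of $\widetilde\poi(\lambda)$ was defined to make it immediate): the case $k \geq 1$ cancels the $1/\lambda$ against $\Pr[B=1]=\lambda$, and the case $k=0$ collapses to $e^{-\lambda}$ by the complementary mass. Your explicit note that independence of $\tilde N$ and $B$ is being assumed is also the right reading of the algorithm's setup.
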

\subsection{Rounding Algorithm}
We are now ready to describe our randomized rounding. To simplify notation, for every $i \in \cM$ and $j \in \cJ$, if $j \notin G$ for all $G \in \cG_i$, then we create a singleton set of job $j$ and add it to $\cG_i$. Note that this has no effect on the properties we aim to prove as we do not change the existing groups. So, we can assume wlog that $\uplus_{G \in \cG_i} G = \cJ$ for all $i \in \cM$. 

Our randomized rounding is iterative. In the first iteration we perform as described in Figure~\ref{tabel1}. In principle, we do not need to generate potential tickets for job $j$ on machine $i$ unless $B_{ij} = 1$. However, the  description where we first generate potential tickets independently for every pair of job $j$ and machine $i$ makes the analysis more intuitive. For brevity,  we will interchangeably use $B_{ij}$ and $B_{ij} = 1$.

\begin{figure*}[!h]
\label{tabel1}
\vspace{-2ex}
\begin{center}
\fbox{
    \parbox{0.98\textwidth}{
\begin{enumerate}
\setlength{\itemsep}{0.3pt}
\setlength{\parskip}{0.3pt}
\item  For each $i \in \cM$ and $j \in \cJ$, let $\tilde N_{ij} \sim \widetilde \poi(x_{ij})$ be an independent RV; if $x_{ij} = 0$, then $\tilde N_{ij} = 0$. In words, we generate $\tilde N_{ij}$ \emph{potential} tickets for each job $j$ on machine $i$, according to $\widetilde \poi(x_{ij})$, independently.

\item For each $i \in \cM$ and $G \in \cG_i$, let $B_G \in G \cup \{nil\}$ be an independent RV such that $\Pr[ B_G = j] = x_{ij}$ for all $j \in G$. We use an indicator variable $B_{ij}$ to denote the event $B_G  = j$.  In words, each group $G \in \cG_i$ \emph{recommends} one job $j$ in the group $G$ with probability $x_{ij}$ independently and the event is denoted as $B_{ij}$; or it may recommend no jobs. 

\item For each $i \in \cM$ and $j \in \cJ$, let $N_{ij} = B_{ij} \cdot \tilde N_{ij}$. In words, all $\tilde N_{ij}$ potential tickets of job $j$ on machine $i$ become \emph{real} tickets if and only if the event $B_{ij} = 1$ occurs. Job $j$ has real tickets on machine $i$ iff $B_{ij} = 1$ and it has non-zero potential tickets on the machine. 

\item Each job $j \in \cJ$ is \emph{assigned} to machine $i \in \cM$, denoted as $i \leftarrow_1 j$, independently with probability $\frac{N_{ij}}{\sum_{i' \in \cM} N_{i'j}}$; if $\sum_{i' \in \cM} N_{i'j} = 0$, then $j$ is not assigned, denoted as $nil \leftarrow_1 j$. In words, among all real tickets of each job $j$ across machines, we choose one uniformly at random. If the ticket was generated on machine $i$, then we \emph{assign} job $j$ to machine $i$. 
\end{enumerate} 
    }
}
\caption{The first iteration of our randomized rounding}
\end{center}
\end{figure*}

We now explain how the rounding works in the subsequent iterations, $2, 3, \dots$. Every job assignment is \emph{final}. If job $j$ is assigned to machine $i$ in the $\ell$-th iteration (denoted as $i \leftarrow_\ell j$) the job is never considered in the subsequent iterations, $\ell+1, \ell+2, \dots$. Let $\cJ_{\leq \ell}$ denote the set of jobs that were assigned in iterations $1, 2, \dots, \ell$, i.e., $\cJ_{\leq \ell} := \{ j \in \cJ \; | \; i \leftarrow_{\ell'} j \textnormal{ for some } i \in \cM, \ell' \leq \ell\}$.
At the beginning of  the $\ell+1$-th iteration, we update $\cJ$ to $\cJ \setminus \cJ_{\leq \ell}$ and every set $G \in \cG_i$  to $G \setminus \cJ_{\leq \ell}$. Then, we perform the above four steps---all the RVs used in this iteration are different from those used in the previous iterations. The $\ell$-th iteration begins only when there exists a job that hasn't been assigned yet, i.e., $\cJ_{< \ell} \neq \cJ$. This completes the description of our randomized rounding.
\section{Proof of Theorem~\ref{thm:rounding-rephrase}}
    \label{sec:iterative-analysis}

This section is devoted to proving Theorem~\ref{thm:rounding-rephrase}. We first make an easy observation which will be useful in the analysis. Since we renew all RVs in each iteration, we add superscript $\ell$ to RVs if they are of the $\ell$-th iteration---for example, $N^\ell_{ij}$ denotes the RV in the $\ell$-th iteration corresponding to $N_{ij}$. For RVs of the first iteration, we omit the superpscript. 

\begin{observation}
	\label{obs:indep-rounds}
	Conditioned on  no jobs in $\cJ' \subseteq \cJ$ having been assigned in the previous iterations, the stochastic process of assigning $\cJ'$ is identical to the same process starting from the first iteration. 
\end{observation}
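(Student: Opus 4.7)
The plan is to recognize this as a memoryless (Markov) property of the iterative rounding; I expect a short proof that rests entirely on how the algorithm was set up. The key structural facts I would cite are: (i) at the start of each new iteration the algorithm updates the job set to $\cJ \setminus \cJ_{\leq \ell}$ and each group $G$ to $G \setminus \cJ_{\leq \ell}$; (ii) the RVs $\{\tilde N^{\ell+1}_{ij}\}$ and $\{B^{\ell+1}_G\}$ drawn in iteration $\ell+1$ are fresh and mutually independent of all RVs used in iterations $1,\ldots,\ell$, as emphasized in the algorithm's description; and (iii) the assignment rule (steps 1--4 of Figure~\ref{tabel1}) is applied in exactly the same functional form in every iteration.

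From (iii), the outcome of iteration $\ell+1$ is a deterministic function of the state at the start of that iteration (the remaining jobs and their surviving groups) together with the fresh RVs of iteration $\ell+1$. Let $E_\ell$ denote the conditioning event $\{\cJ' \cap \cJ_{\leq \ell} = \emptyset\}$. Because $E_\ell$ is measurable with respect to the RVs of iterations $1,\ldots,\ell$, it is independent, by (ii), of the RVs of iterations $\ell+1, \ell+2, \ldots$. Conditioning on $E_\ell$ therefore does not alter the joint distribution of those fresh RVs; it only determines the state at which iteration $\ell+1$ begins. Under $E_\ell$, the remaining job set contains $\cJ'$ in full, and each surviving group $G \setminus \cJ_{\leq \ell}$ still contains every $j \in \cJ' \cap G$.

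Combining these observations, the process from iteration $\ell+1$ onwards is distributed as one fresh run of the algorithm applied to the current reduced state, if we relabel iteration $\ell+1$ as iteration $1$. In particular, the sub-process that eventually assigns the jobs of $\cJ'$ has the same joint law as it would in a fresh run. If a fully formal write-up is required, I would set this up by induction on $\ell$: the base case $\ell = 0$ is vacuous, and the induction step peels off one iteration using the independence of its fresh RVs from the past, then invokes the induction hypothesis on the iterations that follow.

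The only real subtlety, and the point at which I would be most careful, lies in pinning down what ``the same process starting from the first iteration'' means. The natural reading is that we compare to the algorithm run on the current reduced state $(\cJ \setminus \cJ_{\leq \ell}, \{G \setminus \cJ_{\leq \ell}\}_{G \in \cG_i, i \in \cM})$ as if that were the initial input. With this convention, the observation is an immediate consequence of (ii) and (iii), and I do not anticipate a genuine obstacle beyond stating it cleanly.
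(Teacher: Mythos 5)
There is a genuine gap, and it sits exactly at the point you flag as ``the only real subtlety'' and then dismiss. The observation, as it is used throughout the paper, asserts that the conditional law of the assignment of $\cJ'$ from iteration $\ell+1$ onward equals the law of the assignment of $\cJ'$ in the process run on the \emph{original} instance (full $\cJ$, full groups) starting from iteration $1$ --- e.g.\ $\Pr[i \leftarrow_{\geq \ell} j \wedge i \leftarrow_{\geq \ell} j' \mid j, j' \notin \cJ_{\leq \ell-1}] = \Pr[i \leftarrow j \wedge i \leftarrow j']$, where the right-hand side is an unconditional probability for the full instance. Your proof only establishes equality with a fresh run on the \emph{reduced} instance $(\cJ \setminus \cJ_{\leq \ell}, \{G \setminus \cJ_{\leq \ell}\})$, which is a different (and for the paper's purposes insufficient) statement. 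Moreover, even that weaker statement is not quite clean as you write it: conditioning on $E_\ell = \{\cJ' \cap \cJ_{\leq \ell} = \emptyset\}$ does \emph{not} determine the state at the start of iteration $\ell+1$ --- which jobs outside $\cJ'$ survive is still random --- so ``a fresh run on the current reduced state'' is really a mixture over random reduced states.

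Both issues are repaired by the argument the paper actually gives, which is the missing idea here: the assignment of the jobs in $\cJ'$ is a function only of (a) their own potential tickets $\{\tilde N_{ij}\}_{i \in \cM, j \in \cJ'}$ and (b), for each group $G$, whether $B_G$ recommends some specific $j \in \cJ' \cap G$ or recommends none of them. The distribution of (a) is fixed by construction, and for (b) the event $B_G = j$ has probability exactly $x_{ij}$ \emph{regardless of which other jobs remain in $G$}, since each job's recommendation probability in a group is its own $x_{ij}$ and shrinking the group only removes competing outcomes without changing these marginals. Hence the joint law of the RVs relevant to $\cJ'$ is identical in the reduced instance (for every possible realization of which other jobs survive) and in the full instance, and it is independent of the conditioning event $E_\ell$. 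Without this invariance-under-group-shrinking step, your induction establishes only the memorylessness of the iteration structure, not the equality with the first-iteration process on the original input that the paper repeatedly invokes.
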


In particular, this observation means: For any event $E(\cJ', \ell)$
	concerning the assignment of some jobs $\cJ' \subseteq \cJ$ in the $\ell$-th iteration or in the subsequent iterations, we have $\Pr[ E(\cJ', \ell) \; | \; \cJ' \cap \cJ_{\leq \ell-1} = \emptyset] = \Pr[ E(\cJ', 1)]$. For example, consider the fourth property. Then, the observation implies that $\Pr[ i \leftarrow_{\geq \ell} j \wedge i \leftarrow_{\geq \ell} j' \; | \; j, j' \notin \cJ_{\ell-1}] = \Pr[ i \leftarrow j \wedge i \leftarrow j']$. Here $i \leftarrow_{\geq \ell} j$ denotes $j$ being assigned to $i$ in the $\ell$-th iteration or later. We illustrate Observation~\ref{obs:indep-rounds} using this as an example. Note that the assignment of $j$ and $j'$ are completely determined by  $\{\tilde N_{i'j}\}_{i' \in \cM}, \{\tilde N_{i'j'}\}_{i' \in \cM}$, and $B_G$ for all groups $G$ including $j$ or $j'$. Further, for $B_G$, what only matters is whether $B_G = j$ or $j'$, or not. These events, $\tilde N_{i'j}$, $\tilde N_{ij}$ are not affected by whether other jobs have been assigned or not. Extending this argument to Observation~\ref{obs:indep-rounds} is straightforward.

\smallskip
	We now shift to proving the four properties and the running time guarantee.

\subsection{First and Second Properties and Running Time}
	\label{sec:second}
	
	The first property says that the randomized rounding assigns each job to exactly one machine. In each iteration the rounding algorithm attempts to assign jobs that have not been assigned in the previous iterations. Therefore, the first property immediately follows if we show that the algorithm eventually terminates. We will show that the algorithm terminates in $O(\log n)$ iterations with high probability and also in expectation, after showing the second property.

For better readability we first give the analysis for the first iteration and extend it to arbitrary iterations using Observation~\ref{obs:indep-rounds}. We first make two easy observations. 

\begin{observation}
	\label{o:num-tickets}
	For any $i \in \cM$ and $j \in \cJ$, $N_{ij} \sim \poi(x_{ij})$.
\end{observation}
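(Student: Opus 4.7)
The plan is to derive the observation as an immediate instance of Observation~\ref{o:between-two-dist}. I would begin by identifying the two ingredients of that observation in our rounding. From Step 1 of the rounding procedure, $\tilde N_{ij} \sim \widetilde\poi(x_{ij})$ (with the convention that $\tilde N_{ij} \equiv 0$ when $x_{ij}=0$). From the wlog normalization $\uplus_{G \in \cG_i} G = \cJ$ introduced at the start of Section~\ref{sec:iter-algorithm}, there is a unique group $G \in \cG_i$ containing $j$, and Step 2 sets $B_G = j$ with probability $x_{ij}$; hence $B_{ij} := \mathbf{1}[B_G = j]$ is a Bernoulli random variable with mean $x_{ij}$. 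The hypothesis $\sum_{j' \in G} x_{ij'} \leq 1$ ensures that these probabilities are well-defined.

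Next I would note the needed independence: the RVs drawn in Step 1 are chosen independently of those drawn in Step 2, so $\tilde N_{ij}$ and $B_{ij}$ are independent. Applying Observation~\ref{o:between-two-dist} with $\lambda = x_{ij}$ then yields $N_{ij} = B_{ij} \cdot \tilde N_{ij} \sim \poi(x_{ij})$, which is exactly the claim. The degenerate case $x_{ij}=0$ is handled trivially, since then $\tilde N_{ij}$ and $B_{ij}$ are both almost surely zero and $N_{ij} \equiv 0 \sim \poi(0)$.

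There is essentially no obstacle in this step: all the substantive work (verifying that multiplying a $\widetilde\poi(\lambda)$ random variable by an independent Bernoulli$(\lambda)$ yields a Poisson$(\lambda)$ distribution) has already been carried out in Observation~\ref{o:between-two-dist}, and the present observation is a mechanical matching of parameters. Its only purpose here is to serve as the distributional building block for later steps, where Fact~\ref{fact:poi} will be applied to sums of the $N_{ij}$'s across machines to compute the normalization $\sum_{i'} N_{i'j}$ governing the ticket-based contention resolution.
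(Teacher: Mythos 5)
Your proof is correct and follows exactly the paper's route: identify $\tilde N_{ij} \sim \widetilde\poi(x_{ij})$ from Step 1 and $B_{ij}$ as an independent Bernoulli with mean $x_{ij}$ from Step 2, then invoke Observation~\ref{o:between-two-dist}. The extra care you take with independence and the degenerate case $x_{ij}=0$ is fine but not a different argument.
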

\begin{proof}
	The algorithm defines $N_{ij} = B_{ij} \cdot \tilde N_{ij}$, where $\tilde N_{ij} \sim \widetilde \poi(x_{ij})$ and $B_{ij}$ is a  Bernoulli RV with mean value $x_{ij}$. Thus, this observation follows from Observation~\ref{o:between-two-dist}.
\end{proof}

\begin{observation}
	\label{o:tickets-independent}
	For any job $j \in \cJ$, all RVs $\{N_{ij}\}_{i \in \cM}$ are independent. 
\end{observation}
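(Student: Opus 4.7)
The plan is to reduce the claim to the independence of the underlying random variables used to construct $N_{ij}$. Recall that $N_{ij} = B_{ij} \cdot \tilde N_{ij}$, so $N_{ij}$ is a deterministic function of the pair $(B_{ij}, \tilde N_{ij})$. Therefore, to establish independence of the family $\{N_{ij}\}_{i \in \cM}$ for a fixed $j$, it suffices to show that the pairs $\{(B_{ij}, \tilde N_{ij})\}_{i \in \cM}$ are mutually independent.

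First I would unpack how each coordinate is generated. The potential-ticket variables $\tilde N_{ij}$ are drawn independently across all pairs $(i,j)$ by construction in Step~1 of the algorithm, so in particular $\{\tilde N_{ij}\}_{i \in \cM}$ is an independent family. Similarly, in Step~2 the recommendation variables $B_G$ are independent across all groups $G \in \bigcup_{i \in \cM} \cG_i$. Since for a fixed job $j$ the indicator $B_{ij}$ only depends on $B_{G_i(j)}$, where $G_i(j)$ is the unique group in $\cG_i$ containing $j$, and since the groups $G_i(j)$ for different machines $i$ belong to disjoint families (they live on different machines), the variables $\{B_{ij}\}_{i \in \cM}$ are also mutually independent.

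Next I would combine these two independences. The collection $\{\tilde N_{ij}\}_{i \in \cM}$ is independent of the collection $\{B_{ij}\}_{i \in \cM}$, because potential tickets are drawn in Step~1 independently of the recommendations drawn in Step~2. Putting all of this together, the joint family $\{(B_{ij}, \tilde N_{ij})\}_{i \in \cM}$ is mutually independent, and therefore so is $\{N_{ij}\}_{i \in \cM} = \{B_{ij} \cdot \tilde N_{ij}\}_{i \in \cM}$.

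I do not anticipate any real obstacle here; the only subtle point is to be precise that, for fixed $j$, each $B_{ij}$ depends on exactly one $B_G$ and that these $B_G$'s sit in different machines' group families, so no shared randomness couples $B_{ij}$ and $B_{i'j}$ for $i \neq i'$. Once that bookkeeping is made explicit, the observation follows from the standard fact that functions of disjoint independent random variables are themselves independent.
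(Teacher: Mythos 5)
Your proposal is correct and follows essentially the same route as the paper's own proof: it observes that $N_{ij} = B_{ij}\cdot\tilde N_{ij}$, that the $\{\tilde N_{ij}\}_i$ are independent by construction, that for fixed $j$ each $B_{ij}$ is determined by the single group recommendation on machine $i$ so that $\{B_{ij}\}_i$ are independent across machines, and that functions of disjoint independent variables are independent. The paper states this more tersely, but the argument is identical; no issues.
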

\begin{proof}
	This is because  all RVs $\{\tilde N_{ij}\}_{i \in \cM, j \in \cJ}$ are independent; all RVs $\{B_{ij}\}_{i \in \cM}$ are independent; and $N_{ij} =  B_{ij} \cdot \tilde N_{ij}$.
\end{proof}

Recall that $\cJ_{< \ell}$ denotes the set of job that are assigned to some machines before the $\ell$-th iteration.

\begin{lemma}
	\label{lem:single}
	For any job $j \in \cJ$,  machine $i \in \cM$ and iteration $\ell \geq 1$, 
	we have $\Pr[ i \leftarrow_\ell j  \; | \; B^\ell_{ij}, j \notin \cJ_{< \ell}] = 1- 1/e$.
\end{lemma}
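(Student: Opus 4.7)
The plan is to reduce to the first iteration using Observation~\ref{obs:indep-rounds}, compute the \emph{unconditional} probability $\Pr[i \leftarrow j]$ via Poisson-sum conditioning, and then recover the conditional probability from the fact that $B_{ij} = 0$ forces $N_{ij} = 0$.

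First, by Observation~\ref{obs:indep-rounds} applied to $\cJ' = \{j\}$, the random process starting at iteration $\ell$ conditioned on $j \notin \cJ_{<\ell}$ has the same distribution as the process starting fresh in iteration~$1$; hence $\Pr[i \leftarrow_\ell j \mid B^\ell_{ij}, j \notin \cJ_{<\ell}] = \Pr[i \leftarrow_1 j \mid B_{ij}]$, so I may drop the iteration superscript and work only with the first iteration. I may also assume $x_{ij} > 0$, since otherwise the conditioning event has probability zero.

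Next, the key observation is that $\Pr[i \leftarrow j \mid B_{ij} = 0] = 0$: if $B_{ij} = 0$ then $N_{ij} = B_{ij}\cdot \tilde N_{ij} = 0$, and Step~4 assigns $j$ to $i$ with conditional probability $N_{ij}/\sum_{i'}N_{i'j}$, which is $0$. Splitting on $B_{ij}$ therefore gives
$
\Pr[i \leftarrow j] = x_{ij} \cdot \Pr[i \leftarrow j \mid B_{ij} = 1],
$
so the lemma reduces to proving the unconditional identity $\Pr[i \leftarrow j] = (1 - 1/e)\, x_{ij}$.

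Finally, to compute this unconditional probability, let $M := \sum_{i' \in \cM} N_{i'j}$. Observations~\ref{o:num-tickets} and~\ref{o:tickets-independent} say $\{N_{i'j}\}_{i' \in \cM}$ are independent with $N_{i'j} \sim \poi(x_{i'j})$; combined with Fact~\ref{fact:poi} and the constraint $\sum_{i'} x_{i'j} = 1$, this gives $M \sim \poi(1)$ and in particular $\Pr[M > 0] = 1 - 1/e$. By the definition of Step~4,
$
\Pr[i \leftarrow j] = \E\bigl[(N_{ij}/M)\,\mathbf{1}[M > 0]\bigr].
$
The main technical step is the classical Poisson-sum-to-multinomial identity: conditional on $M = m$, the vector $(N_{i'j})_{i' \in \cM}$ is multinomial with parameters $m$ and mixing weights $(x_{i'j})_{i'}$, so $\E[N_{ij} \mid M = m] = m\, x_{ij}$ and hence $\E[N_{ij}/M \mid M = m] = x_{ij}$ for every $m \geq 1$. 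Averaging over $m \geq 1$ yields $\Pr[i \leftarrow j] = x_{ij}(1 - 1/e)$, and dividing by $x_{ij}$ completes the proof. There is no real obstacle here: everything boils down to the Poisson-to-multinomial identity, which is exactly the reason the $\widetilde \poi$-inflation in Step~1 recovers the correct marginal.
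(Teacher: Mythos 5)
Your proof is correct, but it takes a genuinely different route from the paper's. The paper computes $\Pr[i \leftarrow_1 j \mid B_{ij}]$ head-on: it fixes $\tilde N_{ij} \sim \widetilde \poi(\lambda)$ and $N_{-1} \sim \poi(1-\lambda)$, expands $\sum_{k \geq 1, k' \geq 0} \Pr[\tilde N_1 = k]\Pr[N_{-1}=k']\tfrac{k}{k+k'}$ using the explicit pmf of $\widetilde\poi$, and collapses the double sum with the binomial theorem to get $1-1/e$; the unconditional statement $\Pr[i\leftarrow_1 j] = (1-1/e)x_{ij}$ is then deduced as Corollary~\ref{cor:second-1}. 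You invert that logical order: you first establish the unconditional marginal $\Pr[i\leftarrow_1 j]=(1-1/e)x_{ij}$ by observing that the real-ticket vector $(N_{i'j})_{i'}$ consists of independent Poissons summing to $M\sim\poi(1)$, invoking the Poisson-to-multinomial conditioning identity to get $\E[N_{ij}/M \mid M=m]=x_{ij}$ for every $m\geq 1$, and averaging against $\Pr[M>0]=1-1/e$; you then recover the conditional statement by splitting on $B_{ij}$ and noting $B_{ij}=0$ forces $N_{ij}=0$. There is no circularity, since your unconditional computation never uses the conditional claim. What your route buys is conceptual economy --- the multinomial identity replaces the paper's series manipulation and makes transparent why the $\widetilde\poi$-inflation in Step~1 is exactly the right fix --- at the cost of importing one classical fact the paper never states. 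What the paper's route buys is self-containment: its explicit computation works directly with the conditional law of $\tilde N_{ij}$ given $B_{ij}$, which is the object reused in later conditional arguments (e.g.\ Lemma~\ref{lem:pair}). One small notational caution: after your reduction, write $\Pr[i\leftarrow_1 j]$ rather than $\Pr[i\leftarrow j]$ for the first-iteration probability, since the paper reserves the unsubscripted arrow for the event that $j$ is ever assigned to $i$, whose probability is $x_{ij}$, not $(1-1/e)x_{ij}$.
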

\begin{proof}
	Fix a job $j$ and machine $i$. Say $i = 1$ wlog By Observation~\ref{o:tickets-independent} and Fact~\ref{fact:poi}, we have $\sum_{i' \in \cM \setminus \{1\}} N_{i'j} \sim  \poi(\sum_{i'  \in \cM \setminus \{1\}} x_{i'j} = 1 - x_{1j})$. For notational convenience, let $\lambda = x_{1j}$, $\tilde N_1 = \tilde N_{1j}$, and $N_{-1} = \sum_{i' \in \cM \setminus \{1\}} N_{i'j}$. Note that $\tilde N_1 \sim \widetilde \poi(\lambda)$ and $N_{-1} \sim \poi(1-\lambda)$. 
\begin{align*}	
	&\;\;\; \Pr[ i \leftarrow_1  j \; | \; B_{ij}]  &\\
	&= \sum_{k \geq 1, k' \geq 0} \Pr[ \tilde N_1 = k \wedge N_{-1} = k']  \cdot \frac{k}{k+k'}  &\nonumber \\	
	&= \sum_{k \geq 1, k' \geq 0} \Pr[ \tilde N_1 = k] \cdot \Pr [N_{-1} = k']  \cdot \frac{k}{k+k'}   &\mbox{[$\tilde N_1$ and $N_{-1}$ are independent]} \\	
	&= \sum_{k \geq 1, k' \geq 0} e^{-\lambda}  \frac{\lambda^k}{\lambda k!}  \cdot e^{-(1- \lambda)} \frac{ (1 - \lambda)^{k'}}{ k'!}  \cdot \frac{k}{k+k'}  &  \mbox{[$\tilde N_1 \sim \widetilde \poi(\lambda)$ and $N_{-1} \sim \poi(1 - \lambda)$]}\\
	&= \sum_{k \geq 0, k' \geq 0} e^{-\lambda}  \frac{\lambda^k}{ k!}  \cdot e^{-(1- \lambda)} \frac{ (1 - \lambda)^{k'}}{k'!}  \cdot \frac{1}{k+k'+1} &\\		
	&= e^{-1} \sum_{k'' \geq 0} \sum_{k, k' \geq 0: k + k' = k''}  \frac{\lambda^k}{k!} \cdot \frac{(1 - \lambda)^{k'}}{k'!} \cdot k''! \cdot \frac{1}{(k''+1)!}  &\\
	&= e^{-1} \sum_{k'' \geq 0} \frac{1}{(k''+1)!}  (\lambda + (1 - \lambda))^{k''} &\\
	&= e^{-1} \sum_{k'' \geq 1} \frac{1}{(k'')!} 
	= e^{-1} (e - 1)  = 1 - 1/e, &
\end{align*}
\noindent which, combined with Observation~\ref{obs:indep-rounds}, yields the lemma. 
\end{proof}

\begin{corollary}
	\label{cor:second-1}
	For any job $j \in \cJ$, machine $i \in \cM$ and iteration $\ell \geq 1$, we have $\Pr[i \leftarrow_\ell j \;| \; j \notin \cJ_{< \ell}] = (1 - 1/e) x_{ij}$. 
\end{corollary}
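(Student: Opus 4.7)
The plan is to deconditione on $B^\ell_{ij}$ and apply Lemma~\ref{lem:single} together with the fact that $j$ can only be assigned to machine $i$ in iteration $\ell$ when it has some real ticket on $i$, which in turn requires the event $B^\ell_{ij} = 1$ to occur.

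First I would observe that, by construction of the algorithm (Step~3 of Figure~\ref{tabel1}, applied in iteration $\ell$), job $j$ has real tickets on machine $i$ in iteration $\ell$ only if $B^\ell_{ij} = 1$. Hence the event $i \leftarrow_\ell j$ is contained in the event $B^\ell_{ij} = 1$, and so
\[
\Pr[i \leftarrow_\ell j \mid j \notin \cJ_{<\ell}] \;=\; \Pr[i \leftarrow_\ell j \wedge B^\ell_{ij} = 1 \mid j \notin \cJ_{<\ell}].
\]
Next I would factor this using the chain rule:
\[
\Pr[i \leftarrow_\ell j \wedge B^\ell_{ij} \mid j \notin \cJ_{<\ell}] = \Pr[B^\ell_{ij} \mid j \notin \cJ_{<\ell}] \cdot \Pr[i \leftarrow_\ell j \mid B^\ell_{ij}, j \notin \cJ_{<\ell}].
\]
The second factor is exactly $1 - 1/e$ by Lemma~\ref{lem:single}, so it only remains to compute the first factor.

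For the first factor, I would appeal to Observation~\ref{obs:indep-rounds}: conditioned on $j \notin \cJ_{<\ell}$, the distribution of the RVs in iteration $\ell$ that determine $B^\ell_{ij}$ is identical to the distribution of the corresponding RVs in the first iteration of a fresh run of the algorithm on $\cJ \setminus \cJ_{<\ell}$ (with the groups restricted accordingly). In that fresh run, by Step~2 of Figure~\ref{tabel1}, for the unique group $G \in \cG_i$ containing $j$, we have $\Pr[B_G = j] = x_{ij}$ by definition, so
\[
\Pr[B^\ell_{ij} \mid j \notin \cJ_{<\ell}] = x_{ij}.
\]
Combining the two factors gives $\Pr[i \leftarrow_\ell j \mid j \notin \cJ_{<\ell}] = (1 - 1/e)\, x_{ij}$, as desired.

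The only subtle point is verifying that the marginal $\Pr[B^\ell_{ij} = 1 \mid j \notin \cJ_{<\ell}]$ is still $x_{ij}$ after groups have been shrunk in earlier iterations; this is where Observation~\ref{obs:indep-rounds} is essential, since the probability that a group $G$ recommends $j$ depends only on $x_{ij}$ and not on the other members of $G$. Everything else is a direct bookkeeping application of Lemma~\ref{lem:single}.
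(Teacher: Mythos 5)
Your proposal is correct and follows essentially the same route as the paper: the paper computes $\Pr[i \leftarrow_1 j] = \Pr[i \leftarrow_1 j \wedge B_{ij}] = \Pr[i \leftarrow_1 j \mid B_{ij}]\cdot\Pr[B_{ij}] = (1-1/e)x_{ij}$ using the containment $i \leftarrow_1 j \subseteq B_{ij}$ and Lemma~\ref{lem:single}, then lifts to general $\ell$ via Observation~\ref{obs:indep-rounds}, exactly as you do. Your extra remark about the marginal of $B^\ell_{ij}$ is fine (indeed Step~2 of the algorithm sets $\Pr[B_G = j] = x_{ij}$ in every iteration by definition), so nothing is missing.
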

\begin{proof}
	$\Pr[i \leftarrow_1 j] = \Pr[i \leftarrow_1 j \wedge B_{ij}] = \Pr[i \leftarrow_1 j \; | \; B_{ij}] \cdot \Pr[ B_{ij} ]  = (1 - 1/e) x_{ij}$. Then, we use Observation~\ref{obs:indep-rounds}.
\end{proof}

This also implies that job $j$ is assigned in the 1-st iteration with probability exactly $1 - 1/e$. 

\begin{corollary}
	\label{cor:second-2}
	For any job $j \in \cJ$ and iteration $\ell \geq 1$, we have $\Pr[nil \leftarrow_\ell j \; | \; j \notin \cJ_{< \ell}] = 1/e$. 
\end{corollary}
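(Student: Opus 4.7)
The plan is to reduce Corollary~\ref{cor:second-2} directly to Corollary~\ref{cor:second-1} by a one-line application of the law of total probability. Conditioned on $j \notin \cJ_{<\ell}$, exactly one of the following mutually exclusive events occurs in the $\ell$-th iteration: either $nil \leftarrow_\ell j$, or $i \leftarrow_\ell j$ for some unique $i \in \cM$ (uniqueness follows from step 4 of the rounding description in Figure~\ref{tabel1}, which picks at most one ticket across all machines).

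Therefore I would write
\begin{align*}
\Pr[nil \leftarrow_\ell j \mid j \notin \cJ_{<\ell}]
&= 1 - \sum_{i \in \cM} \Pr[i \leftarrow_\ell j \mid j \notin \cJ_{<\ell}] \\
&= 1 - \sum_{i \in \cM} (1 - 1/e)\, x_{ij} \\
&= 1 - (1 - 1/e) \sum_{i \in \cM} x_{ij} \\
&= 1 - (1 - 1/e) = 1/e,
\end{align*}
where the second equality is Corollary~\ref{cor:second-1} and the third uses the hypothesis $\sum_{i \in \cM} x_{ij} = 1$ from the statement of Theorem~\ref{thm:rounding-rephrase} (note that although groups are updated across iterations by removing assigned jobs, the marginals $x_{ij}$ themselves are fixed inputs and the sum to one is preserved).

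There is no real obstacle here; the only thing to double-check is that the events $\{i \leftarrow_\ell j\}_{i \in \cM}$ are disjoint, which is immediate from the algorithm since a single ticket (if any) is drawn for $j$ in step 4, and that ticket lies on a unique machine. Thus the corollary reduces to a trivial computation once Corollary~\ref{cor:second-1} is in hand.
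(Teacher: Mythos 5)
Your proposal is correct and is essentially identical to the paper's own proof, which likewise subtracts $\sum_{i \in \cM}(1-1/e)x_{ij}$ from $1$ using Corollary~\ref{cor:second-1} and the fact that a job is assigned to at most one machine per iteration. Your extra care in justifying disjointness of the events $\{i \leftarrow_\ell j\}_{i}$ via step 4 of the algorithm is a fine (if brief) elaboration of the same argument.
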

\begin{proof}
	By the above corollary and the fact that a job can be assigned to at most one machine in each iteration, the probability is 1 - $\sum_{i \in \cM} (1 - 1/e)x_{ij} = 1/e$. 	
\end{proof}

We are now ready to complete the proof of the second property. 
\begin{align*}	
	&\Pr[ i \leftarrow j]  =
\sum_{\ell \geq 1} \Pr[ i \leftarrow_\ell j]  &\\
= &\sum_{\ell \geq 1} \Pr[ i \leftarrow_\ell j \; | \; j \notin \cJ_{< \ell} ] \cdot \prod_{\ell' < \ell}  \Pr[nil \leftarrow_{\ell'} j \; | \; j \notin \cJ_{< \ell'}] &\\
 = &\sum_{\ell \geq 1}  (1 - 1/e) x_{ij} (1 /e^{\ell-1}) \\
= &x_{ij}&
\end{align*}

To complete the proof of the first property, we observe that 
$\Pr[ j \notin \cJ_{\leq \ell}] = \prod_{\ell' \leq \ell}  \Pr[nil \leftarrow_{\ell'} j \; | \; j \notin \cJ_{< \ell'}] = (1 / e)^\ell$ by Corollary~\ref{cor:second-2}. Thus, using the linearity of expectation, we know that the expected number of jobs remaining unassigned after $2 \log n$ iterations, i.e., $\E |\cJ \setminus \cJ_{\leq 2 \log n}| = n (1/e)^{2 \log n} = 1/ n$. Thus, by Markov inequality, the probability that the randomized rounding does not terminate within $2 \log n$ iterations, meaning that there is at least one job unassigned, is at most $1/n$. Also using a simple union bound we know $\Pr[ \cJ \neq \cJ_{\leq \ell}] \leq \min\{1, n  (1/e)^\ell\}$. Thus, the expected number of iterations before the termination is $\sum_{\ell \geq 1} \min\{1, n  (1/e)^\ell\} = O(\log n)$.  It is an easy exercise to see each iteration takes $O(mn)$ time assuming that we can determine the value of each RV in $O(1)$ time. As discussed already, it is clear that each job $j$ is assigned to exactly one machine if the algorithm terminates. 
Thus, we have shown the first property and running time.


\subsection{Third Property}  

We restate the third property we aim to prove: Fix a machine $i \in \cM$ and two distinct jobs $j, j' \in \cJ$ from \emph{different} groups of machine $i$, i.e., $j \in A$ and $j' \in B$ for some $A \neq B \in \cG_i$. Then, we have $Pr[ i \leftarrow j \wedge i \leftarrow j'] \leq  x_{ij} x_{ij'}$. For notational convenience, let $a = x_{ij}$ and $b = x_{ij'}$, and $\tilde M^* = \tilde N_{ij}$, $\tilde N^* = \tilde N_{ij'}$. Assume wlog that $a, b \in (0, 1)$ since otherwise the third property immediately follows from the second. 

Here is a very high-level overview of the proof. For the sake of contradiction assume that $Pr[ i \leftarrow j \wedge i \leftarrow j'] > ab$. Then, we will show that the probability remains greater than $ab$ after de-grouping the two jobs $j$ and $j'$ on all machines. This is a contradiction because 
two events $i \leftarrow j$ and $i \leftarrow j'$ are independent after de-grouping, which would immediately imply $Pr[ i \leftarrow j \wedge i \leftarrow j'] = ab$ due to the second property. 

However, the actual analysis is quite involved. To help the reader keep the flow of the analysis, we outline the proof in detail.

\subsubsection{Proof Outline}

 For the sake of analysis, we need to define additional notation---then, we can give a more detailed proof overview and explain the technical challenges. Let $M_{-} := \sum_{i'  \in \cM \setminus \{i\}} N_{i'j}$ denote the total number of \emph{real} tickets generated for job $j$ on machines other than $i$. Similarly, let $N_{-} := \sum_{i'  \in \cM \setminus \{i\}} N_{i'j'}$ for job $j'$.
Define $P_\cG(m, n) := Pr_\cG[ M_- = m, N_- = n]$; here $\cG$ in the subscript is to emphasize that this probability is under grouping $\cG$. Note that in this section we \emph{override} $m$ and $n$, which were used to denote the number of machines and jobs respectively in other sections. For comparison, create another grouping $\cG'$ by separating the two jobs $j$ and $j'$ in the same group on every machine. That is, for any machine $i'$ where $j, j' \in G$ for some $G \in \cG_{i'}$, partition the group $G$ arbitrarily into two groups $G_{ij}$ and $G_{ij'}$, so that $j \in G_{ij}$ and $j' \in G_{ij'}$. Let $P_{\cG'}(m, n)$ denote $\Pr_{\cG'}[M_- = m, N_- = n]$ 
under this grouping. 

Our goal is to show that 
\begin{equation}
	\label{eqn:third-outline}
\mbox{$\Pr_{\cG'}[  i \leftarrow j \wedge i\leftarrow j'] > ab \textnormal{ if  }\Pr_\cG[  i \leftarrow j \wedge i\leftarrow j'] > ab$}
\end{equation}
Unfortunately, proving this directly seems very challenging. Let's see why. First, our analysis has very little room for loss since we need to show negative correlation for ever pair of jobs not grouped together on the fixed machine. For careful analysis, we will have to take a close look at probabilities fixing some random variables. Then, when we expand $\Pr_\cG[  i \leftarrow j \wedge i\leftarrow j']$ depending on the iterations in which the jobs are assigned, we run across the recursive structure for the case $j, j' \notin J_{\leq 1}$. It seems very challenging to compare the above two probabilities in Eqn. (\ref{eqn:third-outline}) with this recursive structure combined with certain subtle conditions.

We get around this difficulty by first showing that  Eqn. (\ref{eqn:third-outline}) \emph{pretending} that after the first iteration, each unassigned job is assigned \emph{independently} in the second iteration. This thought process will ensure that the rounding terminates in two iterations, thus having no recursive structure -- let's  call this rounding as \emph{shadow} rounding. To relate the shadow rounding to the actual rounding, we will define $\Phi(P, \kappa, \tilde m^*, \tilde n^*)$. For notational convenience, let $\zeta := \Pr_{\cG}[  i \leftarrow j \wedge i\leftarrow j']  / (ab)$ and $\zeta' := \Pr_{\cG'}[  i \leftarrow j \wedge i\leftarrow j']  / (ab)$. Then, $\Phi$ will have the following nice properties.

\begin{enumerate}[label=(\alph*)]
	\item $\Phi(P_{\cG}, \zeta, \tilde m^*, \tilde n^*) = \Pr_\cG[  i \leftarrow j \wedge i\leftarrow j' \; | \; \tilde M^* = \tilde m^*, \tilde N^* = \tilde n^*]$ for our actual rounding.
	\item $\Phi(P_{\cG'}, \zeta', \tilde m^*, \tilde n^*) = \Pr_\cG[  i \leftarrow j \wedge i\leftarrow j' \; | \; \tilde M^* = \tilde m^*, \tilde N^* = \tilde n^*]$ for our actual rounding.
	\item $\Phi(P_{\cG}, 1, \tilde m^*, \tilde n^*) = \Pr_\cG[  i \leftarrow j \wedge i\leftarrow j' \; | \; \tilde M^* = \tilde m^*, \tilde N^* = \tilde n^*]$ for the shadow rounding.
	\item $\Phi(P_{\cG'}, 1, \tilde m^*, \tilde n^*) = \Pr_\cG[  i \leftarrow j \wedge i\leftarrow j' \; | \; \tilde M^* = \tilde m^*, \tilde N^* = \tilde n^*]$ for the shadow rounding.
\end{enumerate}

Depending on whether $\tilde m^* > 0$ or not and whether $\tilde n^* > 0$ or not, $\Phi$ is slightly different---so we will consider four cases, which are each presented in Sections~\ref{sec:third-pp}, \ref{sec:third-p0}, \ref{sec:third-0p} and \ref{sec:third-00}. We will show the above (a) and (b) for our actual rounding in Lemmas~\ref{lem:third-pp},~\ref{lem:third-p0},~\ref{lem:third-0p} and~\ref{lem:third-00}. We do not show (c) and (d) for the shadow rounding because our proof only uses the algebraic difference between $\Phi(P_{\cG}, \zeta, \tilde m^*, \tilde n^*)$ and  $\Phi(P_{\cG}, 1, \tilde m^*, \tilde n^*)$ and that between $\Phi(P_{\cG'}, \zeta', \tilde m^*, \tilde n^*)$ and $\Phi(P_{\cG'}, 1, \tilde m^*, \tilde n^*)$. Still, we mention (c) and (d) above  as we believe they could give more intuitions.

Then, we will show that 
$$	\mbox{$\Phi(P_{\cG'}, 1, \tilde m^*, \tilde n^*) \geq \Phi(P_{\cG}, 1, \tilde m^*, \tilde n^*)$}$$
for all $m^*, n^* \geq 0$ in Lemmas~\ref{lem:third-pp-mono},~\ref{lem:third-p0-mono},~\ref{lem:third-0p-mono} and~\ref{lem:third-00-mono}.
By de-conditioning on $\tilde M^*$ and $\tilde N^*$ and using the aforementioned algebraic difference, we will be able to show 
	Eqn. (\ref{eqn:third-outline}).
We will present the details on how to put all the pieces together in Section~\ref{sec:third-together}. This will complete the proof of the third property.

Due to the space constraints, we defer to the full version of this paper the proof of each lemma.

\subsubsection{Case $\tilde M^*, \tilde N^* > 0$}
	\label{sec:third-pp}

Recall $M_{-} := \sum_{i'  \in \cM \setminus \{i\}} N_{i'j}$ denotes the total number of \emph{real} tickets generated for job $j$ on machines other than $i$ and  $N_{-}$ is similarly defined for job $j'$. Also recall $P(m, n) := Pr[ M_- = m, N_- = n]$. 
\begin{lemma}
	\label{lem:third-pp}
For every $\tilde m^*, \tilde n^* \geq 1$, define:
\begin{align}
\Phi(P, \kappa, \tilde m^*, \tilde n^*) := & ab \hspace{-1.5ex} \sum_{m \geq 0, n \geq 0} \frac{\tilde m^*}{m + \tilde m^*} \cdot \frac{\tilde n^*}{n+ \tilde n^*} \cdot P(m, n) \label{eqn:3-1}\\
	&+  ab  ~ \sum_{m \geq 0} \frac{\tilde m^*}{m + \tilde m^*} \cdot (1 - b) \cdot P(m, 0)   \label{eqn:3-2} \\	
	&+  ab  ~ \sum_{n \geq 0}  \frac{\tilde n^*}{n+ \tilde n^*} \cdot (1 - a) \cdot P(0, n)  \label{eqn:3-3} \\		
	&+  ab  ~ (1 -a) ( 1 - b) \kappa \cdot P(0, 0) \label{eqn:3-4}
\end{align}
Then, we have 
\begin{itemize}
\item $\Phi(P_{\cG}, \zeta, \tilde m^*, \tilde n^*) = \Pr_\cG[ i \leftarrow j \wedge i\leftarrow j' \;| \; \tilde M^* = \tilde m^*, \tilde N^* = \tilde n^*]$; and
\item $\Phi(P_{\cG'}, \zeta', \tilde m^*, \tilde n^*) = \Pr_{\cG'}[ i \leftarrow j \wedge i\leftarrow j' \;| \; \tilde M^* = \tilde m^*, \tilde N^* = \tilde n^*]$.
\end{itemize}
\end{lemma}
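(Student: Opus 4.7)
The plan is to partition the event $\{i \leftarrow j\} \cap \{i \leftarrow j'\}$ according to what happens in the first iteration into four disjoint cases that will match the four summands (\ref{eqn:3-1})--(\ref{eqn:3-4}) one-for-one, and then to note that the same derivation applies verbatim under either grouping (only the joint law $P$ of $(M_-, N_-)$ and the recursive constant $\zeta$ or $\zeta'$ change). The four cases, conditional on $\tilde M^* = \tilde m^* > 0$ and $\tilde N^* = \tilde n^* > 0$, are: (A) both $j$ and $j'$ assigned to $i$ in iteration $1$; (B) $j \leftarrow_1 i$ but $j'$ is unassigned in iteration $1$ and later lands on $i$; (C) the symmetric analogue of (B); (D) both $j, j'$ are unassigned in iteration $1$ and both eventually land on $i$.

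Two building blocks drive the computation. First, conditional on $\tilde M^* = \tilde m^*$ and $M_- = m$, we have $\Pr[j \leftarrow_1 i \mid \tilde m^*, m] = a \cdot \tilde m^* / (\tilde m^* + m)$, since $j$ has $\tilde m^*$ real tickets on $i$ iff $B_{ij} = 1$ (probability $a$, independent of $B_{ij'}$ because $A \ne B$), and the uniform ticket draw then hits machine $i$ with probability $\tilde m^* / (\tilde m^* + m)$. Likewise, since $\tilde n^* > 0$, the event ``$j'$ unassigned in iteration $1$'' is equivalent to $B_{ij'} = 0$ and $N_- = 0$, contributing the factor $(1 - b) \cdot \mathbf{1}[n = 0]$. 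Second, given $(M_-, N_-)$ the iteration-$1$ outcomes for $j$ and $j'$ are conditionally independent, because they are determined by disjoint collections of mutually independent RVs ($B_{ij}$ and the ticket draw for $j$ versus $B_{ij'}$ and the ticket draw for $j'$). Summing the product of these factors against $P_\cG(m, n)$ produces (\ref{eqn:3-1}) for Case (A). Cases (B) and (C) additionally invoke Observation~\ref{obs:indep-rounds} applied to $\cJ' = \{j'\}$ (resp.\ $\{j\}$) together with the already-proved second property to conclude $\Pr[i \leftarrow_{\geq 2} j' \mid j' \notin \cJ_{\leq 1}] = b$ (resp.\ $a$); multiplying by the iteration-$1$ contribution yields (\ref{eqn:3-2}) and (\ref{eqn:3-3}). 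Case (D) uses Observation~\ref{obs:indep-rounds} with $\cJ' = \{j, j'\}$ to obtain $\Pr[i \leftarrow j \wedge i \leftarrow j' \mid j, j' \notin \cJ_{\leq 1}] = ab\,\zeta$; multiplying by the iteration-$1$ probability $(1 - a)(1 - b)\, P_\cG(0, 0)$ recovers (\ref{eqn:3-4}).

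The main subtlety is to verify that the extra conditioning on $\tilde M^*, \tilde N^*, M_-, N_-$ and on the iteration-$1$ event for the other job does not disturb the fresh-start identities used for Cases (B)--(D). This holds because iteration-$2$+ uses random variables renewed independently of those of iteration $1$, so Observation~\ref{obs:indep-rounds} supplies the required distributional equality for every further conditioning that is measurable with respect to iteration-$1$ RVs only. Finally, the identical four-case computation under grouping $\cG'$ yields $\Phi(P_{\cG'}, \zeta', \tilde m^*, \tilde n^*)$, since only $P$ and $\zeta$ are replaced while the algebraic skeleton of the decomposition is unchanged.
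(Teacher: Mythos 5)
Your proposal is correct and follows essentially the same route as the paper: the same four-way decomposition by first-iteration outcomes, the same use of the independence of $B_{ij}$ and $B_{ij'}$ (since $j,j'$ lie in different groups on machine $i$) to factor the iteration-1 joint probability, and the same appeal to Observation~\ref{obs:indep-rounds} plus the second property to handle the later iterations, yielding the factors $b$, $a$, and $ab\zeta$ in the last three summands.
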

\begin{proof}
	Note that Eqn. (\ref{eqn:3-1}), (\ref{eqn:3-2}), (\ref{eqn:3-3}) have no dependency on  $\kappa$. 
 	Eqn. (\ref{eqn:3-1}) is for event $(i \leftarrow_1 j \wedge i \leftarrow_1 j')$ since
\begin{align*}
	& \Pr[i \leftarrow_1 j \wedge i \leftarrow_1 j' \; | \; \tilde M^* = \tilde m^*, \tilde N^* = \tilde n^*, M_- = m , N_- = n]  \\
	= &\frac{\tilde m^*}{m + \tilde m^*} \cdot \Pr[B_{ij}] \cdot \frac{\tilde n^*}{n+ \tilde n^*} \cdot \Pr[B_{ij'}]\\  
	= & ab ~ \frac{\tilde m^*}{m + \tilde m^*} \cdot \frac{\tilde n^*}{n+ \tilde n^*}.
\end{align*}
Here we used the fact that $B_{ij}$ and $B_{ij'}$ are independent as $j$ and $j'$ are not in the same group on machine $i$. By de-conditioning on $M_-$ and $N_-$ using the fact that $(\tilde M^*,  \tilde N^*)$ is independent of $(M_-, N_-)$, we have Eqn. (\ref{eqn:3-1}).

	 	Eqn. (\ref{eqn:3-2}) is for event $(i \leftarrow_1 j \wedge i \leftarrow_{\geq 2} j')$. Note that this event occurs only 	
	when $\neg B_{ij'}$ (since $\tilde N^*$ is fixed to a non-zero value) and $m = 0$.  Formally, we have, 
\begin{align*}
	&\;\;\; \Pr[i \leftarrow_1 j \wedge nil \leftarrow_1 j' \; | \; \tilde M^* = m^*, \tilde N^* = \tilde n^*, M_- = m, N_- = n = 0] \\
	&= \frac{\tilde m^*}{m + \tilde m^*} \cdot \Pr[B_{ij}] \cdot \Pr[ \neg B_{ij'}]   
	=  \frac{\tilde m^*}{m + \tilde m^*} \cdot a(1-b)
\end{align*}
 By de-conditioning on $M_-$ and $N_-$, we have
\begin{align*}
	&\;\;\; \Pr[i \leftarrow_1 j \wedge nil \leftarrow_1 j' \; | \; \tilde M^* = \tilde m^*, \tilde N^* = \tilde n^*]  
=  \sum_{m \geq 0} \frac{\tilde m^*}{m + \tilde m^*} \cdot a(1-b)\cdot P(m, 0)
\end{align*}

Due to the second property and Observation~\ref{obs:indep-rounds}, we have  Eqn. (\ref{eqn:3-2}).

Since Eqn. (\ref{eqn:3-3}) is for event $(i \leftarrow_{\geq 2} j \wedge i \leftarrow_{1} j')$, which is symmetric to the event of Eqn. (\ref{eqn:3-2}), it remains to show Eqn. (\ref{eqn:3-4}). We will only show Eqn. (\ref{eqn:3-4}) with  $\kappa = \zeta$ for grouping $\cG$, which is for  event $(i \leftarrow_{> 1} j' \wedge i \leftarrow_{ > 1} j')$, since we can similarly show Eqn. (\ref{eqn:3-4}) with  $\kappa = \zeta'$ for grouping $\cG'$.
Note that this event occurs if and only if $\neg B_{ij}$, $\neg B_{ij'}$, $m = n= 0$ and $j$ and $j'$ are assigned to $i$ in the subsequent iterations. 
Thus, by Observation~\ref{obs:indep-rounds}, the event occurs with probability $P(0, 0) \cdot \Pr[\neg B_{ij} \wedge \neg B_{ij'}] \cdot \Pr[i \leftarrow j \wedge i \leftarrow j'] = (1 - a) (1 - b) P(0, 0) \Pr[i \leftarrow j \wedge i \leftarrow j'] =  (1 -a) (1- b) P(0, 0) \zeta ab$, as desired. 
\end{proof}

In the following we re-write $\Phi$, so that we have the  co-efficient of each $P(m, n)$ explicitly.

\begin{lemma}
For any $\tilde m^*, \tilde n^* \geq 1$, we have 	$\Phi(P, \kappa, \tilde m^*, \tilde n^*) = \sum_{m, n \geq 0} d(m, n) P(m, n)$, where \\
$\displaystyle
d(m, n) := 
\begin{cases} 
ab \frac{\tilde m^*}{m + \tilde  m^*} \cdot \frac{\tilde n^*}{n+ \tilde n^*} &\mbox{if } m, n > 0\\ 
ab \frac{\tilde m^*}{m + \tilde m^*} (2 - b) &\mbox{if } m >0, n = 0\\ 
ab \frac{\tilde n^*}{n + \tilde n^*} (2- a) &\mbox{if } m = 0, n > 0\\  
ab (2  - a) (2   - b) + (\kappa -1) (1 -a) (1 - b) &\mbox{if } m =  n = 0;  
\end{cases}
$
\end{lemma}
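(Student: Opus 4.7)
The plan is pure bookkeeping: collect the coefficient of $P(m,n)$ across the four sums in the definition of $\Phi$ from Lemma~\ref{lem:third-pp}, then case-analyze according to whether $m$ and $n$ vanish. The key observation enabling the case analysis is the boundary identity $\frac{\tilde n^*}{0+\tilde n^*}=1$ (and its counterpart $\frac{\tilde m^*}{0+\tilde m^*}=1$), which makes the first sum degenerate cleanly on the axes and causes it to overlap with the second and third sums precisely at $n=0$ and $m=0$ respectively.

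First I would handle the interior case $m, n > 0$. Here only Eqn.~(\ref{eqn:3-1}) contributes, because the sums in Eqns.~(\ref{eqn:3-2})--(\ref{eqn:3-4}) are each supported on a set where $m=0$ or $n=0$. This immediately yields the first branch $d(m,n) = ab\,\frac{\tilde m^*}{m+\tilde m^*}\cdot\frac{\tilde n^*}{n+\tilde n^*}$.

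Next, for the half-axis case $m>0$, $n=0$, I would add the contribution of Eqn.~(\ref{eqn:3-1}) evaluated at $n=0$, which is $ab\,\frac{\tilde m^*}{m+\tilde m^*}$, to the contribution of Eqn.~(\ref{eqn:3-2}), which is $ab\,\frac{\tilde m^*}{m+\tilde m^*}(1-b)$. Combining gives the factor $1 + (1-b) = 2-b$, matching the second branch of $d$. The case $m=0$, $n>0$ is entirely symmetric, using Eqns.~(\ref{eqn:3-1}) and (\ref{eqn:3-3}).

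Finally, for the corner $m=n=0$, all four sums contribute, and this is the one step where the algebra is slightly more than trivial. I would isolate the $\kappa$-dependence by writing the contribution of Eqn.~(\ref{eqn:3-4}) as $ab(1-a)(1-b)\kappa = ab(1-a)(1-b) + ab(\kappa-1)(1-a)(1-b)$, and then add the $\kappa$-independent portion to the remaining three contributions $ab$, $ab(1-b)$, $ab(1-a)$, obtaining a common factor of $ab$ times $1 + (1-a) + (1-b) + (1-a)(1-b)$. This factors as $[1+(1-a)][1+(1-b)] = (2-a)(2-b)$, which gives the claimed closed form. The only subtlety to watch is not to double-count the axes when splitting the first sum at the boundary, and there is no deeper obstacle since the lemma is simply a reindexing of Lemma~\ref{lem:third-pp}.
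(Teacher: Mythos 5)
Your proof is correct and is essentially the same coefficient-collection argument as the paper's, which computes $\Phi/(ab)$ by splitting the first sum at the axes ($\frac{\tilde n^*}{0+\tilde n^*}=1$), writing $\kappa = 1+(\kappa-1)$, and regrouping via $1+(1-a)+(1-b)+(1-a)(1-b)=(2-a)(2-b)$. One remark: your corner value $ab\left[(2-a)(2-b)+(\kappa-1)(1-a)(1-b)\right]$ agrees with the paper's own computation of $\Phi/(ab)$ and with the later use in Eqn.~(\ref{eqn:third-together-1}); the displayed $d(0,0)$ in the lemma statement, read literally, omits the factor $ab$ on the $(\kappa-1)(1-a)(1-b)$ term, which is a typo rather than a discrepancy in your derivation.
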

\begin{proof}
\begin{align*}
	& \frac{\Phi(P, \kappa, \tilde m^*, \tilde n^*)}{ab} \\
	= &  \sum_{m > 0, n > 0} \frac{\tilde m^*}{m + \tilde m^*} \cdot \frac{\tilde n^*}{n+ \tilde n^*} \cdot P(m, n) +  \sum_{m > 0} \frac{\tilde m^*}{m + \tilde m^*}  \cdot P(m, 0)  +  \sum_{ n > 0}  \frac{\tilde n^*}{n+ \tilde n^*} \cdot P(0, n) +  P(0, 0)\\
& +  \sum_{m > 0} \frac{\tilde m^*}{m + \tilde m^*}  \cdot P(m, 0)  (1 - b) +     P(0, 0)  (1 - b)  \\	
 & +  \sum_{n > 0}  \frac{\tilde  n^*}{n+ \tilde  n^*} \cdot P(0, n) (1 - a) +       P(0, 0) (1 - a)  \\		
	 & +   P(0, 0) (1 -a) (1 - b) + P(0, 0) (\kappa -1) (1 -a) (1 - b) \\
	= &  \sum_{m > 0, n > 0} \frac{\tilde m^*}{m + \tilde m^*} \cdot \frac{\tilde n^*}{n+ \tilde n^*} \cdot P(m, n) \\
	  &+ \sum_{m > 0} \frac{\tilde m^*}{m + \tilde m^*}  \cdot P(m, 0)  (2- b)  
	  +  \sum_{n > 0}  \frac{\tilde n^*}{n+ \tilde n^*} \cdot P(0, n) (2 - a)  \\
	  &+  P(0, 0) ( (2- a)   (2- b) +(\kappa -1) (1 -a) (1 - b) ) 
\end{align*}
\end{proof}

\begin{lemma}
	\label{lem:third-pp-mono}
	For any $\tilde m^* , \tilde n^* \geq 1$, $\Phi(P_\cG, 1, \tilde m^*,  \tilde n^*) \leq \Phi(P_{\cG'}, 1, \tilde m^*,  \tilde n^*)$.
\end{lemma}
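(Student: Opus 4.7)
The plan is to exploit a clean product factorization of the coefficient $d(m,n)$ and then reduce to a single-machine negative-correlation identity. First I would observe that
\[
d(m,n) \;=\; ab \cdot F(m)\, G(n),
\qquad
F(m) := \frac{\tilde m^*}{m + \tilde m^*} + (1-a)\,\mathbf{1}[m=0],
\quad
G(n) := \frac{\tilde n^*}{n + \tilde n^*} + (1-b)\,\mathbf{1}[n=0],
\]
which matches the four cases of the previous lemma (in particular $F(0)G(0)=(2-a)(2-b)$). Both $F$ and $G$ are non-increasing on $\Z_{\ge 0}$: e.g.\ $F(0)=2-a\ge 1 \ge \tilde m^*/(1+\tilde m^*)=F(1)$, and $F$ is strictly decreasing on $\{1,2,\ldots\}$. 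Consequently $\Phi(P,1,\tilde m^*,\tilde n^*) = ab\,\E_P[F(M_-)G(N_-)]$, and the lemma becomes $\E_\cG[F(M_-)G(N_-)] \le \E_{\cG'}[F(M_-)G(N_-)]$.

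Next I would restrict attention to the machines where $\cG$ and $\cG'$ actually differ. Let $S\subseteq\cM\setminus\{i\}$ be the set of machines on which $j$ and $j'$ lie in a common group under $\cG$; on every other machine $i'$, $B_{i'j}$ and $B_{i'j'}$ are independent under both groupings, so $(N_{i'j},N_{i'j'})$ has the same joint law. Let $M_R,N_R$ be the contributions to $M_-,N_-$ from $\cM\setminus(S\cup\{i\})$; these are independent of the contributions from $S$ and identically distributed under $\cG$ and $\cG'$. Conditioning on $(M_R,N_R)$ and absorbing the shift into non-increasing functions $F_R(m):=F(m+M_R)$, $G_R(n):=G(n+N_R)$, it suffices to prove, for arbitrary non-increasing $F^*,G^*\colon\Z_{\ge 0}\to\R$,
\[
\E_\cG\bigl[F^*(M_S)G^*(N_S)\bigr] \;\le\; \E_{\cG'}\bigl[F^*(M_S)G^*(N_S)\bigr],
\qquad M_S:=\sum_{i'\in S} N_{i'j},\ N_S:=\sum_{i'\in S} N_{i'j'}.
\]

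The core computation is a single-machine identity. For $i'\in S$, write $\alpha=x_{i'j}$, $\beta=x_{i'j'}$, $\phi_0=F^*(0)$, $\phi=\E[F^*(\tilde N_{i'j})]$, $\gamma_0=G^*(0)$, $\gamma=\E[G^*(\tilde N_{i'j'})]$. Under $\cG$, $B_G\in\{j,j',\mathrm{nil}\}$ with probabilities $\alpha,\beta,1-\alpha-\beta$, so
\[
\E_\cG\bigl[F^*(N_{i'j})G^*(N_{i'j'})\bigr] = \alpha\phi\gamma_0 + \beta\phi_0\gamma + (1-\alpha-\beta)\phi_0\gamma_0,
\]
whereas under $\cG'$ the independence of $N_{i'j}$ and $N_{i'j'}$ yields $(\alpha\phi+(1-\alpha)\phi_0)(\beta\gamma+(1-\beta)\gamma_0)$. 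A short expansion gives
\[
\E_{\cG'}\bigl[F^*(N_{i'j})G^*(N_{i'j'})\bigr] - \E_\cG\bigl[F^*(N_{i'j})G^*(N_{i'j'})\bigr] = \alpha\beta(\phi_0-\phi)(\gamma_0-\gamma) \;\ge\; 0,
\]
since $F^*,G^*$ non-increasing and $\tilde N_{i'j},\tilde N_{i'j'}\ge 0$ give $\phi\le\phi_0$ and $\gamma\le\gamma_0$.

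Finally I would propagate the single-machine bound to the sum via a hybrid argument over $S=\{i_1,\ldots,i_k\}$. Define $\cG^{(t)}$ to use $\cG'$-splitting on $i_1,\ldots,i_t$ and $\cG$-grouping on the remaining machines of $S$, so $\cG^{(0)}=\cG$ and $\cG^{(k)}=\cG'$. To bound $\E_{\cG^{(t-1)}}\le\E_{\cG^{(t)}}$, I would condition on the contributions from $S\setminus\{i_t\}$ (which share a joint law under the two hybrids and are independent of machine $i_t$'s randomness), absorb them into the non-increasing test functions, and invoke the single-machine inequality; telescoping over $t$ finishes the proof. The principal obstacle is discovering the factorization $d(m,n)=abF(m)G(n)$ with $F,G$ non-increasing; once it is in hand, the remainder is a standard negative-association-style hybrid, although some care is required to keep track of which marginals are preserved and which joint laws actually change between $\cG$ and $\cG'$.
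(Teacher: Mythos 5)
Your proposal is correct, and at its core it runs on the same engine as the paper's proof: reduce to a single machine on which $\cG$ and $\cG'$ differ (with the rest conditioned away, which is legitimate because the off-machine contributions are independent of that machine's randomness and identically distributed under both groupings), and then observe that the change in expectation is a product of two differences with a common sign. Where you genuinely diverge is in how that last inequality is packaged. The paper keeps the potential ticket counts $\tilde u = \tilde N_{1j}$, $\tilde v = \tilde N_{1j'}$ fixed, writes $P_{\cG'}(m,n)-P_{\cG}(m,n)$ as $a_1b_1$ times a second-order difference of $P_{-1}$, and after summation by parts is left to verify the discrete increasing-differences condition $d(m,n)-d(m+\tilde u,n)-d(m,n+\tilde v)+d(m+\tilde u,n+\tilde v)\ge 0$, which it does by a four-case computation. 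You instead notice the factorization $d(m,n)=ab\,F(m)G(n)$ with $F,G$ non-increasing (which I checked against all four cases of the coefficient lemma, including $F(0)G(0)=(2-a)(2-b)$ at $\kappa=1$), from which supermodularity of $d$ is automatic, and you average over $\tilde N_{i'j},\tilde N_{i'j'}$ inside $\phi,\gamma$ rather than conditioning on them; the identity $\E_{\cG'}-\E_{\cG}=\alpha\beta(\phi_0-\phi)(\gamma_0-\gamma)$ is exactly the paper's computation in disguise. What your version buys is the elimination of the case analysis and a cleaner statement of the reusable fact (monotone product test functions are negatively affected by grouping); what the paper's version buys is that it never needs to exhibit the factorization explicitly, so it would survive a coefficient function $d$ that is supermodular without being a product. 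One small point to keep straight in the write-up: after conditioning on the pair $(M_R,N_R)$ jointly, the test function does split as $F_R(m)G_R(n)$, so your hybrid step is sound, but this joint conditioning (rather than conditioning on $M_R$ and $N_R$ separately) is what makes it work, since $M_R$ and $N_R$ need not be independent of each other under $\cG$.
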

\begin{proof}
	 Fix $\tilde M^* = \tilde m^* , \tilde N^* =  \tilde n^*$ where $\tilde m^*, \tilde n^* \geq 1$. 
	For notational convenience, let machine 1 ($\neq i$) be the unique machine where $\cG$ and $\cG'$ differ, i.e., $\cG_{1} \neq \cG_{1}$ but $\cG_{i'} = \cG'_{i'}$ for all $i' \in \cM \setminus \{1\}$. So, $j$ and $j'$ are in the same group of $\cG_1$ but are in different groups of $\cG'_1$.
	 
	 Let $P_{-1}$ denote $P_\cG$ (or equivalently $P_{\cG'}$) without counting the tickets from machine 1. Formally, $P_{-1}:= \sum_{i'  \in \cM \setminus \{1, i\}} N_{i'j}$. We will compare the outcome of the randomized rounding under $\cG$ and $\cG'$.  Also fix $\tilde u := \tilde N_{1j}$ and $\tilde  v := \tilde N_{1j'}$ ---note that this is independent of any grouping. Let $a_1 : = x_{1j}$ and $b_1 := x_{1j'}$ for notational convenience.  Assume $a_1, b_1 \in (0,1)$ since otherwise $B_{1j}$ and $B_{1j'}$ will be independent under both $\cG$ and $\cG'$
	 
	  Note that 
  	$$P_{\cG}(m, n) := P_{-1}(m, n) (1 - a_1 - b_1) + P(m - \tilde u, n) a_1  + P(m, n - \tilde  v)  b_1 $$
\noindent because the $\tilde u$ ($\tilde  v$, resp.) potential tickets become real when $B_{1j}$ ($B_{1j'}$, resp.), which occur with probability $a_1$ ($b_1$, resp.). Here, $P(m, n) := 0$ if $m < 0$ or $n < 0$.  With another grouping $\cG'$, we have  
\begin{align*}
  P_{\cG'}(m, n) := & P_{-1}(m, n) (1 - a_1) (1 - b_1) \\&+ P_{-1}(m - \tilde  u, n) a_1 (1 - b_1) + P_{-1}(m, n - \tilde  v) (1 - a_1) b_1 \\&+ P_{-1}(m - \tilde  u, n - \tilde  v) a_1 b_1  
\end{align*}
	
\noindent because $B_{1j}$ and $B_{1j'}$ are independent under $\cG'$.

	Our goal is to show that $\Phi(P_\cG, 1, \tilde m^*,  \tilde n^*) \leq \Phi(P_{\cG'}, 1, \tilde m^*,  \tilde n^*)$ (for any fixed $\tilde M^*$, $\tilde N^*$, $\tilde  u$ and $\tilde  v$).  
	Using the observation that $P_{\cG'}(m, n) - P_{\cG}(m, n)  = a_1b_1 \Big(  P_{-1}(m, n)  - P_{-1}(m - \tilde u, n) - P_{-1}(m, n- \tilde v)  + P_{-1}(m- \tilde u, n- \tilde v) \Big)$, we derive,
	\begin{align*}
	&  \Phi(P_{\cG'}, 1, \tilde m^*,  \tilde n^*) - 	\Phi(P_\cG, 1, \tilde m^*,  \tilde n^*) \\
	= 	&\sum_{m, n \geq 0}  d(m, n) \Big( P_{\cG'}(m, n)   - P_{\cG}(m, n) \Big) \\
	= 	&a_1b_1 \sum_{m, n \geq 0} d(m, n) \Big(  P_{-1}(m, n)  - P_{-1}(m - \tilde u, n) - P_{-1}(m, n- \tilde v)+ P_{-1}(m- \tilde u, n- \tilde v) \Big) \\	
	= 	&a_1b_1 \sum_{m, n \geq 0} d(m, n) P_{-1}(m, n)  -  a_1b_1 \sum_{m, n \geq 0} d(m + \tilde u , n) P_{-1}(m, n) \\
	&\;\;\;-   a_1b_1  \sum_{m, n \geq 0} d(m  , n+ \tilde v) P_{-1}(m, n) + a_1b_1   \sum_{m, n \geq 0} d(m + \tilde u , n+ \tilde v) P_{-1}(m, n)   \\
	= 	&a_1b_1  \sum_{m, n \geq 0} P_{-1}(m, n) \Big( d(m, n) - d(m+\tilde  u, n)  - d(m, n+ \tilde v)+ d(m+\tilde u, n+ \tilde v) \Big) 
	\end{align*}
	
	Therefore, to establish the lemma, it suffices to show that 
	$$D := d(m, n) - d(m+ \tilde u, n)  - d(m, n+ \tilde v) + d(m+ \tilde u, n+ \tilde v) \geq 0 \textnormal{ for all } m, n, \tilde u, \tilde v \geq 0$$ 
Observe that the claim is immediate if $\tilde u = 0$ or $\tilde v = 0$, so assume that $\tilde u, \tilde v > 0$.  Let $f(m) := \frac{\tilde m^*}{m + \tilde m^*}$ and $g(n) := \frac{\tilde n^*}{n+ \tilde n^*}$, which are both decreasing in $m$ and $n$, respectively. The first case we consider is when  $m, n > 0$. Then, we have 
$$\frac{D}{ab} = f(m) g(n) - f(m +\tilde u) g(n) - f(m) g(n+\tilde  v) + f(m+\tilde  u) g(n+\tilde  v) = (f(m) - f(m+\tilde  u)) (g(n) - g(n+\tilde  v)) > 0.$$

In the second case when $m > 0$ and $n= 0$,  we have, 
\begin{align*}
D / (ab)	&= (2- b) ( f(m) - f(m +\tilde  u) )  - f(m) g(n+\tilde  v) + f(m+\tilde  u) g(n+\tilde  v) \\
	&= (2 - b) ( f(m) - f(m +\tilde  u) )  - g(n+\tilde  v) (f(m) - f(m+ \tilde  u)) \\
	&= (2- b - g(n+\tilde v)) ( f(m) - f(m +\tilde  u) ) \geq 0, 
\end{align*}
\noindent since $g(n +v) \leq 1$ and $f$ is decreasing in $m$. 
The case $m = 0$ and $n >0$ is symmetric. In the last case when $m = n = 0$, we have, 
\begin{align*}
D /(ab)	&\geq (2 - b)   (2 - a) - (2 - a) g(\tilde v)  - (2 - b) f(\tilde  u)  - f(\tilde  u) g(\tilde  v) \\
	&= (2- a - f(\tilde  u)) (2- b- g(\tilde v)) \geq 0, 	
\end{align*}
\noindent since $a, b, f(\tilde u), g(\tilde v) \leq 1$.
\end{proof}

\subsubsection{Case $\tilde M^* >0,  \tilde N^* = 0$} 
	\label{sec:third-p0}

Previously, we considered the case $\tilde M^*, \tilde N^* > 0$. Here, we consider the case $\tilde M  = m^*, \tilde N^*= 0$ for any fixed $m^* \geq 1$. 

\begin{lemma}
	\label{lem:third-p0}
For any $\tilde m^* > 0, \tilde n^* = 0$, define,
\begin{align*} 
\Phi(P, \kappa, \tilde m^*, \tilde n^* = 0) &: =  \sum_{m \geq 0}  d(m, 0)  P(m, 0) \textnormal{ where } 
\end{align*}
$$
d(m, n) = 
\begin{cases} 
ab \frac{\tilde m^*}{m + \tilde  m^*} &\mbox{if } m > 0, n = 0\\ 
ab  (2 - a) + ab (\kappa - 1)(1 - a) &\mbox{if } m =0, n = 0\\
0 &\mbox{if } n > 0
\end{cases}
$$
Then, we have 
\begin{itemize}
	\item $\Phi(P, \zeta, \tilde m^*, 0) = Pr_{\cG}[ i \leftarrow j \wedge i\leftarrow j' \;| \; \tilde M^* = \tilde m^*, \tilde N^* = 0]$; and 
	\item $\Phi(P, \zeta', \tilde m^*, 0) = Pr_{\cG'}[ i \leftarrow j \wedge i\leftarrow j' \;| \; \tilde M^* = \tilde m^*, \tilde N^* = 0]$.
\end{itemize}
\end{lemma}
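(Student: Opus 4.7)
The plan is to follow the template of Lemma~\ref{lem:third-pp}: fix $\tilde M^* = \tilde m^* \geq 1$ and $\tilde N^* = 0$, condition on $(M_-, N_-) = (m, n)$, compute the conditional probability in closed form, and then de-condition using the independence of $(\tilde M^*, \tilde N^*)$ from $(M_-, N_-)$. The key observation that collapses most of the case analysis is that $\tilde N^* = 0$ forces $N_{ij'} = B_{ij'}\tilde N^* = 0$, so $j'$ has no real tickets on machine $i$ in iteration $1$, and thus $i \leftarrow j'$ can only occur in iteration $\geq 2$, which in turn requires $j'$ to survive iteration $1$ unassigned.

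Next I would split into three cases on $(m, n)$. When $n > 0$, job $j'$ has strictly positive total real tickets, all on machines other than $i$, so $j'$ is placed on some $i' \neq i$ in iteration $1$ and the event $i \leftarrow j'$ is impossible; this matches $d(m, n) = 0$. When $n = 0$ and $m > 0$, job $j$ has at least $m$ real tickets in iteration $1$ and is certainly assigned then, so $i \leftarrow j$ is equivalent to $i \leftarrow_1 j$, which occurs with probability $\Pr[B_{ij}] \cdot \tilde m^*/(\tilde m^* + m) = a\tilde m^*/(\tilde m^* + m)$. Meanwhile $j'$ is deterministically unassigned after iteration $1$, and by Observation~\ref{obs:indep-rounds} applied to $\cJ' = \{j'\}$ together with property (2), $\Pr[i \leftarrow j' \mid j' \notin \cJ_{\leq 1}] = b$. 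Since this tail probability depends only on iteration-$2^+$ randomness, it is unaffected by further conditioning on the iteration-$1$ realization determining $i \leftarrow_1 j$, so multiplying gives $d(m, 0) = ab\tilde m^*/(\tilde m^* + m)$.

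The delicate case is $m = n = 0$, which I would handle by conditioning on $B_{ij}$. If $B_{ij}$ occurs (probability $a$), then all $\tilde m^*$ real tickets of $j$ sit on $i$, so $i \leftarrow_1 j$ holds with certainty, while $j'$ is still unassigned after iteration $1$ and reaches $i$ later with probability $b$ by the same Observation~\ref{obs:indep-rounds} argument on $\cJ' = \{j'\}$; the contribution is $a \cdot b = ab$. If $\neg B_{ij}$ occurs (probability $1 - a$), then $j$ has no real ticket at all and both $j, j'$ survive iteration $1$; Observation~\ref{obs:indep-rounds} applied to $\cJ' = \{j, j'\}$ gives that the tail is distributionally identical to the original rounding on the unchanged grouping $\cG$, so the joint tail probability is $\Pr_\cG[i \leftarrow j \wedge i \leftarrow j'] = \zeta \cdot ab$ by definition of $\zeta$, contributing $(1 - a)\zeta ab$. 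Summing these yields $ab + (1 - a)\zeta ab$, which rewrites as $ab(2 - a) + ab(\zeta - 1)(1 - a)$ and matches $d(0, 0)$ with $\kappa = \zeta$. The identical argument under $\cG'$ replaces $\zeta$ by $\zeta'$, giving the second bullet.

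The main obstacle I anticipate is cleanly separating the two distinct applications of Observation~\ref{obs:indep-rounds}: the instance with $\cJ' = \{j'\}$ collapses the tail probability to the marginal $b$ once $j$ has been placed, whereas the instance with $\cJ' = \{j, j'\}$ introduces the recursive factor $\zeta ab$ (resp.\ $\zeta' ab$) that carries the grouping dependence into $d(0, 0)$. One must also justify that these applications of Observation~\ref{obs:indep-rounds} remain valid under the extra iteration-$1$ conditioning, which follows because $j'$'s (and $\{j, j'\}$'s) tail is measurable with respect to iteration-$2^+$ random variables that are independent of the iteration-$1$ randomness being conditioned upon. Once this bookkeeping is in place, de-conditioning on $(M_-, N_-)$ is routine and gives $\Phi(P, \kappa, \tilde m^*, 0) = \sum_{m \geq 0} d(m, 0) P(m, 0)$, since the $n > 0$ contributions vanish.
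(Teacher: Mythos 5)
Your proof is correct and follows essentially the same route as the paper's: both decompose according to whether $j$ is assigned in iteration 1 (equivalently, on $B_{ij}$ and $M_-$ when $\tilde N^*=0$ forces $j'$ to survive iteration 1), invoke Observation~\ref{obs:indep-rounds} once with $\cJ'=\{j'\}$ to get the factor $b$ and once with $\cJ'=\{j,j'\}$ to get the recursive factor $\zeta ab$ (resp.\ $\zeta' ab$), and then de-condition on $(M_-,N_-)$. The only difference is cosmetic ordering of the conditioning; the resulting coefficients $d(m,0)$ agree with the paper's in every case.
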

\begin{proof}
	We only consider grouping $\cG$ as the other grouping $\cG'$ can be handled analogously. 	Conditioned on $\tilde N^* = 0$, we have $\neg (i \leftarrow_1 j')$. Therefore, we can refine $i \leftarrow j \wedge i\leftarrow j'$ into two disjoint events: $(i \leftarrow_1 j \wedge i\leftarrow_{\geq 2} j')$ and $(i \leftarrow_{\geq 2} j \wedge i\leftarrow_{\geq 2} j')$. Note that  $i\leftarrow_{\geq 2} j'$ only if $N_- = 0$. We consider the first event: 
\begin{align*}	
&\;\;\;\;\;	\Pr[i \leftarrow_1 j \wedge i\leftarrow_{\geq 2} j' \; | \; \tilde M^* = \tilde m^*, \tilde N^* = 0, M_- = m, N_- = 0] \\
&	= \Pr[ i\leftarrow_{\geq 2} j'  \; | \; i \leftarrow_1 j, \tilde M^* = \tilde m^*, \tilde N^* = 0, M_- = m, N_-= 0]\\
&\;\;\;\;\;\;\;\;\; \cdot \Pr[i \leftarrow_1 j \; | \; \tilde M^* = \tilde m^*, \tilde N^* = 0, M_- = m, N_-= 0] \\
&	= \Pr[ i\leftarrow_{\geq 1} j'] \cdot \Big (a\frac{\tilde m^*}{m + \tilde m^*}\Big) = ab \frac{\tilde m^*}{m + \tilde m^*}
\end{align*}
The penultimate equality is due to Observation~\ref{obs:indep-rounds}. By de-conditioning on $M_-$ and $N_-$, we have 
\begin{equation}
	\label{eqn:third-1}
\Pr[i \leftarrow_1 j \wedge i\leftarrow_{\geq 2} j' \;| \; \tilde M^* = \tilde m^*, \tilde N^* = 0] = 	ab \sum_{m \geq 0}  \frac{\tilde m^*}{m + \tilde m^*} P(m, 0)
\end{equation}

	We now consider the second event. Notice that conditioned on $\tilde M^* > 0$, $nil \leftarrow_1  j$  occurs iff $\neg B_{ij}$ and $M_- = 0$. Likewise, conditioned on $\tilde N^* = 0$, $nil \leftarrow_1 j'$ occurs iff $N_- = 0$. Thus, 
\begin{align}	
	&\Pr[i \leftarrow_{\geq 2} j \wedge i\leftarrow_{\geq 2} j' \; | \; \tilde M^* = \tilde m^*, \tilde N^* = 0]  \nonumber\\
	=&\Pr[i \leftarrow_{\geq 2} j \wedge i\leftarrow_{\geq 2} j' \wedge \neg B_{ij} \wedge M_- = N_- = 0 \; | \; \tilde M^* = \tilde m^*, \tilde N^* = 0]  \nonumber\\
		     = &	 \Pr[ i\leftarrow_{\geq 2} j'  \wedge i \leftarrow_{\geq 2} j \; | \; \neg B_{ij} , M_- = N_- = 0, \tilde M^* = \tilde m^*, \tilde N^* = 0]  \nonumber \\
		     &\;\;\;\cdot \Pr[\neg B_{ij} \wedge M_- = N_-  = 0 \; | \; \tilde M^* = \tilde m^*, \tilde N^* = 0] \nonumber\\
     	= &  \Pr[ i\leftarrow_{\geq 1} j'   \wedge i \leftarrow_{\geq 1} j]  \cdot (1- a) P(0, 0)	\nonumber \\
	= & ab \zeta (1- a) P(0, 0)	 	\label{eqn:third-2}
\end{align}

	The penultimate equality follows due to the fact that $ \neg B_{ij} \wedge M_- = N_- = 0, \tilde M^* = \tilde m^*, \tilde N^* = 0$ implies $j, j' \notin J_{\leq 1}$ and  Observation~\ref{obs:indep-rounds}. By adding up Eqn. (\ref{eqn:third-1}) and (\ref{eqn:third-2}) and rearranging terms, we have the lemma. 
\end{proof}

\begin{lemma}
	\label{lem:third-p0-mono}
	For all $\tilde m^*  \geq 1$,  we have $\Phi(P_\cG, 1, \tilde m^*, \tilde n^*) \leq \Phi(P_{\cG'}, 1, \tilde m^*, \tilde n^*)$.
\end{lemma}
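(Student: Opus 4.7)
The plan is to mirror the structure of the proof of Lemma~\ref{lem:third-pp-mono} step by step, exploiting the fact that the coefficient function $d(m,n)$ is now supported only on the slice $n=0$, which should make the underlying inequality easier, not harder.

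As in the previous proof, I would let machine $1$ be the unique machine on which $\cG$ and $\cG'$ differ, fix the realizations $\tilde u := \tilde N_{1j}$ and $\tilde v := \tilde N_{1j'}$, and write both $P_{\cG}(m,n)$ and $P_{\cG'}(m,n)$ in terms of $P_{-1}(m,n)$, the joint pmf of the remaining tickets from $\cM \setminus \{1,i\}$, assuming $a_1,b_1 \in (0,1)$ (the boundary cases are trivial since $B_{1j}$ and $B_{1j'}$ are already independent). The decomposition is unchanged and yields the identity
\[
P_{\cG'}(m,n) - P_{\cG}(m,n) = a_1 b_1 \bigl(P_{-1}(m,n) - P_{-1}(m-\tilde u, n) - P_{-1}(m, n-\tilde v) + P_{-1}(m-\tilde u, n-\tilde v)\bigr).
\]
Re-indexing the sum defining $\Phi$ exactly as before reduces the lemma to checking the pointwise discrete mixed difference
\[
D(m,n) := d(m,n) - d(m+\tilde u, n) - d(m, n+\tilde v) + d(m+\tilde u, n+\tilde v) \geq 0
\]
for all $m,n,\tilde u, \tilde v \geq 0$, where $d$ now denotes the coefficient function of the present section.

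The key simplification is that, with $\tilde n^* = 0$, we have $d(m,n) = 0$ whenever $n > 0$. So whenever $n > 0$ all four terms of $D$ vanish; and whenever $\tilde v > 0$, the two ``shifted'' terms $d(m, n+\tilde v)$ and $d(m+\tilde u, n+\tilde v)$ vanish, leaving $D = d(m,0) - d(m+\tilde u, 0)$. The case $\tilde v = 0$ makes $D$ telescope to $0$ trivially, so the only substantive case is $n = 0$, $\tilde v > 0$. There I would split on whether $m$ and $\tilde u$ are zero: if $m > 0$, then $D = ab\bigl(\tfrac{\tilde m^*}{m+\tilde m^*} - \tfrac{\tilde m^*}{m+\tilde u +\tilde m^*}\bigr) \geq 0$ because $f(m) = \tilde m^*/(m+\tilde m^*)$ is decreasing; if $m = 0$ and $\tilde u > 0$, then $D = ab(2-a) - ab f(\tilde u) \geq ab(2 - a - 1) = ab(1-a) \geq 0$ since $a \leq 1$ and $f(\tilde u) \leq 1$; and if $\tilde u = 0$ we get $D = 0$ directly.

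The main (very mild) obstacle is bookkeeping: making sure that the boundary rows/columns of $d$, which behave asymmetrically because of the extra $(2-a)$ factor at $(0,0)$, are handled in the right case of the case analysis. Since the coefficient $d(0,0) = ab(2-a)$ is strictly larger than the maximum value of $ab f(m) \leq ab$ attained elsewhere on the $n=0$ axis, the inequality is in fact slack on the corner, which is what absorbs the cross term. Once this is verified, the lemma follows by summing over $m,n$ weighted by the nonnegative quantities $a_1 b_1 P_{-1}(m,n)$.
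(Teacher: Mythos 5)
Your proposal is correct and follows essentially the same route as the paper: both reduce the claim to the pointwise mixed difference $D = d(m,n) - d(m+\tilde u,n) - d(m,n+\tilde v) + d(m+\tilde u,n+\tilde v) \geq 0$ via the same $P_{-1}$ decomposition from Lemma~\ref{lem:third-pp-mono}, and then dispose of it by noting that $d$ vanishes off the $n=0$ slice, leaving only $d(m,0) - d(m+\tilde u,0) \geq 0$, handled by monotonicity of $f$ for $m>0$ and by $d(0,0)=ab(2-a)\geq ab\geq ab\,f(\tilde u)$ for $m=0$. No gaps.
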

\begin{proof}
	The proof is very similar to that of Lemma~\ref{lem:third-pp-mono}.
	As before, it suffices to show that 
	$$D:= d(m, n) - d(m+ \tilde u, n)  - d(m, n+ \tilde v) + d(m+ \tilde u, n+ \tilde v) \geq 0$$
\noindent for $m, n, \tilde  u, \tilde v \geq 0$; but with $d(m, n)$ defined in  Lemma~\ref{lem:third-p0}.

Assume $\tilde u, \tilde v > 0$ since otherwise $D= 0$ immediately. Also assume $n = 0$; otherwise $D = 0$. So, we have $D= d(m, 0) - d(m+ \tilde u, 0)$. We consider two cases. If $m > 0$, then $D / (ab) = \frac{\tilde m^*}{m + \tilde  m^*} - \frac{\tilde m^*}{m + \tilde u + \tilde  m^*} \geq 0$. If $m = 0$, 
we have $D / (ab) =  (2 - a) + (1- 1)(1 - a)  - \frac{\tilde m^*}{ \tilde u + \tilde  m^*} =  2 - a - \frac{\tilde m^*}{ \tilde u + \tilde  m^*}  \geq 0$.
\end{proof}

\subsubsection{Case $\tilde M^*  = 0 ,  \tilde N^* >0$} 
	\label{sec:third-0p}

This case is symmetric to $\tilde M^* >0,  \tilde N^* = 0$. So, we just state the definition and lemma without proof. 

\begin{lemma}
	\label{lem:third-0p}
For any $\tilde n^* > 0$, define,
\begin{align*} 
\Phi(P, \kappa,  \tilde m^* = 0, \tilde n^*) &: =  \sum_{n \geq 0}  d(0, n)  P(0, n) \textnormal{ where } 
\end{align*}
$$
d(m, n) = 
\begin{cases} 
ab \frac{\tilde n^*}{n + \tilde  n^*} &\mbox{if } m = 0, n > 0\\ 
ab  (2 - b) + ab (\kappa - 1)(1 - b) &\mbox{if } m =0, n = 0\\
0 &\mbox{if } m > 0
\end{cases}
$$
Then, we have 
\begin{itemize}
	\item $\Phi(P, \zeta, 0, \tilde n^*) = Pr_{\cG}[ i \leftarrow j \wedge i\leftarrow j' \;| \; \tilde M^* = 0, \tilde N^* = \tilde n^*]$; and 
	\item $\Phi(P, \zeta', 0, \tilde n^*) = Pr_{\cG'}[ i \leftarrow j \wedge i\leftarrow j' \;| \; \tilde M^* = 0, \tilde N^*= \tilde n^*]$.
\end{itemize}
\end{lemma}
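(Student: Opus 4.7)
The plan is to mirror the proof of Lemma~\ref{lem:third-p0} by swapping the roles of $(j, \tilde M^*, a, M_-)$ and $(j', \tilde N^*, b, N_-)$. Since the rounding scheme treats $j$ and $j'$ symmetrically (both are not in the same group on machine $i$, and the tickets and recommendation indicators are constructed identically per job), every computation in the previous proof has a mirror image, and the new $d(m,n)$ supplied in the statement is precisely the mirror image of the previous one.

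Concretely, I would first fix $\tilde M^* = 0$ and $\tilde N^* = \tilde n^* \geq 1$, and observe that $\tilde M^* = 0$ forces $\neg(i \leftarrow_1 j)$. Hence the event $(i \leftarrow j \wedge i \leftarrow j')$ decomposes into the two disjoint sub-events $(i \leftarrow_{\geq 2} j \wedge i \leftarrow_1 j')$ and $(i \leftarrow_{\geq 2} j \wedge i \leftarrow_{\geq 2} j')$. For the first sub-event, note that $i \leftarrow_{\geq 2} j$ given $\tilde M^* = 0$ requires $M_- = 0$ (so that $nil \leftarrow_1 j$), after which by Observation~\ref{obs:indep-rounds} the probability that $j$ is eventually assigned to $i$ is $x_{ij} = a$. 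Meanwhile, conditioning on $\tilde N^* = \tilde n^*$ and $N_- = n$, we have $i \leftarrow_1 j' = \frac{\tilde n^*}{n + \tilde n^*} \cdot \Pr[B_{ij'}]$, using independence of $B_{ij}$ and $B_{ij'}$ across different groups on machine $i$. De-conditioning on $N_-$ yields the $\sum_{n \geq 0} ab \cdot \frac{\tilde n^*}{n + \tilde n^*} P(0, n)$ contribution, matching the first line of the piecewise $d(m,n)$.

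For the second sub-event, both $nil \leftarrow_1 j$ and $nil \leftarrow_1 j'$ are required; given $\tilde N^* > 0$, the latter in turn requires $\neg B_{ij'}$, and it also forces $N_- = 0$. Combined with $M_- = 0$, this collapses the probability to $P(0,0) \cdot (1 - b) \cdot \Pr[i \leftarrow j \wedge i \leftarrow j']$ by Observation~\ref{obs:indep-rounds}, which by definition of $\zeta$ equals $ab \zeta (1-b) P(0,0)$. Adding the two contributions and collecting the $P(0,0)$ terms (using $ab + ab(1-b)\zeta = ab(2-b) + ab(\zeta - 1)(1-b)$) matches the $m = n = 0$ entry of $d$. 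The entries $d(m, n) = 0$ for $m > 0$ are consistent because $\tilde M^* = 0$ forces $M_- = 0$ to be the only relevant case after $nil \leftarrow_1 j$, and terms with $m > 0$ simply do not occur in the expansion; equivalently the sum ranges only over $n \geq 0$ with $m = 0$.

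The analogue of Lemma~\ref{lem:third-p0-mono} follows the same template: the only nontrivial difference $D := d(m,n) - d(m+\tilde u, n) - d(m, n+\tilde v) + d(m+\tilde u, n+\tilde v)$ to check for nonnegativity occurs at $m = 0$, in which case $\tilde u \geq 0$ can be absorbed because $d$ vanishes whenever $m + \tilde u > 0$. The two remaining subcases (either $n > 0$ giving a monotone decrease in $n$ of $\tilde n^*/(n + \tilde n^*)$, or $n = 0$ giving $(2 - b) - \tilde n^*/(\tilde v + \tilde n^*) \geq 0$ since $2 - b \geq 1$) are immediate. The main potential obstacle is just being careful with the boundary case $n = 0$ so as not to double-count the $P(0,0)$ mass between the two sub-events; writing the computation with explicit conditioning on $N_-$ (as in the previous lemma) avoids this cleanly. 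Because every step is a direct transposition of the earlier proof, no new ideas are required.
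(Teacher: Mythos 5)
Your proposal is correct and is exactly the argument the paper intends: the paper omits the proof of Lemma~\ref{lem:third-0p} by appealing to symmetry with Lemma~\ref{lem:third-p0}, and your write-up is precisely that symmetric transposition (decomposing into $(i \leftarrow_{\geq 2} j \wedge i \leftarrow_1 j')$ and $(i \leftarrow_{\geq 2} j \wedge i \leftarrow_{\geq 2} j')$, using Observation~\ref{obs:indep-rounds}, and collecting the $P(0,0)$ coefficient via $ab + ab\zeta(1-b) = ab(2-b) + ab(\zeta-1)(1-b)$). No gaps.
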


\begin{lemma}
	\label{lem:third-0p-mono}
	For all $\tilde n^*  \geq 1$,  $\Phi(P_\cG, 1, 0, \tilde n^*) \leq \Phi(P_{\cG'}, 1, 0, \tilde n^*)$.
\end{lemma}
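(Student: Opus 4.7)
My plan is to mirror the proof of Lemma~\ref{lem:third-p0-mono} almost verbatim, exploiting that Lemma~\ref{lem:third-0p} was obtained from Lemma~\ref{lem:third-p0} by swapping the roles of $j$ and $j'$ (equivalently, swapping $m \leftrightarrow n$, $\tilde m^* \leftrightarrow \tilde n^*$, and $a \leftrightarrow b$). So the setup is identical: I let machine $1 \neq i$ be the unique machine where $\cG$ and $\cG'$ differ, set $P_{-1}$ to be the joint distribution of $(M_-, N_-)$ with machine $1$'s contribution stripped out, and fix $\tilde u := \tilde N_{1j}$, $\tilde v := \tilde N_{1j'}$, $a_1 := x_{1j}$, $b_1 := x_{1j'}$. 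The same expansions of $P_{\cG}(m,n)$ and $P_{\cG'}(m,n)$ used in Lemma~\ref{lem:third-pp-mono} carry over unchanged.

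Using the identity $P_{\cG'}(m,n) - P_{\cG}(m,n) = a_1 b_1 \bigl( P_{-1}(m,n) - P_{-1}(m-\tilde u, n) - P_{-1}(m, n-\tilde v) + P_{-1}(m-\tilde u, n-\tilde v) \bigr)$, I would reduce the lemma to showing
\[
D := d(m,n) - d(m+\tilde u, n) - d(m, n+\tilde v) + d(m+\tilde u, n+\tilde v) \;\geq\; 0
\]
for all $m,n,\tilde u,\tilde v \geq 0$, where $d$ is now the coefficient function from Lemma~\ref{lem:third-0p}.

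The useful simplification (dual to the one used in Lemma~\ref{lem:third-p0-mono}) is that now $d(m,n) = 0$ whenever $m > 0$. So every term in $D$ with $m$-argument equal to $m+\tilde u$ vanishes as soon as $\tilde u > 0$, and I may also assume $m = 0$ and $\tilde u, \tilde v > 0$, since $D = 0$ trivially otherwise. Under these assumptions $D$ collapses to $d(0,n) - d(0, n+\tilde v)$. I then split on $n$: for $n > 0$ I get $D/(ab) = \tilde n^*/(n+\tilde n^*) - \tilde n^*/(n+\tilde v+\tilde n^*) \geq 0$ because $\tilde n^*/(n+\tilde n^*)$ is decreasing in $n$; for $n = 0$ I get $D/(ab) = (2-b) + (1-1)(1-b) - \tilde n^*/(\tilde v + \tilde n^*) = 2 - b - \tilde n^*/(\tilde v + \tilde n^*) \geq 0$ since $b \leq 1$ and $\tilde n^*/(\tilde v + \tilde n^*) \leq 1$.

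I do not anticipate any real obstacle: the argument is genuinely symmetric to Lemma~\ref{lem:third-p0-mono}, and the only care needed is bookkeeping, namely applying the vanishing of $d(m,\cdot)$ for $m > 0$ (rather than the vanishing of $d(\cdot, n)$ for $n > 0$ in the previous case) at the right step.
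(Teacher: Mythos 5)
Your proposal is correct and is exactly the argument the paper intends: the paper omits this proof entirely, stating only that the case is symmetric to Lemma~\ref{lem:third-p0-mono}, and your write-up correctly carries out that symmetric argument (reducing to $D = d(0,n) - d(0,n+\tilde v) \geq 0$ via the vanishing of $d$ for $m>0$, then checking $n>0$ and $n=0$ with $\kappa=1$).
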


\subsubsection{Case $\tilde M^* = \tilde N^* = 0$} 
	\label{sec:third-00}

This is the last case we consider. 

\begin{lemma}
	\label{lem:third-00}
	Define $\Phi(P, \kappa, 0, 0) := ab \kappa P(0, 0)$.
\begin{itemize}
	\item $\Phi(P, \zeta, 0, 0) = Pr_{\cG}[ i \leftarrow j \wedge i\leftarrow j' \;| \; \tilde M^* = 0, \tilde N^* = 0]$; and 
	\item $\Phi(P, \zeta', 0, 0) = Pr_{\cG'}[ i \leftarrow j \wedge i\leftarrow j' \;| \; \tilde M^* = 0, \tilde N^*= 0]$.
\end{itemize}
\end{lemma}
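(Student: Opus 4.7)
The plan is to observe that conditioning on $\tilde M^* = 0$ forces $N_{ij} = B_{ij} \cdot \tilde N_{ij} = 0$ by the algorithm's construction, so $j$ cannot be assigned to $i$ in iteration~1; symmetrically, $\tilde N^* = 0$ rules out $i \leftarrow_1 j'$. Hence, under the given conditioning, the event $i \leftarrow j \wedge i \leftarrow j'$ is identical to $i \leftarrow_{\geq 2} j \wedge i \leftarrow_{\geq 2} j'$, which requires $j, j' \notin \cJ_{\leq 1}$. Given $\tilde M^* = 0$, the statement $j \notin \cJ_{\leq 1}$ is equivalent to $M_- = 0$ (since the total number of real tickets for $j$ equals $N_{ij} + M_- = M_-$), and similarly $j' \notin \cJ_{\leq 1}$ iff $N_- = 0$.

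Next, I would factor the probability along the iteration-1 / iteration-$\geq 2$ partition. First, the pair $(\tilde M^*, \tilde N^*) = (\tilde N_{ij}, \tilde N_{ij'})$ is independent of $(M_-, N_-)$, since the latter is built only from $\{\tilde N_{i'j}, \tilde N_{i'j'}, B_{i'j}, B_{i'j'}\}_{i' \neq i}$, all of which are independent of $\tilde N_{ij}$ and $\tilde N_{ij'}$ by the algorithm's definition. Thus
\[
\Pr_\cG[M_- = 0 \wedge N_- = 0 \;|\; \tilde M^* = 0, \tilde N^* = 0] = \Pr_\cG[M_- = 0 \wedge N_- = 0] = P_\cG(0, 0).
\]
Second, the iteration-1 RVs are independent of all subsequent-iteration RVs, so Observation~\ref{obs:indep-rounds}, applied with $\cJ' = \{j, j'\}$ and $\ell = 2$, yields
\[
\Pr_\cG[i \leftarrow_{\geq 2} j \wedge i \leftarrow_{\geq 2} j' \;|\; \tilde M^* = 0, \tilde N^* = 0, M_- = 0, N_- = 0] = \Pr_\cG[i \leftarrow j \wedge i \leftarrow j'] = ab\zeta,
\]
where the last equality is the defining relation of $\zeta$.

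Multiplying the two factors gives $\Pr_\cG[i \leftarrow j \wedge i \leftarrow j' \;|\; \tilde M^* = 0, \tilde N^* = 0] = ab\zeta \cdot P_\cG(0, 0) = \Phi(P_\cG, \zeta, 0, 0)$, which is the first identity. The argument under grouping $\cG'$ is line-by-line identical, with $\zeta$ replaced by $\zeta'$ and $P_\cG$ replaced by $P_{\cG'}$. The only mildly delicate point is verifying that the extra conditioning beyond $j, j' \notin \cJ_{\leq 1}$ affects neither $P_\cG(0, 0)$ (handled by the independence above) nor the distribution of the subsequent-iteration assignment of $\{j, j'\}$ (handled by the fact that later-iteration primitive RVs are generated afresh and independently of iteration~1, together with the fact that removing other jobs from a group leaves $\Pr[B_G = j]$ and $\Pr[B_G = j']$ unchanged); this is routine once the independence structure of the algorithm's primitives is unpacked, so I anticipate no real obstacle.
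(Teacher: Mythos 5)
Your proposal is correct and follows essentially the same route as the paper's proof: both argue that conditioning on $\tilde M^* = \tilde N^* = 0$ rules out a first-iteration assignment to $i$, reduce the event to $M_- = N_- = 0$ followed by a later-iteration assignment of both jobs, use the independence of $(\tilde M^*, \tilde N^*)$ from $(M_-, N_-)$ to get the factor $P(0,0)$, and invoke Observation~\ref{obs:indep-rounds} to identify the remaining conditional probability with $\zeta ab$. No gaps.
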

\begin{proof}
	We only consider grouping $\cG$ as the other grouping $\cG'$ can be handled analogously.
	 For $i \leftarrow j \wedge i\leftarrow j'$ to happen conditioned on neither $j$ nor $j'$ having potential (therefore real) tickets on machine $i$, it must be the case that neither $j$ nor $j'$ are assigned in the first iteration. So, it must be the case that $M_- = N_- = 0$. Therefore, we have 
	 $\Pr[ j, j' \notin J_{\leq 1} \;| \; \tilde M^* =  \tilde N^* = 0] = P(0, 0)$.
	 Further, we know that $\Pr[ i \leftarrow_{\geq 2} j \wedge i \leftarrow_{\geq 2} j'  \;| \; j, j' \notin J_{\leq 1},  \tilde M^* =  \tilde N^* = 0] = \Pr[ i \leftarrow_{\geq 1} j \wedge i \leftarrow_{\geq 1} j'] = \zeta ab$ by Observation~\ref{obs:indep-rounds}. Thus, we have the lemma.
\end{proof}

\begin{lemma}
	\label{lem:third-00-mono}
	$\Phi(P_\cG, 1, 0, 0) \leq \Phi(P_{\cG'}, 1, 0, 0)$.
\end{lemma}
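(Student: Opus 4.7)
Since $\Phi(P,1,0,0)=ab\,P(0,0)$ by definition, the lemma reduces to showing $P_{\cG}(0,0)\le P_{\cG'}(0,0)$. The plan is to factor $P(0,0)$ as a product over machines $i'\neq i$ (using independence of the random variables across machines), reduce to showing the inequality factor-by-factor, and verify it by a direct computation on each machine where $j$ and $j'$ sit in a common group of $\cG$.

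Concretely, the event $\{M_-=0,\,N_-=0\}$ says that on every machine $i'\neq i$, neither $j$ nor $j'$ has any real tickets. The potential ticket variables $\tilde N_{i'j},\tilde N_{i'j'}$ and the group recommendations $B_G$ are generated independently across machines, so
\[
P(0,0)\;=\;\prod_{i'\in\cM\setminus\{i\}}\Pr\bigl[N_{i'j}=0\,\wedge\,N_{i'j'}=0\bigr].
\]
The groupings $\cG$ and $\cG'$ agree on every machine $i'$ on which $j$ and $j'$ are already in different groups, so the corresponding factors match. It therefore suffices to show that on every machine $i'$ with $j,j'\in G$ for some $G\in\cG_{i'}$, the factor under $\cG$ is weakly smaller than under $\cG'$.

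Fix such a machine $i'$ and write $a'=x_{i'j}$, $b'=x_{i'j'}$ (note $a'+b'\le 1$ since they share a group). Under $\cG'$ the indicators $B_{i'j}$ and $B_{i'j'}$ are independent, so by Observation~\ref{o:num-tickets} and Observation~\ref{o:tickets-independent}, $N_{i'j}\sim\poi(a')$ and $N_{i'j'}\sim\poi(b')$ are independent, giving factor $e^{-a'}e^{-b'}$. Under $\cG$ the indicators are mutually exclusive with $\Pr[B_{i'j}]=a'$, $\Pr[B_{i'j'}]=b'$, and $\Pr[\neg B_{i'j}\wedge\neg B_{i'j'}]=1-a'-b'$; conditioning on $B_G$ and using $\Pr[\tilde N_{i'j}=0]=1-(1-e^{-a'})/a'$ and the analogous identity for $j'$, a short calculation collapses the three terms to $e^{-a'}+e^{-b'}-1$.

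The final step is the elementary identity
\[
e^{-a'}e^{-b'} - \bigl(e^{-a'}+e^{-b'}-1\bigr) \;=\; (1-e^{-a'})(1-e^{-b'})\;\ge\;0,
\]
which establishes the per-machine inequality and hence the lemma. The only place the argument is not completely mechanical is the per-machine computation under $\cG$: one must carefully exploit the mutual exclusion of $B_{i'j}$ and $B_{i'j'}$ and the precise zero-probability of $\widetilde\poi$ to see the cancellations producing $e^{-a'}+e^{-b'}-1$; everything else is bookkeeping and independence across machines.
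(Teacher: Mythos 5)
Your proof is correct, but it takes a genuinely different route from the paper's. The paper reuses the machinery set up in the proof of Lemma~\ref{lem:third-pp-mono}: writing $\Phi(P,1,0,0)=\sum_{m,n}d(m,n)P(m,n)$ with $d(0,0)=ab$ and $d\equiv 0$ elsewhere, the difference $\Phi(P_{\cG'},1,0,0)-\Phi(P_{\cG},1,0,0)$ becomes $a_1b_1\sum_{m,n}P_{-1}(m,n)\,D(m,n)$ for the second-order difference $D(m,n)=d(m,n)-d(m+\tilde u,n)-d(m,n+\tilde v)+d(m+\tilde u,n+\tilde v)$, and one checks $D\ge 0$ (here $D=ab$ at $m=n=0$ when $\tilde u,\tilde v>0$, and $0$ otherwise). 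You instead bypass that identity entirely: you factor $P(0,0)=\prod_{i'\neq i}\Pr[N_{i'j}=0\wedge N_{i'j'}=0]$ using independence across machines, compute the per-machine factor in closed form ($e^{-a'}+e^{-b'}-1$ under $\cG$, matching the computation in Lemma~\ref{lem:at-least-one-of-two}, versus $e^{-a'}e^{-b'}$ under $\cG'$), and conclude via $(1-e^{-a'})(1-e^{-b'})\ge 0$. Your argument is more self-contained and handles the de-grouping on all machines simultaneously rather than one machine at a time, and it yields an explicit quantitative gap per machine; the paper's approach buys uniformity, since the same $D\ge 0$ template covers all four cases of $(\tilde m^*,\tilde n^*)$. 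One small point worth making explicit in your write-up: the reduction to $P_{\cG}(0,0)\le P_{\cG'}(0,0)$ uses $ab>0$, which holds because the paper assumes $a,b\in(0,1)$ without loss of generality in this section.
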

\begin{proof}
As in the proof of Lemma~\ref{lem:third-pp-mono}, it suffices to show that
		$$D:= d(m, n) - d(m+ \tilde u, n)  - d(m, n+ \tilde v) + d(m+ \tilde u, n+ \tilde v) \geq 0$$
\noindent for all $m, n, \tilde  u, \tilde v \geq 0$; but with $d$ defined in Lemma~\ref{lem:third-00}. 

To see this, assume $\tilde u, \tilde v >0$, since otherwise $D = 0$. Further, assume $m = n = 0$ since otherwise $D = 0$. Then, we have $D = d(m, n)  = ab \cdot 1 = ab > 0$. 
\end{proof}

\subsubsection{Putting All Pieces Together} 
	\label{sec:third-together}

From the definitions stated in Lemmas~\ref{lem:third-pp},~\ref{lem:third-p0},~\ref{lem:third-0p} and~\ref{lem:third-00}, we have :
\begin{equation}
	\label{eqn:third-together-1}
\Phi(P _{\cG}, 1, \tilde m^*, \tilde n^*) - \Phi(P _{\cG}, \zeta, \tilde m^*, \tilde n^*) = 
\begin{cases} 
- (1 - a) (1 - b) (\zeta - 1) ab P_{\cG}(0, 0)  &\mbox{if } \tilde m^* > 0, \tilde n^* >0 \\ 
-  (1 - a)  (\zeta - 1) ab P_{\cG}(0, 0)  &\mbox{if } \tilde m^* > 0, \tilde n^* = 0 \\ 
 -  (1 - b)  (\zeta - 1)ab  P_{\cG}(0, 0) &\mbox{if } \tilde m^* = 0, \tilde n^* > 0 \\ 
 - (\zeta - 1) ab P_{\cG}(0, 0)   &\mbox{if } \tilde m^* = 0, \tilde n^* = 0 \\ 
\end{cases}
\end{equation}

Similarly, we have 
\begin{equation}
	\label{eqn:third-together-2}
\Phi(P _{\cG'}, 1, \tilde m^*, \tilde n^*) - \Phi(P _{\cG'}, \zeta', \tilde m^*, \tilde n^*) = 
\begin{cases} 
- (1 - a) (1 - b) (\zeta' - 1) ab P_{\cG'}(0, 0)  &\mbox{if } \tilde m^* > 0, \tilde n^* >0 \\ 
-  (1 - a)  (\zeta' - 1) ab P_{\cG'}(0, 0)  &\mbox{if } \tilde m^* > 0, \tilde n^* = 0 \\ 
 -  (1 - b)  (\zeta' - 1)ab  P_{\cG'}(0, 0) &\mbox{if } \tilde m^* = 0, \tilde n^* > 0 \\ 
 - (\zeta' - 1) ab P_{\cG'}(0, 0)   &\mbox{if } \tilde m^* = 0, \tilde n^* = 0 \\ 
\end{cases}
\end{equation}

In Lemmas~\ref{lem:third-pp-mono},~\ref{lem:third-p0-mono},~\ref{lem:third-0p-mono} and~\ref{lem:third-00-mono} we have shown that  $\Phi(P_{\cG}, 1, \tilde m^*, \tilde n^*) \leq   \Phi(P_{\cG'}, 1,  \tilde m^*, \tilde n^*)$ for all $\tilde m^*, n^* \geq 0$. Therefore, we have, 
$$
\sum_{\tilde m^*, \tilde n^* \geq 0} q(\tilde m^*, \tilde n^*) \Phi(P_{\cG}, 1, \tilde m^*, \tilde n^*) 
\leq  \sum_{\tilde m^*, \tilde n^* \geq 0} q(\tilde m^*, \tilde n^*) \Phi(P_{\cG'}, 1,  \tilde m^*, \tilde n^*), 
$$
\noindent where $q(\tilde m^*, \tilde n^*) := \Pr[ \tilde M^*  = \tilde m^*, \tilde N^* = \tilde n^*]$.

Using Eqn. (\ref{eqn:third-together-1}), we have, 
\begin{align*}
&\sum_{\tilde m^*, \tilde n^* \geq 0} q(\tilde m^*, \tilde n^*) \Phi(P_{\cG}, 1, \tilde m^*, \tilde n^*)  \\
=&\sum_{\tilde m^*, \tilde n^* \geq 0} q(\tilde m^*, \tilde n^*)  \Phi(P _{\cG}, \zeta, \tilde m^*, \tilde n^*) \\
&- \sum_{\tilde m^*, \tilde n^* > 0}  q(\tilde m^*, \tilde n^*) (1 - a) (1 - b) (\zeta - 1) ab P_{\cG}(0, 0) \\
&- \sum_{\tilde m^* >0 } q(\tilde m^*, 0)  (1 - a)  (\zeta - 1) ab P_{\cG}(0, 0) \\
& -\sum_{\tilde n^* >0 }  q(\tilde 0, \tilde n^*)  (1 - b)  (\zeta - 1)ab  P_{\cG}(0, 0)   \\
& - q(0, 0) (\zeta - 1) ab P_{\cG}(0, 0)   \\
= & \Pr_\cG[i \leftarrow j, i \leftarrow j'] - ab P_\cG(0, 0) (\zeta - 1) \Big( (1 - a) (1 - b)   q_{++} + (1 - a) q_{+0} + (1 - b) q_{0+}  +q_{00}\Big) \\
= & \zeta ab - ab P_\cG(0, 0) (\zeta - 1) \Big( (1 - a) (1 - b)   q_{++} + (1 - a) q_{+0} + (1 - b) q_{0+}  +q_{00}\Big),
\end{align*}
\noindent where $q_{++} := \Pr[ \tilde M^* >0, \tilde N^* >0], q_{+0} := \Pr[ \tilde M^* >0, \tilde N^* =0], q_{0+} := \Pr[ \tilde M^* = 0, \tilde N^* >0], q_{00} := \Pr[ \tilde M^* = 0, \tilde N^* =0]$.

Similarly, we obtain 
\begin{align*}
&\sum_{\tilde m^*, \tilde n^* \geq 0} q(\tilde m^*, \tilde n^*) \Phi(P_{\cG'}, 1,  \tilde m^*, \tilde n^*)  \\
= & \zeta'  ab - ab P_{\cG'}(0, 0) (\zeta' - 1) \Big( (1 - a) (1 - b)   q_{++} + (1 - a) q_{+0} + (1 - b) q_{0+}  +q_{00}\Big)
\end{align*}

Thus, we have, 
\begin{align}
   \zeta  ab - ab P_{\cG}(0, 0) (\zeta - 1) \Big( (1 - a) (1 - b)   q_{++} + (1 - a) q_{+0} + (1 - b) q_{0+}  +q_{00}\Big) \label{third-final-1} \\
\leq   \zeta'  ab - ab P_{\cG'}(0, 0) (\zeta' - 1) \Big( (1 - a) (1 - b)   q_{++} + (1 - a) q_{+0} + (1 - b) q_{0+}  +q_{00}\Big) \label{third-final-2}
\end{align}

Note that Eqn. (\ref{third-final-1}) and (\ref{third-final-2}) are linear  $\zeta$ and $\zeta'$, respectively. Further, $q_{++}+ q_{+0} + q_{0+} + q_{00} = 1$ and $q_{++} >0$ since $a, b \in (0, 1)$. Therefore, both $\zeta$ and $\zeta'$ have strictly positive coefficients in
Eqn. (\ref{third-final-1}) and (\ref{third-final-2}), respectively. Further, Eqn. (\ref{third-final-1}) has value $ab$ when $\zeta = 1$ and Eqn. (\ref{third-final-2}) has value $ab$ when $\zeta' =1$. Thus, if $\zeta >1$, then it must be the case that $\zeta' > 1$.

To summarize, we have shown that if $\Pr_{\cG}[i \leftarrow j \wedge i \leftarrow j'] > x_{ij} x_{ij'}$, then it must be the case that $\Pr_{\cG'}[i \leftarrow j \wedge i \leftarrow j'] > x_{ij} x_{ij'}$. Since $\cG'$ was  obtained from $\cG$ by de-grouping the two jobs $j$ and $j'$ on one machine, by repeatedly refining $\cG'$ further, we know that what we have shown still holds true when $\cG'$ doesn't group $j$ and $j'$ together on any machines. As mentioned before, this implies that $i \leftarrow j$ and $i \leftarrow j'$ are independent under $\cG'$. Thus, we have $\Pr_{\cG'}[i \leftarrow j \wedge i \leftarrow j'] = x_{ij} x_{ij'}$. Thus, if we assume  $\Pr_{\cG}[i \leftarrow j \wedge i \leftarrow j'] > x_{ij} x_{ij'}$, then we obtain a contradiction. This completes the proof of the third property.



\subsection{Fourth Property}

For ease of reference, we re-state the fourth property we aim to prove: For every $i \in \cM$ and $j \neq j' \in \cJ$ such that  $j, j' \in G$ for some $G \in \cG_i$, we have $\Pr[i \leftarrow j \wedge i \leftarrow j']  \leq \frac{1}{1+e} (e^{x_{ij}} + e^{x_{ij'}})x_{ij}x_{ij'}$.

We first upper bound the probability that neither $j$ nor $j'$ are assigned in the first iteration.   Towards this end, we need the following proposition. 

\begin{proposition}
	\label{pro:conditional-num-tickets}
	For any job $j \in \cJ$ and machine $i \in \cM$, we have, 
	\begin{itemize}
		\item $\Pr[ N_{ij} = 0 \; | \; \neg B_{ij}]  = 1$. 
		\item $\Pr[ N_{ij} = 0 \; | \; B_{ij} ] = 1- \frac{1- \exp(-x_{ij})}{x_{ij}}$.
	\end{itemize}	
\end{proposition}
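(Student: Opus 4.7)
The plan is to read both claims directly off the definition $N_{ij} = B_{ij} \cdot \tilde N_{ij}$ given in Step 3 of the rounding, using the independence of the $B$ and $\tilde N$ variables established in Step 1 and Step 2 (they are drawn from independent sources), together with the pmf of $\widetilde{\poi}(x_{ij})$ recorded in the Preliminaries.

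For the first bullet, I would observe that conditioning on $\neg B_{ij}$ forces $B_{ij} = 0$, so $N_{ij} = 0 \cdot \tilde N_{ij} = 0$ deterministically; hence the conditional probability is $1$.

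For the second bullet, I would use that $\tilde N_{ij}$ is sampled independently of $B_{ij}$, so
\[
\Pr[N_{ij} = 0 \mid B_{ij}] \;=\; \Pr[\tilde N_{ij} = 0 \mid B_{ij}] \;=\; \Pr[\tilde N_{ij} = 0].
\]
Then I would simply plug in $k=0$ into the pmf of $\widetilde{\poi}(x_{ij})$ defined earlier, which gives $1 - \tfrac{1 - e^{-x_{ij}}}{x_{ij}}$, matching the claim. Assuming $x_{ij} > 0$ so the conditioning is well-defined (if $x_{ij} = 0$ then $B_{ij} = 0$ almost surely and the second bullet is vacuous); this edge case is the only subtlety, and it is purely notational.

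There is no real obstacle here: the statement is essentially an unpacking of the definitions of $N_{ij}$, $B_{ij}$, and the distribution $\widetilde{\poi}$, together with the independence already built into Steps 1--2 of the rounding. The proposition is being isolated only so that it can be invoked cleanly in the subsequent bound on $\Pr[\,\text{neither } j \text{ nor } j' \text{ is assigned in the first iteration}\,]$ that the Fourth Property argument requires.
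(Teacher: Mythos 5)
Your proposal is correct and matches the paper's own argument: the first bullet is immediate from $N_{ij}=B_{ij}\cdot\tilde N_{ij}$, and the second follows from the independence of $\tilde N_{ij}$ and $B_{ij}$ together with the pmf of $\widetilde\poi(x_{ij})$ at $k=0$ (the paper equivalently sums the pmf over $k\geq 1$ and complements). The $x_{ij}=0$ edge case you flag is indeed the only subtlety and is handled by the algorithm's convention $\tilde N_{ij}=0$ when $x_{ij}=0$.
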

\begin{proof}
	The first claim is immediate from the algorithm definition, that is, $N_{ij} = \tilde N_{ij} B_{ij}$. 	The second claim follows since  $\Pr[ N_{ij} > 0 \; | \; B_{ij}] = \Pr[ \tilde N_{ij} > 0] = 	\sum_{k > 0} e^{-\lambda} \frac{\lambda^k}{\lambda k!} = (1 - \exp(-\lambda)) / \lambda$
where $\lambda = x_{ij}$. 	
\end{proof}

\begin{lemma}
	\label{lem:at-least-one-of-two}
	For any two jobs $j \neq j' \in \cJ$, 
	$\Pr[ nil \leftarrow_1 j  \wedge nil \leftarrow_1 j'] \leq 1/ e^2$.
\end{lemma}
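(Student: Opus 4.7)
The plan is to reduce the joint non-assignment probability to a product over machines, then handle each machine separately. The key observation is that $nil \leftarrow_1 j$ occurs iff $N_{ij} = 0$ for every machine $i$, and since all random variables $\{B_G\}_{G \in \cG_i}$ and $\{\tilde N_{ij}\}_{j \in \cJ}$ are independent across machines, the joint events $\{N_{ij} = 0 \wedge N_{ij'} = 0\}$ are independent across $i$. So I would first write
\[
\Pr[nil \leftarrow_1 j \wedge nil \leftarrow_1 j'] \;=\; \prod_{i \in \cM} \Pr[N_{ij} = 0 \wedge N_{ij'} = 0].
\]
Using Proposition~\ref{pro:conditional-num-tickets}, we have $\Pr[N_{ij} = 0] = \exp(-x_{ij})$ (conditioning on $B_{ij}$ vs.\ $\neg B_{ij}$). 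Since $\sum_i x_{ij} = \sum_i x_{ij'} = 1$, it thus suffices to prove the per-machine bound
\[
\Pr[N_{ij} = 0 \wedge N_{ij'} = 0] \;\leq\; \exp(-x_{ij}) \exp(-x_{ij'}) \qquad \forall i \in \cM,
\]
because multiplying across $i$ then yields $\exp(-2) = 1/e^2$.

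The per-machine bound splits into two cases. If $j$ and $j'$ lie in different groups of $\cG_i$, then $B_{ij}$ and $B_{ij'}$ are drawn from independent $B_G$'s, hence $N_{ij}$ and $N_{ij'}$ are independent, and the bound holds with equality. The only nontrivial case is when $j, j' \in G$ for a common $G \in \cG_i$, so that $B_{ij}$ and $B_{ij'}$ are mutually exclusive via $B_G$. I would then condition on the three possibilities for $B_G$:  it equals $j$ (probability $x_{ij}$), equals $j'$ (probability $x_{ij'}$), or neither (probability $1 - x_{ij} - x_{ij'}$). In the first case, $N_{ij'} = 0$ automatically and $N_{ij} = 0$ requires $\tilde N_{ij} = 0$, which by Proposition~\ref{pro:conditional-num-tickets} contributes $x_{ij}\bigl(1 - (1 - e^{-x_{ij}})/x_{ij}\bigr) = x_{ij} - 1 + e^{-x_{ij}}$; the second case is symmetric; the third contributes $1 - x_{ij} - x_{ij'}$ since both $N$'s vanish. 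Summing gives
\[
\Pr[N_{ij} = 0 \wedge N_{ij'} = 0] \;=\; e^{-x_{ij}} + e^{-x_{ij'}} - 1.
\]

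The main (and really only) obstacle is the algebraic step of verifying that in this grouped case,
\[
e^{-x_{ij}} + e^{-x_{ij'}} - 1 \;\leq\; e^{-x_{ij} - x_{ij'}},
\]
which after rearrangement is equivalent to $(1 - e^{-x_{ij}})(1 - e^{-x_{ij'}}) \geq 0$ and thus holds trivially since $x_{ij}, x_{ij'} \geq 0$. This completes the per-machine bound, and taking the product over $i$ yields the lemma. The conceptual takeaway is that grouping $j$ and $j'$ on a machine makes $B_{ij}, B_{ij'}$ negatively correlated, which could in principle \emph{increase} the probability that both are unrecommended; the short algebraic inequality above quantifies that this increase is exactly compensated by the fact that whenever $B_G \in \{j, j'\}$ we still need the rare event $\tilde N_{ij} = 0$ (resp.\ $\tilde N_{ij'} = 0$) for the chosen job to have no real tickets.
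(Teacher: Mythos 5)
Your proposal is correct and follows essentially the same route as the paper: factor the joint non-assignment event into a product of per-machine events $\{N_{ij}=N_{ij'}=0\}$, split into the grouped and ungrouped cases, and in the grouped case condition on $B_G \in \{j, j', nil\}$ to get $e^{-x_{ij}}+e^{-x_{ij'}}-1$, which is bounded by $e^{-x_{ij}-x_{ij'}}$ via the same factored inequality the paper uses (the paper writes it as $e^{-a-b}-(e^{-a}-1)(e^{-b}-1)\le e^{-a-b}$). No gaps.
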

\begin{proof}
	The event occurs if and only if no tickets are generated for $j$ and $j'$, i.e., $\sum_{i \in \cM} (N_{ij} + N_{ij'}) = 0$. We will show that for any machine $i$, 
\begin{equation}	
	\label{eqn:1-0}
	\Pr[ N_{ij} + N_{ij'} = 0] \leq e^{-x_{ij}-x_{ij'}}. 
\end{equation}	
	Since the events $\{N_{ij} + N_{ij'} = 0 \}_{i \in \cM}$ are independent, Eqn. (\ref{eqn:1-0}) would imply $\Pr[ \sum_{i \in \cM} (N_{ij} + N_{ij'}) = 0] = \prod_{i \in \cM} \Pr[ N_{ij} + N_{ij'} = 0] \leq \prod_{i \in \cM} \exp( - x_{ij} - x_{ij'}) = \exp( - \sum_{i \in \cM} x_{ij} - \sum_{i \in \cM} x_{ij'}) = 1/e^2$, as desired. 
	
		It now remains to show Eq. (\ref{eqn:1-0}). Towards this end, fix a machine $i \in \cM$. For notational convenience, let $a := x_{ij}$ and $b := x_{ij'}$.  There are two cases we need to consider. If $j$ and $j'$ are not grouped together on machine $i$, the two RVs $N_{ij}$ and $N_{ij'}$ are independent. By Observation~\ref{o:num-tickets}, we know that $N_{ij} \sim \poi(x_{ij} = a)$ and $N_{ij'} \sim \poi(x_{ij'} = b)$. Therefore, $\Pr[ N_{ij} = 0 \wedge N_{ij} = 0] = \Pr[ N_{ij} = 0] \cdot \Pr[N_{ij} = 0] = e^{-a} \cdot e^{-b}$.

	We now consider the other case where $j, j' \in G$ for some $G \in \cG_i$. Depending on which job (or none) is recommended by the group $G$, we consider three disjoint events: $\neg B_{ij} \wedge \neg B_{ij'}$, $B_{ij}$, $B_{ij'}$. 
\begin{align}
	&\Pr[ N_{ij} + N_{ij'} = 0] \nonumber \\
	=& \Pr[ N_{ij} + N_{ij'} = 0 \wedge \neg B_{ij} \wedge \neg \wedge B_{ij'}] \nonumber\\
	&+ \Pr[ N_{ij} + N_{ij'} = 0 \wedge B_{ij} ] + \Pr[ N_{ij} + N_{ij'} = 0 \wedge B_{ij'}] \nonumber \\
	=&\Pr[ \neg B_{ij} \wedge \neg \wedge B_{ij'}]  + \Pr[ N_{ij}  = 0 \wedge B_{ij} ] \nonumber\\
	&+ \Pr[  N_{ij'} = 0 \wedge B_{ij'}] \nonumber \\
	=&(1 - a - b)+ \Pr[ N_{ij}  = 0 \; | \; B_{ij} ] \cdot \Pr[ B_{ij}] \nonumber\\
	&+ \Pr[  N_{ij'} = 0 \; | \;  B_{ij'}] \cdot \Pr [B_{ij'}] \nonumber \\
	= & ( 1- a - b) +  a \cdot (1 - \frac{1- e^{-a}}{a}) + b \cdot  (1 - \frac{1- e^{-b}}{b})  \  \\
	&\mbox{[Proposition~\ref{pro:conditional-num-tickets}]} \nonumber \\
	= &e^{-a-b}  - (e^{-a} -1)(e^{-b} -1) \nonumber\\
	\leq &e^{-a-b}  = e^{-x_{ij}-x_{ij'}}, \nonumber
\end{align}
\noindent 
where the second equality follows since $\neg B_{ij}$ and $ \neg B_{ij'}$ imply $N_{ij} = B_{ij} \tilde N_{ij} = 0$ and $N_{ij'} = B_{ij'} \tilde N_{ij'} = 0$, respectively. 
\end{proof}

By Observation~\ref{obs:indep-rounds}, we obtain the following corollary. 

\begin{corollary}
	\label{cor:fourth-both-in-second}
	For any two jobs, $j \neq j' \in \cJ$ and machine $i \in \cM$, we have 
	$\Pr[ i \leftarrow_{\geq 2} j  \wedge i \leftarrow_{ \geq 2} j'] \leq (1/ e^2) \Pr[ i \leftarrow j  \wedge i \leftarrow j']$.
\end{corollary}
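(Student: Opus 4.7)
The plan is to express the event $i \leftarrow_{\geq 2} j \wedge i \leftarrow_{\geq 2} j'$ as the intersection of two natural sub-events and then apply the two results just proved. Specifically, the event $i \leftarrow_{\geq 2} j \wedge i \leftarrow_{\geq 2} j'$ implies that neither $j$ nor $j'$ was assigned (to any machine) in the first iteration, i.e., $nil \leftarrow_1 j \wedge nil \leftarrow_1 j'$, since an assignment in the first iteration would be final. So I can write
\[
\Pr[i \leftarrow_{\geq 2} j \wedge i \leftarrow_{\geq 2} j']
= \Pr[i \leftarrow_{\geq 2} j \wedge i \leftarrow_{\geq 2} j' \mid nil \leftarrow_1 j \wedge nil \leftarrow_1 j'] \cdot \Pr[nil \leftarrow_1 j \wedge nil \leftarrow_1 j'].
\]

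For the first factor, I invoke Observation~\ref{obs:indep-rounds} with $\cJ' = \{j, j'\}$: conditioned on $j, j' \notin \cJ_{\leq 1}$, the stochastic process assigning $j$ and $j'$ in iterations $2, 3, \dots$ is distributionally identical to the same process started from iteration $1$. In particular, the event that both $j$ and $j'$ eventually land on machine $i$ has the same conditional probability as the unconditional probability $\Pr[i \leftarrow j \wedge i \leftarrow j']$. For the second factor, Lemma~\ref{lem:at-least-one-of-two} gives precisely the bound $\Pr[nil \leftarrow_1 j \wedge nil \leftarrow_1 j'] \leq 1/e^2$. Multiplying the two yields the claimed inequality.

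There is no real obstacle here: the corollary is essentially a repackaging of Lemma~\ref{lem:at-least-one-of-two} using the memoryless structure provided by Observation~\ref{obs:indep-rounds}. The only minor care needed is to verify that the event $nil \leftarrow_1 j \wedge nil \leftarrow_1 j'$ is exactly the event $\{j, j'\} \cap \cJ_{\leq 1} = \emptyset$ used in the statement of Observation~\ref{obs:indep-rounds}, which is immediate from the definition of $\cJ_{\leq 1}$.
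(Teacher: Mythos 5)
Your proof is correct and matches the paper's intent exactly: the paper states the corollary as an immediate consequence of Lemma~\ref{lem:at-least-one-of-two} combined with Observation~\ref{obs:indep-rounds}, which is precisely the decomposition you give (conditioning on $nil \leftarrow_1 j \wedge nil \leftarrow_1 j'$, bounding that event's probability by $1/e^2$, and using the memorylessness of the process to identify the conditional probability with $\Pr[i \leftarrow j \wedge i \leftarrow j']$). Nothing is missing.
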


\smallskip
We now consider the other case when $i \leftarrow_1 j \wedge i \leftarrow_{\geq 2} j'$ or $i \leftarrow_{\geq 2} j \wedge i \leftarrow_1 j'$. This means that one of the two jobs is assigned to machine $i$ and the other is not assigned to any machines in the first iteration.

\begin{lemma}
	\label{lem:pair}
	For any machine $i \in \cM$ and for any two jobs $j \neq j' \in G$ for some $G \in \cG_i$, we have 
	$\Pr[ i \leftarrow_1 j \wedge nil \leftarrow_1 j'] \leq (1- 1/e) (1/e) e^{x_{ij'}} x_{ij}$.
\end{lemma}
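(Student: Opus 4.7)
The plan is to factor out $B_{ij}$ at the outset and then use the grouping structure on machines other than $i$. Since $i \leftarrow_1 j$ forces $N_{ij} \geq 1$ and hence $B_{ij}$, we have
\[
\Pr[i \leftarrow_1 j \wedge nil \leftarrow_1 j'] \;=\; x_{ij}\cdot \Pr\bigl[i \leftarrow_1 j \wedge nil \leftarrow_1 j' \;\big|\; B_{ij}\bigr].
\]
Because $j$ and $j'$ lie in the same group of $\cG_i$, conditioning on $B_{ij}$ rules out $B_{ij'}$ and hence forces $N_{ij'} = 0$; under this conditioning the event $nil \leftarrow_1 j'$ simplifies to $\{S' = 0\}$, where $S' := \sum_{i' \neq i} N_{i'j'}$. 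By Observation~\ref{o:num-tickets} each $N_{i'j'} \sim \poi(x_{i'j'})$, and $\{N_{i'j'}\}_{i' \neq i}$ are mutually independent because randomness on different machines is independent; hence Fact~\ref{fact:poi} gives $S' \sim \poi(1 - x_{ij'})$ and $\Pr[S'=0 \mid B_{ij}] = e^{x_{ij'}-1}$. Together with $\Pr[i \leftarrow_1 j \mid B_{ij}] = 1 - 1/e$ from Lemma~\ref{lem:single}, the lemma now reduces to the correlation inequality
\[
\Pr[i \leftarrow_1 j \wedge S' = 0 \mid B_{ij}] \;\leq\; \Pr[i \leftarrow_1 j \mid B_{ij}]\cdot \Pr[S' = 0 \mid B_{ij}].
\]

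To prove this inequality I would further condition on $\tilde N_{ij} = k$ (so $N_{ij} = k$) and on the realizations of $S := \sum_{i' \neq i} N_{i'j}$ and $S'$; the uniform ticket-selection step then gives $\Pr[i \leftarrow_1 j \mid \cdot] = k/(k+S)$, and since neither $S$ nor $\mathbf{1}[S'=0]$ depends on $\tilde N_{ij}$, the inequality for each $k \geq 1$ is equivalent to $\operatorname{Cov}(k/(k+S),\,\mathbf{1}[S'=0]) \leq 0$. As the map $s \mapsto k/(k+s)$ is decreasing, it is enough to show that the conditional distribution of $S$ given $\{S'=0\}$ stochastically dominates the unconditional distribution of $S$. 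The structural observation that drives this is that, for each machine $i' \neq i$, the pair $(N_{i'j}, \mathbf{1}[N_{i'j'}=0])$ is positively associated: if $j$ and $j'$ lie in different groups of $\cG_{i'}$ they are outright independent, while if they share a group on $i'$ the mutual exclusion of $B_{i'j}$ and $B_{i'j'}$ yields $N_{i'j} \geq 1 \Rightarrow N_{i'j'} = 0$, giving $\Pr[N_{i'j} \geq n \mid N_{i'j'} = 0] \geq \Pr[N_{i'j} \geq n]$ for every $n \geq 0$. Since the pairs $\{(N_{i'j}, N_{i'j'})\}_{i'\neq i}$ are independent across machines, conditioning on $\{S'=0\}$ factorizes as conditioning on $\{N_{i'j'}=0\}$ on each $i'$ independently, and a routine coordinatewise coupling then produces a sample of $S$ under $\{S'=0\}$ that pointwise dominates a sample drawn from the unconditional distribution, yielding the required stochastic dominance.

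The main obstacle is the correlation inequality in the previous paragraph: although the events $\{i \leftarrow_1 j\}$ and $\{nil \leftarrow_1 j'\}$ involve disjoint jobs, they are coupled through the groupings $\cG_{i'}$ on machines $i' \neq i$, so one cannot appeal to any blanket independence between them. What rescues the argument is that the same-group mutual exclusion on every machine always pushes the dependence in the favorable direction, which is exactly what the stochastic-dominance step exploits. Once the inequality is established, assembling the three factors $x_{ij}$, $1-1/e$, and $e^{x_{ij'}-1}$ finishes the proof.
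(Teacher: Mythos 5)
Your proof is correct, but it takes a genuinely different route from the paper's. The paper proves the lemma by an iterative de-grouping argument: for each machine $i'\neq i$ on which $j$ and $j'$ share a group, it splits that group, fixes all the relevant ticket counts, and verifies by a direct computation that $\Pr[\,i\leftarrow_1 j \wedge nil\leftarrow_1 j'\,]$ can only increase (the key inequality being $\frac{N_{1j}}{N_{-2j}}\ge\frac{N_{1j}}{N_{-2j}+\tilde N_{2j}}$); once $j$ and $j'$ are grouped together only on machine $i$, the quantities $N_{-1j}$ and $N_{-1j'}$ become independent and the probability is evaluated exactly as $(1-1/e)\,e^{-(1-x_{ij'})}x_{ij}$. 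You instead condition on $B_{ij}$ once and for all, reduce the lemma to the single negative-correlation inequality $\Pr[i\leftarrow_1 j\wedge S'=0\mid B_{ij}]\le\Pr[i\leftarrow_1 j\mid B_{ij}]\cdot\Pr[S'=0\mid B_{ij}]$ with $S'=\sum_{i'\neq i}N_{i'j'}$, and prove it by showing that conditioning on $\{S'=0\}$ stochastically increases $S=\sum_{i'\neq i}N_{i'j}$, coordinatewise and then by independence across machines. Both arguments ultimately rest on the same structural fact---within-group mutual exclusion of $B_{i'j}$ and $B_{i'j'}$ pushes the dependence in the favorable direction---but yours is a one-shot global stochastic-dominance argument, whereas the paper's is an iterative local-exchange argument. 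Your route avoids introducing the modified grouping $\cG'$ and the bookkeeping of conditioning on $N_{1j}, N_{1j'}, \tilde N_{2j}, \tilde N_{2j'}, N_{-2j}, N_{-2j'}$, and it isolates the probabilistic mechanism more transparently; the paper's route reuses the de-grouping template it already built for the third property, so it is more uniform with the rest of Section~\ref{sec:iterative-analysis}. All the steps you would need---the factorization of $\{S'=0\}$ into the independent events $\{N_{i'j'}=0\}$, the coordinatewise dominance $\Pr[N_{i'j}\ge n\mid N_{i'j'}=0]\ge\Pr[N_{i'j}\ge n]$, and the monotonicity of $k/(k+s)$---check out, so the argument is complete.
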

\begin{proof}
	Fix any $i$, $j$, $j'$, $G$ satisfying the lemma precondition. Say $i = 1$ wlog. 
	We will show that $\Pr[ 1 \leftarrow_1 j \wedge nil \leftarrow_1 j']$ is maximized when $j$ and $j'$ are not grouped together on any machines except machine $1$. Towards this end, if there is a machine $i' \neq i$, say $i' = 2$, where $j, j' \in G'$ for some $G' \in \cG_2$, we create another grouping by splitting $G'$ into $A$ and $B$ arbitrarily such that $j \in A$ and $j' \in B$. Then we will show that the probability is no smaller under grouping $\cG'$ than it is under $\cG$. Repeating this argument will prove the  the claim.
	
	Let $P_{\cG}$ denote $\Pr[ i \leftarrow_1 j \wedge nil \leftarrow_1 j']$ under grouping $\cG$. Define $P_{\cG'}$ analogously. Our goal is to show $P_{\cG} \leq P_{\cG'}$. We will compare $P_{\cG}$ and $P_{\cG'}$ fixing $N_{1j}, N_{1j'}$, $\tilde N_{2j}$, $\tilde N_{2j'}$, $N_{-2j} := \sum_{i'' \in \cM, i'' \neq 2} N_{i''j}$, and $N_{-2j'} := \sum_{i'' \in \cM, i'' \neq 2} N_{i''j'}$. Note that $nil \leftarrow_1 j'$ only if 	$N_{1j'} = N_{-2j'}  = 0$. Also, $1 \leftarrow_1 j$ only if $N_{1j} > 0$. So, assume $N_{1j} >0$; then we also have $N_{-2j} \geq N_{1j} >0$. 
	
	Given that the aforementioned RVs are fixed, the only factor that potentially makes $P_{\cG}$ and $P_{\cG'}$ different is how $B_{2j}$ and $B_{2j'}$ are related---the two are disjoint under $\cG$ but are independent under $\cG'$. We first observe that $P_{\cG} = P_{\cG'}$ if $\tilde N_{2j'} = 0$. This is because if $\tilde N_{2j'} = 0$, then $N_{2j'} = 0$ no matter what, meaning that $B_{2j'}$ has no effect on  $P_{\cG}$ or $P_{\cG'}$; further, $\Pr[B_{2j}]$ is the same under both $\cG$ and $\cG'$. So, we assume that the fixed $\tilde N_{2j'} > 0$. 
		
	To calculate $P_{\cG}$ and  $P_{\cG'}$, we consider three disjoint events  depending on the recommendation made by $G'$ on machine $2$, namely $B_{2j}$, $B_{2j'}$, and $\neg B_{2j} \wedge \neg B_{2j'}$. If $B_{2j} = 1$, then $N_{2j} = \tilde N_{2j}$. Note that per the above discussion, we can safely assume that $N_{1j'} = N_{-2j'}  = 0$, $N_{1j} >0$, $\tilde N_{2j'} > 0$. 	So, we have $\Pr[ 1 \leftarrow_1 j \wedge nil \leftarrow_1 j' \wedge B_{2j}] = x_{2j} \cdot \frac{N_{1j}}{\tilde N_{2j} + N_{-2j}} $.  If $B_{2j'} = 1$, then $N_{2j'} =  \tilde N_{2j'} >0$, therefore $j' \in \cJ_{\leq 1}$. Finally, if $\neg B_{2j} \wedge \neg B_{2j'}$, then $N_{2j} = N_{2j'} = 0$ ---so we have $\Pr[ 1 \leftarrow_1 j \wedge nil \leftarrow j' \wedge \neg B_{2j} \wedge \neg B_{2j'}] = (1 - x_{2j} - x_{2j'}) \frac{N_{1j}}{N_{-2j}}$. Therefore, we have, 	
\begin{align*}
	P_{\cG} = \frac{N_{1j}}{N_{-2j} + \tilde N_{2j}} a +  \frac{N_{1j}}{N_{-2j}}  (1 - a -b),  
\end{align*}
where we let $a := x_{2j}$, $b := x_{2j'}$ for notational convenience. 

	We now focus on calculating $P_{\cG'}$. Note that under grouping $\cG'$, $B_{2j}$ and $B_{2j'}$ are independent. For $j' \notin \cJ_{\leq 1}$ to happen, it must be the case that $B_{2j'} = 0$ since the fixed $\tilde N_{2j'} >0$. Then, by considering whether $B_{ij}$ or not, we have, 
\begin{align*}
	P_{\cG'} &= (1- b) \Big(\frac{N_{1j}}{N_{-2j} + \tilde N_{2j}} a +  \frac{N_{1j}}{N_{-2j}}  (1 - a) \Big) 	
\end{align*}

Then, we have, 
\begin{align*}
	P_{\cG'} -P_{\cG}  = ab \Big( - \frac{N_{1j}}{N_{-2j} + \tilde N_{2j}} + \frac{N_{1j}}{N_{-2j}}\Big) \geq 0,   
\end{align*}
\noindent since $N_{-2j} \geq N_{1j}$, as desired. 

\smallskip
Therefore, to upper bound $\Pr[ i \leftarrow_1 j \wedge nil \leftarrow_1 j']$, we can safely assume that no group, except $G$ on machine 1, has  both jobs $j$ and $j'$ simultaneously.  This implies that $N_{-1j} := \sum_{i'' \in \cM, i'' \neq 1} N_{i''j}$ and $N_{-1j'} := \sum_{i'' \in \cM, i'' \neq 1} N_{i''j'}$ are independent. Note that $N_{-1j} \sim \poi(1- x_{1j})$ and $N_{-1j'} \sim \poi(1- x_{1j'})$ by Observations~\ref{o:num-tickets} and \ref{o:tickets-independent}.
We now derive, 
\begin{align*}
	\label{eqn:3}
 	 &\Pr[ 1 \leftarrow_1 j  \wedge nil  \leftarrow_1 j']   \nonumber \\
	 	 = &\Pr[ 1 \leftarrow_1 j \wedge B_{1j} \wedge N_{1j'} =  N_{-1j'} = 0] \\ & \mbox{[$1 \leftarrow_1 j$ only if $B_{1j}$]} \nonumber \\
	 	 = &\Pr[ 1 \leftarrow_1 j \wedge B_{1j} \wedge N_{-1j'} = 0] \\ & \mbox{[$N_{1j'} = 0$ if $B_{1j}$, since $j, j' \in G$ on machine 1)]} \nonumber \\
 	 	 = &\Pr[ 1 \leftarrow_1 j \wedge B_{1j}] \cdot \Pr[ N_{-1j'} = 0]   \\
 	 = &\Pr[ 1 \leftarrow_1 j \; | \; B_{1j} ] \cdot \Pr[B_{1j}] \cdot e^{- (1 - x_{1j'})}   & \\
 	 &\mbox{[$N_{-1j'} \sim \poi(1 - x_{1j'})]$} \\
	 = &(1 - 1/e) e^{- (1 - x_{1j'})}  x_{1j} \qquad\mbox{[Lemma~\ref{lem:single}]} 
\end{align*}
The third equation follows since $N_{-1j'}$ is determined by the recommendation of groups on machines other than $1$, which don't include $j$, and $\{\tilde N_{i''j'}\}_{i'' \in \cM, i'' \neq 1}$ -- all these are independent of $B_{1j}$ and $1 \leftarrow_1 j$ since $j$ and $j'$ are not grouped together on any machines except machine $1$.
\end{proof}

\begin{corollary}
	\label{cor:pair}
	For any machine $i \in \cM$ and for any two jobs $j \neq j' \in G$ for some $G \in \cG_i$, we have 
\begin{align*}
	\Pr[ i \leftarrow_1 j \wedge i \leftarrow_{\geq 2} j'] &\leq (1- 1/e) (1/e) e^{x_{ij'}} x_{ij}x_{ij'}; \textnormal{ and} \\
	\Pr[ i \leftarrow_{\geq 2} j \wedge i \leftarrow_{1} j'] &\leq (1- 1/e) (1/e) e^{x_{ij}} x_{ij}x_{ij'}.	
\end{align*}
\end{corollary}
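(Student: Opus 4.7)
The plan is to derive the corollary from Lemma~\ref{lem:pair} by factoring the joint event and invoking Observation~\ref{obs:indep-rounds} together with the second property. First, observe that $\{i \leftarrow_1 j \wedge i \leftarrow_{\geq 2} j'\} = \{i \leftarrow_1 j \wedge nil \leftarrow_1 j' \wedge i \leftarrow_{\geq 2} j'\}$, since $i \leftarrow_{\geq 2} j'$ forces $j' \notin \cJ_{\leq 1}$. By the chain rule this probability equals
\[
\Pr[i \leftarrow_1 j \wedge nil \leftarrow_1 j'] \cdot \Pr[i \leftarrow_{\geq 2} j' \mid i \leftarrow_1 j,\, nil \leftarrow_1 j'].
\]
Lemma~\ref{lem:pair} already bounds the first factor by $(1-1/e)(1/e) e^{x_{ij'}} x_{ij}$, so the task reduces to showing that the conditional probability on the right is exactly $x_{ij'}$.

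To control the conditional probability, I would appeal to Observation~\ref{obs:indep-rounds}: conditioned on $j' \notin \cJ_{\leq 1}$, the stochastic process assigning $j'$ from iteration $2$ onward is stochastically identical to the process starting from iteration $1$, so its total probability of eventually sending $j'$ to $i$ equals $\Pr[i \leftarrow j'] = x_{ij'}$ by the second property. The remaining subtlety is the additional conditioning on $i \leftarrow_1 j$. Since $i \leftarrow_1 j$ is measurable with respect to iteration-$1$ RVs while iteration-$\ell$ RVs for $\ell \ge 2$ are renewed and thus independent of iteration-$1$ RVs, the only channel of influence is through the realization of $\cJ_{\leq 1}$, which modifies the groups $G^{(\ell)} = G \setminus \cJ_{\leq 1}$ used in later iterations. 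However, the marginal distribution of the quantities that determine $j'$'s assignment, namely $\{N^{(\ell)}_{i'j'}\}_{i' \in \cM}$, is invariant under removing \emph{other} jobs from the groups: the recommendation probability $\Pr[B^{(\ell)}_{G^{(\ell)}} = j'] = x_{i'j'}$ depends only on $j'$'s own fractional value, and the potential ticket counts $\tilde N^{(\ell)}_{i'j'}$ are i.i.d.\ and independent of everything else. Conditioning on any realization of $\cJ_{\leq 1}$ consistent with the event $\{i \leftarrow_1 j,\, nil \leftarrow_1 j'\}$ therefore yields the same conditional probability $x_{ij'}$, and averaging preserves this value.

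Combining the two bounds proves the first inequality, and the second inequality follows by the same argument with the roles of $j$ and $j'$ swapped. The main obstacle will be the careful justification in the preceding paragraph that the extra conditioning on $i \leftarrow_1 j$ leaves the later-iteration distribution of $j'$ unchanged; the cleanest way I see is to condition on a full realization of $\cJ_{\leq 1}$, observe that $j'$'s assignment in iterations $\geq 2$ depends on $\cJ_{\leq 1}$ only through the indicator $\mathbf{1}[j' \in \cJ_{\leq 1}]$ because $j'$'s ticket distribution is insensitive to the composition of the groups containing it, and then average out.
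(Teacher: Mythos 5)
Your proposal is correct and follows essentially the same route as the paper: the paper also rewrites the event as $i \leftarrow_1 j \wedge nil \leftarrow_1 j' \wedge i \leftarrow_{\geq 2} j'$, factors via the chain rule, applies Lemma~\ref{lem:pair} to the first factor, and uses Observation~\ref{obs:indep-rounds} with the second property to evaluate the conditional factor as $\Pr[i \leftarrow_{\geq 1} j'] = x_{ij'}$. Your extra discussion of why conditioning on $i \leftarrow_1 j$ does not perturb $j'$'s later-iteration distribution is a more explicit justification of exactly what the paper delegates to Observation~\ref{obs:indep-rounds}.
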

\begin{proof}
	We show the first inequality as follows. 
\begin{align*}
	  &\Pr[ i \leftarrow_1 j \wedge i \leftarrow_{\geq 2} j'] \\
	=&\Pr[ i \leftarrow_1 j \wedge i \leftarrow_{\geq 2} j' \wedge nil \leftarrow_1 j'] \\
	=&\Pr[ i \leftarrow_{\geq 2} j' \; | \; i \leftarrow_1 j \wedge nil \leftarrow_1 j'] \cdot \Pr[ i \leftarrow_1 j \wedge nil \leftarrow_1 j'] \\
	=&\Pr[ i \leftarrow_{\geq 1} j'] \cdot \Pr[ i \leftarrow_1 j \wedge nil \leftarrow_1 j']  \\
	\leq& x_{ij'} (1 - 1/e) e^{- (1 - x_{1j'})}  x_{ij},
\end{align*}
\noindent where the penultimate equality is due to Observation~\ref{obs:indep-rounds}; and the last equality is due to the second property and 
	Lemma~\ref{lem:pair}. The second inequality can be analogously shown using its symmetricity to the first.
\end{proof}

We are now ready to complete the proof of the fourth property. For any two jobs $j \neq j' \in G$ for some $G \in \cG_i$, we know $i \leftarrow_1 j$ and $i \leftarrow_1 j'$ cannot happen simultaneously since $B_{ij}$ and $B_{ij'}$ are disjoint. Therefore, by Corollaries~\ref{cor:fourth-both-in-second} and~\ref{cor:pair}, we have
\begin{align*}
	  &\Pr[ i \leftarrow j \wedge i \leftarrow  j']  \\
  	 \leq &\Pr[ i \leftarrow_1 j \wedge i \leftarrow_{\geq 2} j']\\
  	 &+  \Pr[ i \leftarrow_2 j \wedge i \leftarrow_{\geq 1} j'] + \Pr[ i \leftarrow_{\geq 2} j \wedge i \leftarrow_{\geq 2} j'] \\
  	 \leq &(1- 1/e) (1/e) (e^{x_{ij}} + e^{x_{ij'}})  x_{ij} x_{ij'}\\
  	 &+ (1 / e^2)\Pr[ i \leftarrow j \wedge i \leftarrow  j']
\end{align*}

By rearranging terms, we have 
\begin{align*}
	  &\Pr[ i \leftarrow j \wedge i \leftarrow  j']  \\
  	 &\leq  \frac{(1- 1/e) (1/e)}{1 - 1/e^2}  (e^{x_{ij}} + e^{x_{ij'}})  x_{ij} x_{ij'} \\   	& =  \frac{1}{e+1}  (e^{x_{ij}} + e^{x_{ij'}})  x_{ij} x_{ij'} 
\end{align*}
This completes the proof of the fourth property. 
\section{Unrelated Machines Scheduling: Rounding Algorithm}
\label{sec:unrelated-algo}
In this section we describe how we round the  optimal  solution to the LP described in Section~\ref{sec:application}. 

As mentioned, we will view the LP solution $\{x_{ijs}\}_{i \in \cM, j \in \cJ, s}$ as a collection of rectangles. 

\begin{definition}
    For every $x_{ijs} >0$, where $i \in \cM, j \in \cJ, s  \geq 0$, there is a rectangle $R_{ijs}$ that starts at time $s$ and ends at time $s+p_{ij}$ of height $x_{ijs}$.
\end{definition}

\begin{definition}
    The height of job $j \in \cJ$ on machine $i \in \cM$ is defined as $x_{ij} := \sum_{s} x_{ijs}$.
\end{definition}

To apply Theorem~\ref{thm:rounding-rephrase}, we need to group jobs on each machine, which will be done stochastically.   For every $j \in \cJ$ and $i \in \cM$ such that $x_{ij} >0$, we choose $\tau_{ij}$ uniformly at random from $(0,p_{ij}]$; and select one rectangle $R_{ijs}$ with probability $\frac{x_{ijs}}{x_{ij}}$ as the \emph{representative} rectangle of job $j$ on machine $i$, which is  denoted as $R_{ij}$.

 We now define good jobs and bad jobs on each machine. Intuitively, a job is good on a machine if it is scheduled a lot on the machine in the LP solution or it starts considerably late compared its size (the starting point of the job's representative rectangle $R_{ij}$ is large compared to its size $p_{ij}$)---we say a job is good because if all jobs are good, then we can obtain a better than 1.5-approximation  by independent rounding. 
 
\begin{definition} 
	Given $R_{ij} = R_{ijs}$, job $j \in \cJ$ is good on machine $i \in \cM$ if $s \geq \frac{1}{10}p_{ij}$ or $x_{ij} \ge \frac{9}{100}$; otherwise, job $j$ is bad on machine $i$. 
\end{definition}

\begin{definition} 
    A rectangle $R_{ijs}$ is good if $s \geq \frac{1}{10} p_{ij}$ or $x_{ij} \geq \frac{9}{100}$, otherwise bad. 
\end{definition}

Note that a job $j$ is good on machine $i$ if its representative rectangle $R_{ij}$ on machine $i$ is good. So, job $j$ being good or bad is stochastic. 

Before grouping jobs, we shift every rectangle to the right---how much a rectangle $R_{ij} = R_{ijs}$ is shifted depends on its start time and how much $j$ is scheduled on the machine: Rectangle $R_{ij} = R_{ijs}$ is shifted to the right by 
\begin{itemize}
    \item $0.34(s+x_{ij}p_{ij})$ if $x_{ij}\ge\frac{9}{100}$
    \item $0.34s$ otherwise. 
\end{itemize}

 We denote the starting point of $R_{ijs}$ after shifting as $\hat{s}$. For each job $j$ on machine $i$ with $R_{ij} = R_{ijs}$ and random offset $\tau_{ij}$, we set $\theta_{ij}=\hat{s}+\tau_{ij}$. We will schedule the jobs assigned to each machine $i$ in increasing order of their $\theta$ values on machine $i$.\\
 
 Now to decide where to assign each job $j$, we use Theorem~\ref{thm:rounding-rephrase}. To apply the theorem we need to group jobs on each machine.
 To define the grouping $\cG_i$ on each machine $i$,  fix a machine $i \in \cM$. We only group bad jobs on machine $i$ -- or equivalently, we create a singleton group for each job that is not grouped together with any other jobs on the machine.
 To group bad jobs on machine $i$, we  use a set of random grid points that are exponentially increasing. Choose a number $\rho$ from $(1/ 10, 1)$ uniformly at random. A point in time is called a grid time if it is of the form $\rho 10^l$ for some integer $l$. Define the $k$-th grid interval, $I_k := (\rho 10^k, \rho 10^{k+1})$. Note that grid times partition the time horizon $(0, \infty)$ into $\{I_k\}_{k \in \mathbb{Z}}$.

We associate   job $j$ with interval $I_k$ on machine $i$ if $\theta_{ij} \in I_k $ and an independent coin  gives a head---the coin gives 
a head with probability $u = 1/2$---and we denote  this event as 
$j\leadsto_i I_k$. 
Now consider all jobs associated with an interval $I_k$ on machine $i$. If their total height is less than 1, i.e., $\sum_{j: j \leadsto_i I_k} x_{ij} \leq 1$, we group them together and add this group to $\cG_i$. 
If two jobs $j$ and $j'$ are grouped together on machine $i$, we will denote the event as $j \stackrel{i}{\sim} j'$; otherwise 
$j \stackrel{i}{\nsim} j'$. 
We now use Theorem~\ref{thm:rounding-rephrase} to decide where to assign each job.

As mentioned above, given that every job is assigned to a machine, we order the jobs assigned to the same machine $i$, in increasing order of their $\theta_{ij}$ value. This completes the description of our randomized rounding. 
\section{Unrelated Machines Scheduling: Analysis}
\label{sec:unrelated-analysis}

This section is devoted to the analysis of the randomized algorithm in Section~\ref{sec:unrelated-algo} with the goal of proving Theorem~\ref{thm:completion-time}.

We first remind the reader of the following fact.

\begin{observation}
\label{obs:secondStep-properties-hold}
All properties of our randomized rounding method (stated in Theorem~\ref{thm:rounding-rephrase}) hold true for any fixed value of 
the random variables $\{\theta_{ij}\}_{i,j}$ and $\rho$.
In particular,  $\Pr[i \leftarrow j] = x_{ij}$ for any $\theta$ values of jobs and $\rho$ value and is independent of how jobs are grouped together on each machine and which rectangles are chosen as jobs' representative rectangles.
\end{observation}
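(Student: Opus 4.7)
The plan is to observe that the rounding procedure of Section~\ref{sec:iter-algorithm} is a black box that takes as input only the fractional assignment $\{x_{ij}\}$ and the grouping $\{\cG_i\}$, and generates all of its own randomness (the potential tickets $\tilde N_{ij}$ and the recommendations $B_G$) independently of the random variables $\{\theta_{ij}\}$, $\rho$, and the choice of representative rectangles $\{R_{ij}\}$ used by the scheduling algorithm. Conditioning on fixed values of $\{\theta_{ij}\}$, $\rho$, and $\{R_{ij}\}$ simply fixes a particular grouping $\cG_i$ for every machine $i$; the remaining randomness driving the rounding is untouched.

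Concretely, I would first check that the assumption of Theorem~\ref{thm:rounding-rephrase}, namely $\sum_{j \in G} x_{ij} \leq 1$ for every $G \in \cG_i$, is satisfied by construction: by definition we add the group of jobs associated with an interval $I_k$ on machine $i$ to $\cG_i$ only if this total height is at most $1$, and singleton groups trivially satisfy the condition. Therefore, once $\{\theta_{ij}\}$, $\rho$, and $\{R_{ij}\}$ are conditioned on any realization, the subsequent call to the rounding algorithm is a legitimate instance of the procedure analyzed in Section~\ref{sec:iterative-analysis}. Since the proofs of the four properties in Theorem~\ref{thm:rounding-rephrase} treat the grouping as an arbitrary but fixed input, all four properties continue to hold under this conditioning.

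For the particular claim that $\Pr[i \leftarrow j] = x_{ij}$ holds for any fixed values of $\{\theta_{ij}\}$ and $\rho$, it suffices to invoke the second property of Theorem~\ref{thm:rounding-rephrase}, which was established (see Section~\ref{sec:second}) using only the marginal distributions $N_{ij} \sim \poi(x_{ij})$ and the independence across machines from Observations~\ref{o:num-tickets} and \ref{o:tickets-independent}; these facts make no reference to the grouping at all. Hence $\Pr[i \leftarrow j]$ depends on neither the groupings $\{\cG_i\}$ produced by the $\theta$'s and $\rho$, nor on which rectangles were chosen as representatives.

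I do not expect any serious obstacle here: the statement is really a clarification that the two sources of randomness in the scheduling algorithm (the $\theta, \rho, R_{ij}$ used to build groupings, versus the $\tilde N_{ij}, B_G$ internal to the rounding) are independent by construction, so Theorem~\ref{thm:rounding-rephrase} applies conditionally as well as unconditionally. The only mildly non-trivial point to be careful about is that the groupings produced by our construction always satisfy the capacity constraint $\sum_{j \in G} x_{ij} \leq 1$, which, as noted above, is guaranteed by our rule for forming groups from intervals.
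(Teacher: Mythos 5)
Your proposal is correct and matches the paper's (implicit) reasoning: the paper states this observation without proof, treating it as immediate from the fact that conditioning on $\{\theta_{ij}\}$, $\rho$, and the representative rectangles merely fixes a valid grouping (one satisfying $\sum_{j \in G} x_{ij} \leq 1$ by the construction rule), after which Theorem~\ref{thm:rounding-rephrase} applies with its own independent internal randomness. Your additional check of the capacity constraint and the independence of the second property from the grouping are exactly the right details to make explicit.
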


 Let $\tilde C_{\Sj}$ be the completion time of a fixed job $\Sj$ in the schedule returned by the rounding algorithm. Using conditional expectation and the law of total expectation, we have
\begin{align*}
E[\tilde C_{\Sj}]&= 
\sum_{i \in \cM} \sum_{\Ss} \Pr[i \leftarrow \Sj] \Pr[R_{i\Sj}=R_{i\Sj\Ss} \;|\; i \leftarrow \Sj ]~E[\tilde{C}_{\Sj}\;|\; i \leftarrow \Sj, R_{i\Sj}=R_{i\Sj\Ss}] \\
&=\sum_{i \in \cM} \sum_{\Ss} x_{i\Sj} \Pr[R_{i\Sj}=R_{i\Sj\Ss} \;|\; i \leftarrow \Sj ]~E[\tilde{C}_{\Sj}\;|\; i \leftarrow \Sj, R_{i\Sj}=R_{i\Sj\Ss}] 
\end{align*}

As stated in 
Observation~ \ref{obs:secondStep-properties-hold}, the event $i \leftarrow \Sj$
is independent of $R_{i\Sj}=R_{i\Sj\Ss}$, which happens with  probability $\frac{x_{i\Sj\Ss}}{x_{i\Sj}}$. Thus,  we can simplify the above as follows.
\begin{equation}
\label{eq:main-expectation}
 \E[\tilde C_{\Sj}]=\sum_{i \in M} \sum_{\Ss} x_{i\Sj\Ss} \E[\tilde{C}_{\Sj}\;|\;i \leftarrow \Sj, R_{i\Sj}=R_{i\Sj\Ss}] 
\end{equation} 
From now on, we fix machine $i$. Our key lemma in the analysis is the following. 
\begin{lemma}
\label{lem:main-sched}
  For each job $\Sj$, we have
$$\sum_{\Ss} x_{i\Sj\Ss} E[\tilde{C}_{\Sj}\;|\; i \leftarrow \Sj, R_{i\Sj}=R_{i\Sj\Ss}]  \le 1.488 \sum_{\Ss} x_{i\Sj\Ss}(\Ss+p_{i\Sj}) $$

\end{lemma}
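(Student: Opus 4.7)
The plan is to bound the expected completion time of $\Sj$ on machine $i$ conditional on $i \leftarrow \Sj$ and $R_{i\Sj}=R_{i\Sj\Ss}$ by decomposing it as $p_{i\Sj}$ plus the delay from every other job $j' \neq \Sj$ that is also assigned to $i$ and scheduled before $\Sj$ (i.e., $\theta_{ij'} < \theta_{i\Sj}$). By linearity of expectation it suffices to estimate, for each candidate rectangle $R_{ij's'}$, the conditional probability
\[
\Pr\bigl[\,i \leftarrow j' \;\wedge\; R_{ij'}=R_{ij's'}\;\wedge\;\theta_{ij'} < \theta_{i\Sj}\;\bigm|\; i\leftarrow \Sj,\; R_{i\Sj}=R_{i\Sj\Ss}\bigr]
\]
multiplied by $p_{ij'}$. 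First I would verify, using Observation~\ref{obs:secondStep-properties-hold} together with the uniform choice of $\tau_{ij'}$ inside $(0,p_{ij'}]$, that fixing $\theta_{i\Sj}=\theta$ the contribution of $R_{ij's'}$ to the expected delay equals the area of the (shifted) rectangle $R_{ij's'}$ that lies to the left of $\theta$, weighted by the appropriate correlation factor between the events $\{i\leftarrow j'\}$ and $\{i\leftarrow \Sj\}$.

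Next I would run the classical area-packing argument: by LP feasibility, at any time at most one unit of rectangle area is scheduled on machine $i$, so summing the areas to the left of $\theta$ over all rectangles $R_{ij's'}$ (including $\Sj$'s own) is at most $\theta$. Subtracting off the portion that comes from $\Sj$'s representative rectangle $R_{i\Sj\Ss}$ (namely $\tau_{i\Sj}\, x_{i\Sj\Ss}$ if we do not inflate it by shift, handled carefully) and averaging $\tau_{i\Sj}$ uniformly over $(0,p_{i\Sj}]$ yields the baseline $1.5$-bound of Skutella, namely $\Ss + 1.5\, p_{i\Sj}$. The improvement over $1.5$ comes from two separate sources, which I would handle one at a time and then add.

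The first source is the rightward shift of $\hat s$ by $0.34\,(\Ss+x_{i\Sj}p_{i\Sj})$ or $0.34\,\Ss$, which inflates $\theta_{i\Sj}$ deterministically; but the same shift is applied to every other rectangle $R_{ij's'}$ as well, so after careful bookkeeping the net change in the delay bound is an additive term proportional to $\Ss$ that is \emph{negative} precisely when $\Sj$'s rectangle is ``good'' (i.e.\ starts late or has large $x_{i\Sj}$). For a good rectangle the shift alone already beats $1.5$ by a constant factor, so the whole contribution of good representative rectangles to the left-hand side of Lemma~\ref{lem:main-sched} is at most $1.488\sum_{\Ss} x_{i\Sj\Ss}(\Ss+p_{i\Sj})$ by direct calculation.

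The second source, which dominates for bad $R_{i\Sj\Ss}$, is the strong negative correlation of the fourth property of Theorem~\ref{thm:rounding-rephrase}. For bad jobs grouped together on machine $i$, the probability of a co-assignment is reduced by a factor of roughly $\tfrac{2\exp(9/100)}{e+1}\simeq 0.5945$ rather than $1$ (the goodness thresholds are set precisely to make $x_{ij}+x_{ij'}\le 2\cdot\tfrac{9}{100}$ inside groups). A job $j'$ whose $\theta_{ij'}$ is close to $\theta_{i\Sj}$ is grouped with $\Sj$ whenever their $\theta$ values land in the same random interval $I_k$ and both coins come up heads (probability at least $u/2 = 1/4$ in the worst case, and $u=1/2$ conditionally when $\theta$-values are inside the same interval); and because both jobs are bad, their representative rectangles start near $0$ and have height at most $9/100$, which is what allows $\sum_{j:j\leadsto_i I_k}x_{ij}\le 1$ so that Theorem~\ref{thm:rounding-rephrase} actually applies. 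Averaging the reduced co-assignment probability over the random offset $\rho$ that defines the grid, the contribution of the delay coming from bad jobs scheduled before $\Sj$ is multiplied by a factor strictly below $1$; a computation (optimizing the shift $0.34$ against the grouping thresholds $9/100$ and $1/10$) shows that after combining both sources the overall constant is at most $1.488$.

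The main obstacle I expect is the second source: one has to handle simultaneously (a) the randomness of $\rho$ and the $\theta$ values of the bad jobs, which determine whether $j'$ and $\Sj$ end up in the same group; (b) the fact that $\Sj$ might itself be a good job while some of the delaying $j'$ are bad, in which case the grouping depends on $\theta_{ij'}$ alone; and (c) the conditioning on $i\leftarrow \Sj$ and on the representative rectangle, which must be propagated through the bound of the fourth property without losing the negative-correlation factor. I would deal with (a) by a direct integral over $\rho$, with (b) by always letting $\Sj$ define its own singleton or be placed in the bad-group machinery independently of its goodness using Observation~\ref{obs:secondStep-properties-hold}, and with (c) by applying Theorem~\ref{thm:rounding-rephrase} conditionally on the fixed $\theta$'s and $\rho$, then de-conditioning. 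The final constant $1.488$ emerges from balancing the shift parameter $0.34$ against the stronger correlation coefficient $\tfrac{e^{x_{ij}}+e^{x_{ij'}}}{e+1}$ available for bad jobs.
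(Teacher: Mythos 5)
Your outline matches the paper's strategy at the top level (baseline $1.5$ area argument, case split on good versus bad representative rectangles, shift plus strong negative correlation), but as written it has genuine gaps that prevent it from being a proof. The most serious one is in the bad-job case: you implicitly assume that the delaying volume near $\theta_{i\Sj}$ comes from bad jobs that get grouped with $\Sj$, so that the fourth property of Theorem~\ref{thm:rounding-rephrase} applies to all of it. That is false --- the rectangles occupying the time just before $\theta_{i\Sj}$ may belong to \emph{good} jobs on machine $i$ (or to no single rectangle at all), and for those no negative-correlation gain is available. The paper resolves this by rewriting the LP solution as a convex combination of configurations $f$ of disjoint rectangles and then splitting on whether the interval $H=(0.1\theta_{i\Sj},0.97\theta_{i\Sj}]$ is covered by a bad rectangle of $f$, by a good rectangle of $f$, or by neither; in the latter two cases the $0.34$-shift alone pushes at least $0.03\theta_{i\Sj}$ of volume past $\theta_{i\Sj}$, and only in the first case is strong negative correlation invoked. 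Your proposal has no mechanism for this trichotomy, and without it the claimed gain does not materialize uniformly over configurations.

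Two further steps you assert without justification are load-bearing. First, the group containing $\Sj$ and $j'$ is only formed when $\sum_{j''\leadsto_i I}x_{ij''}\le 1$; this is an event whose probability must be bounded from below (the paper proves it exceeds $0.5317$ via a per-configuration counting lemma showing $\sum_{R_{ijs}\in f \text{ bad}}|I\cap(\hat s,\hat s+p_{ij}]|/p_{ij}\le 1$, hence expected associated height at most $u=1/2$, followed by a Chernoff bound with $K=9/100$). Saying the small heights ``allow'' the constraint is not enough --- the probability of the constraint holding multiplies directly into the gain. Second, one needs $\E[g]\le \frac{11}{20}\theta_{i\Sj}$ for the random grid start $g$, since only the portion of the bad rectangle after $g$ can land in the same interval as $\theta_{i\Sj}$; without this the factor in front of the negative term cannot be computed. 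Finally, the statement is purely quantitative, and ``a computation shows the constant is at most $1.488$'' is the entire content of the lemma: the constants $0.34$, $9/100$, $1/10$, $0.97$, $\eta<0.589$, $u=1/2$, $0.82$, and $11/20$ must actually be combined (yielding $1-0.1323(1-\eta)(1-\tfrac{11}{20})<0.976$ and hence $\hat s^*+1.488\,p_{i\Sj}$) for the claim to stand. Also note a small numerical slip: with the threshold $9/100$ the correlation coefficient is $\tfrac{2e^{0.09}}{e+1}<0.589$, not $0.5945$ (which corresponds to the threshold $1/10$).
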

Plugging this inequality into Eqn.(\ref{eq:main-expectation}) would immediately imply Theorem~\ref{thm:completion-time}: 
$$\E[\tilde C_{\Sj}] \le 1.488 \sum_i \sum_{\Ss} x_{i\Sj\Ss}(\Ss+p_{i\Sj}),$$
\noindent as summing $\E[\tilde C_{\Sj}]$ over all jobs multiplied by their weight gives an upper bound of our algorithm's expected objective by $1.488$ times the LP optimum objective.

\medskip

The rest of this section is devoted to proving Lemma~\ref{lem:main-sched}.  Following observations will be useful for our analysis.
\begin{observation}
\label{obs:expectation-block}
For any job $\Sj$ and any rectangle $R_{i\Sj\Ss}$ we have
\begin{equation}
\label{eq:expectation-block}
E[\tilde{C}_{\Sj}\;|\; i \leftarrow \Sj, R_{ij^*} = R_{i\Sj\Ss}] = \int_{0}^{p_{i\Sj}} \frac{1}{p_{i\Sj}}E[\tilde{C}_{\Sj}\;|\; i \leftarrow \Sj, R_{ij^*} = R_{i\Sj\Ss},\tau_{i\Sj}=\tau] \; d\tau 
\end{equation}
\end{observation}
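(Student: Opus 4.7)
The plan is to recognize this as a direct application of the law of total expectation, conditioning on the random offset $\tau_{i\Sj}$. For any random variable $X$ and event $A$ with $\Pr[A] > 0$,
$$E[X \mid A] \;=\; \int E[X \mid A,\; \tau_{i\Sj} = \tau] \; f_{\tau_{i\Sj} \mid A}(\tau) \, d\tau,$$
so, taking $A := \{i \leftarrow \Sj\} \cap \{R_{i\Sj} = R_{i\Sj\Ss}\}$, the identity reduces to showing that the conditional density $f_{\tau_{i\Sj} \mid A}$ is exactly the uniform density $1/p_{i\Sj}$ on $(0, p_{i\Sj}]$.

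First I would dispose of the conditioning on $\{R_{i\Sj} = R_{i\Sj\Ss}\}$. In the rounding procedure of Section~\ref{sec:unrelated-algo}, the offset $\tau_{i\Sj}$ is drawn uniformly from $(0, p_{i\Sj}]$ and the representative rectangle $R_{i\Sj}$ is drawn according to the distribution $x_{i\Sj s}/x_{i\Sj}$, independently of $\tau_{i\Sj}$. Hence, conditional on $\{R_{i\Sj} = R_{i\Sj\Ss}\}$, the variable $\tau_{i\Sj}$ remains uniform on $(0, p_{i\Sj}]$.

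Next I would handle the conditioning on $\{i \leftarrow \Sj\}$ by appealing to Observation~\ref{obs:secondStep-properties-hold}, which guarantees that $\Pr[i \leftarrow \Sj] = x_{i\Sj}$ for any fixed values of the $\theta$-variables and of $\rho$; equivalently, $\{i \leftarrow \Sj\}$ is independent of $(\{\theta_{ij}\}_{i,j}, \rho)$. Since $\theta_{i\Sj} = \hat s + \tau_{i\Sj}$ is a deterministic function of $\tau_{i\Sj}$ once $R_{i\Sj}$ is fixed, this independence implies $\Pr[i \leftarrow \Sj \mid \tau_{i\Sj} = \tau, R_{i\Sj} = R_{i\Sj\Ss}] = x_{i\Sj}$ for every $\tau$, and a single application of Bayes' rule then gives $f_{\tau_{i\Sj} \mid A}(\tau) = 1/p_{i\Sj}$ on $(0, p_{i\Sj}]$. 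Substituting this uniform density back into the law of total expectation yields the claimed formula. I do not anticipate a genuine obstacle here: the only delicate point is verifying that the double conditioning does not distort the uniformity of $\tau_{i\Sj}$, and Observation~\ref{obs:secondStep-properties-hold} was engineered precisely to supply this.
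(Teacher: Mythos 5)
Your proposal is correct and matches the paper's own proof: both apply the law of total expectation over $\tau_{i\Sj}$ and then argue that the conditional density of $\tau_{i\Sj}$ given $\{i \leftarrow \Sj\}\cap\{R_{i\Sj}=R_{i\Sj\Ss}\}$ remains uniform, using the independence of $\tau_{i\Sj}$ from the representative-rectangle choice together with Observation~\ref{obs:secondStep-properties-hold} to decouple the assignment event. Your explicit Bayes'-rule step is just a slightly more spelled-out version of the same argument.
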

\begin{proof}
Recall that $\tau_{i\Sj}$ is chosen from $(0,p_{i\Sj}]$ uniformly at random. Using conditional expectation and the law of total expectation, we have,
\begin{align*}
&E[\tilde{C}_{\Sj}\;|\; i \leftarrow \Sj,R_{ij^*} = R_{i\Sj\Ss}] \\=& \int_{0}^{p_{i\Sj}} \Pr[\tau_{i\Sj}=\tau \;|\; i \leftarrow \Sj, R_{ij^*} = R_{i\Sj\Ss}]~E[\tilde{C}_{\Sj}\;|\; i \leftarrow \Sj, R_{ij^*} = R_{i\Sj\Ss},\tau_{i\Sj}=\tau] \; d\tau 
\end{align*}
Note that $\tau_{i\Sj}$ is independent of $R_{i\Sj}=R_{i\Sj\Ss}$, and 
$i \leftarrow j^*$ is independent of these two events as observed in Observation~\ref{obs:secondStep-properties-hold}. Thus, we have
  $\Pr[\tau_{i\Sj} \in [\tau, \tau + d\tau) \;|\; i \leftarrow \Sj, R_{i\Sj}=R_{i\Sj\Ss}] =\Pr[\tau_{i\Sj}\in [\tau, \tau + d\tau)]= \frac{d \tau}{p_{i\Sj}} $. This completes the proof. 
\end{proof}

For the sake of analysis we define notation $\hat{L}_{ijs}(\theta)$ for any rectangle $R_{ijs}$ and ant time point $\theta$ to be the length of part of the rectangle $R_{ijs}$ that appears before time $\theta$ \emph{after} shifting. Note that $\hat{L}_{ijs}(\theta)=\min\{\theta,\hat{s}+p_{ij}\}-\hat{s} $ if $\hat{s}<\theta$, otherwise $\hat{L}_{ijs}(\theta)=0$. 

The next observation measures the probability that a job $j$ will have a smaller $\theta$ value than the fixed job $\Sj$ on machine $i$. If this event occurs, and further, both jobs are assigned to the machine $i$, job $j$ will delay job $\Sj$. 
\begin{observation}
\label{obs:event_e}
For any two jobs $\Sj$ and $j$ on machine $i$,  any fixed value $\theta_{i\Sj}$ and any $R_{ijs}$, we have 
 $\Pr[\theta_{ij}\le \theta_{i\Sj}\; | \; R_{ij}=R_{ijs} ]=\frac{\hat{L}_{ijs}(\theta_{i\Sj})}{p_{ij}}$.
\end{observation}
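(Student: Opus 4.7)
My plan is to reduce the event to a statement about the uniform offset $\tau_{ij}$ and then do a short case analysis based on where $\theta_{i\Sj}$ sits relative to the (shifted) rectangle $R_{ijs}$.

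First, I would observe that by construction of the rounding, for every pair $(i,j)$ the algorithm samples $\tau_{ij}$ uniformly from $(0,p_{ij}]$ independently of the choice of representative rectangle $R_{ij}$. Therefore the conditional distribution of $\tau_{ij}$ given $R_{ij}=R_{ijs}$ is still uniform on $(0,p_{ij}]$. Moreover, once we condition on $R_{ij}=R_{ijs}$, the shifted starting time $\hat s$ is a deterministic function of $s$ and $x_{ij}$ (per the shifting rule in Section~\ref{sec:unrelated-algo}), and $\theta_{ij}=\hat s + \tau_{ij}$. Hence
\[
\Pr[\theta_{ij}\le \theta_{i\Sj}\mid R_{ij}=R_{ijs}] \;=\; \Pr[\tau_{ij}\le \theta_{i\Sj}-\hat s],
\]
with $\tau_{ij}\sim \mathrm{Uniform}(0,p_{ij}]$.

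Next I would split on three cases determined by the definition of $\hat L_{ijs}(\theta_{i\Sj})$. If $\hat s\ge \theta_{i\Sj}$, the right-hand side of the displayed inequality is non-positive, so the probability is $0$; and indeed $\hat L_{ijs}(\theta_{i\Sj})=0$ by definition, so both sides agree. If $\hat s<\theta_{i\Sj}\le \hat s+p_{ij}$, then $\theta_{i\Sj}-\hat s\in(0,p_{ij}]$, so by uniformity the probability equals $(\theta_{i\Sj}-\hat s)/p_{ij}$, which matches $\hat L_{ijs}(\theta_{i\Sj})/p_{ij}=(\theta_{i\Sj}-\hat s)/p_{ij}$. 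Finally, if $\theta_{i\Sj}>\hat s+p_{ij}$, the event $\tau_{ij}\le \theta_{i\Sj}-\hat s$ is certain, so the probability is $1$, and here $\hat L_{ijs}(\theta_{i\Sj})=p_{ij}$, so the ratio is also $1$.

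There is really no hard step here; the only thing worth being explicit about is the independence of $\tau_{ij}$ from the selection of $R_{ij}$, which is what lets us drop the conditioning and use $\tau_{ij}\sim\mathrm{Uniform}(0,p_{ij}]$. All three cases then fall out of the piecewise definition of $\hat L_{ijs}(\cdot)$ by direct substitution, which is the shortest route to the claimed identity.
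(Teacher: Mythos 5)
Your proposal is correct and follows essentially the same route as the paper: both reduce the event $\theta_{ij}\le\theta_{i\Sj}$ to a condition on the uniform offset $\tau_{ij}$ given the fixed representative rectangle, and then match the resulting interval against the piecewise definition of $\hat{L}_{ijs}(\cdot)$. Your explicit three-case split is just an unfolding of the paper's one-line identification of the event with $\tau_{ij}\in(0,\hat{L}_{ijs}(\theta_{i\Sj})]$, so there is no substantive difference.
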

\begin{proof}
Since the representative rectangle of job $j$ on machine $i$ is fixed, the event $\theta_{ij}=\hat{s}+\tau_{ij}\le \theta_{i\Sj}$ happens iff $\tau_{ij}\in (0,\max\{0, \min\{\theta_{i\Sj},\hat{s}+p_{ij}\}-\hat{s}\}]=(0,\hat{L}_{ijs}(\theta_{i\Sj})]$. Since $\tau_{ij}$ is chosen from $(0,p_{ij}]$ uniformly at random, the event $\tau_{ij} \in (0,\hat{L}_{ijs}(\theta_{i\Sj})]$ occurs with probability $\frac{\hat{L}_{ijs}(\theta_{i\Sj})}{p_{ij}}$. 
\end{proof}

\subsection{Proof of the Lemma~\ref{lem:main-sched}}

We now get back to proving our main lemma, Lemma~\ref{lem:main-sched}. Towards that end, we first express $j^*$'s expected completion times in terms of the quantities we defined. The following lemma breaks down $j^*$'s expected completion time conditioned on the fixed job $j^*$ being assigned to machine $i$, the rectangle $R_{i\Sj\Ss}$ being selected as $\Sj$'s representative on machine $i$ and the value of $\theta_{i, j^*}$ being fixed. Note that $x_{ijs}\cdot \hat{L}_{ijs}(\theta_{i\Sj})$  is exactly the volume of rectangle $R_{ijs}$ appearing before time $\theta_{i\Sj}$ after shifting and $\sum_{j ,s}x_{ijs}\cdot \hat{L}_{ijs}(\theta_{i\Sj}) - \sum_{s}x_{i\Sj s}\cdot \hat{L}_{i\Sj s}(\theta_{i\Sj})$ is total volume of all rectangles excluding those of job $\Sj$ appearing before time $\theta_{i\Sj}$ after shifting---this is exactly how much other jobs would delay job $j^*$ if we used the standard independent rounding. The gain comes from representative rectangles that are grouped together with the fixed $R_{i\Sj\Ss}$, in which jobs of such rectangles are less likely to be assigned to the same machine $i$ due to the strong negative correlation property. 

For brevity, we  may shorten $R_{ij} = R_{ijs}$ simply as $R_{ijs}$ particularly when it is stated in the condition of a probability or expectation. Throughout this paper, we set $\eta := \frac{2\exp(0.09)}{e+1} < 0.589$, which comes from the following: Recall the fourth property of Theorem~\ref{thm:rounding-rephrase}.  For any two distinct bad jobs $j$ and $j'$, as $x_{ij},x_{ij'} <9/100$, we have $\Pr[i \leftarrow j \wedge i \leftarrow j' \;|\; j \stackrel{i}{\sim} j'] \le \frac{2\exp(0.09)}{e+1} x_{ij}x_{ij'} <  0.589 x_{ij}x_{ij'}$.

\begin{lemma}
\label{lem:Expectation-block-tau}
For any job $\Sj$, representative rectangle $R_{i\Sj\Ss}$ and any fixed $\tau_{i\Sj} \in (0, p_{i\Sj}]$, we have,
\begin{align*}
 	& E\Big[\tilde C_{j^*} \; \Big|  \;i \leftarrow \Sj , R_{i\Sj\Ss},\tau_{i\Sj}\Big]  - p_{i\Sj} \\
&= \sum_{j ,s}x_{ijs}\cdot \hat{L}_{ijs}(\theta_{i\Sj}) - \sum_{s}x_{i\Sj s}\cdot \hat{L}_{i\Sj s}(\theta_{i\Sj})- (1 - \eta) \sum_{j \neq \Sj,s}x_{ijs}\cdot p_{ij} \cdot \Pr[e_{ij}, j \stackrel{i}{\sim} \Sj\;|\; R_{ijs},R_{i\Sj\Ss},\tau_{i\Sj}]
\end{align*}
where $e_{ij}$ denotes the event that $\theta_{ij}\le \theta_{i\Sj}$.

\end{lemma}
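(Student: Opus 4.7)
The plan is to compute $j^*$'s expected delay from every other job $j$ on machine $i$, and then to apply the strong negative correlation from Theorem~\ref{thm:rounding-rephrase} in a careful conditional fashion. Since the jobs assigned to $i$ are executed in increasing order of their $\theta$-values with no idle time,
\begin{equation*}
\tilde C_{j^*}\;=\;p_{ij^*}+\sum_{j\neq j^*}p_{ij}\,\mathbf{1}[i\leftarrow j\wedge e_{ij}].
\end{equation*}
Let $A := \{i\leftarrow j^*,R_{ij^*s^*},\tau_{ij^*}\}$ (this fixes $\theta_{ij^*}=\hat s^*+\tau_{ij^*}$). By linearity of expectation,
\begin{equation*}
E[\tilde C_{j^*}\mid A]-p_{ij^*}\;=\;\sum_{j\neq j^*}p_{ij}\,\Pr[i\leftarrow j\wedge e_{ij}\mid A].
\end{equation*}
Each summand will be evaluated by further decomposing on $j$'s representative rectangle $R_{ij}=R_{ijs'}$ and offset $\tau_{ij}=\tau'$, which together determine whether $e_{ij}$ occurs.

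First I would establish two independence facts via Observation~\ref{obs:secondStep-properties-hold}: (i)~$R_{ij},\tau_{ij}$ are chosen independently of $R_{ij^*s^*},\tau_{ij^*}$; (ii)~$\Pr[i\leftarrow j^*\mid R_{ij},\tau_{ij},R_{ij^*s^*},\tau_{ij^*}]=x_{ij^*}$, because conditional on all $\theta$-values and $\rho$ this probability equals $x_{ij^*}$ regardless, and the identity is preserved under averaging. Combined, these give $\Pr[R_{ij}=R_{ijs'},\tau_{ij}=\tau'\mid A]=(x_{ijs'}/x_{ij})(d\tau'/p_{ij})$, while $e_{ij}$ reduces to the deterministic event $\mathbf{1}[\hat s'+\tau'\le\theta_{ij^*}]$.

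Next, set $B:=\{R_{ij}=R_{ijs'},\tau_{ij}=\tau', R_{ij^*s^*},\tau_{ij^*}\}$. Conditional on all $\theta$-values and $\rho$ the grouping on $i$ is determined, and Theorem~\ref{thm:rounding-rephrase}, together with the fact that only bad jobs (heights $<9/100$) can ever be grouped on $i$, gives
\begin{equation*}
\Pr[i\leftarrow j\wedge i\leftarrow j^*\mid\text{all }\theta\text{'s},\rho]\;\le\; x_{ij}x_{ij^*}\bigl(1-(1-\eta)\mathbf{1}[j\stackrel{i}{\sim}j^*]\bigr),
\end{equation*}
where $\eta=2\exp(0.09)/(e+1)$ comes from plugging $x_{ij},x_{ij^*}<9/100$ into the coefficient $(e^{x_{ij}}+e^{x_{ij^*}})/(1+e)$ of the fourth property. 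Averaging over the remaining $\theta$-values and $\rho$ given $B$ and dividing by $\Pr[i\leftarrow j^*\mid B]=x_{ij^*}$ yields
\begin{equation*}
\Pr[i\leftarrow j\mid i\leftarrow j^*, B]\;\le\; x_{ij}\bigl(1-(1-\eta)\Pr[j\stackrel{i}{\sim}j^*\mid B]\bigr).
\end{equation*}

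Combining the above, multiplying by $p_{ij}$, and integrating $\tau'$ over $[0,p_{ij}]$: Observation~\ref{obs:event_e} converts $\int_0^{p_{ij}}\mathbf{1}[\hat s'+\tau'\le\theta_{ij^*}]\,d\tau'$ into $\hat L_{ijs'}(\theta_{ij^*})$, while the $(1-\eta)$-piece becomes $(1-\eta)\,p_{ij}\,\Pr[e_{ij}\wedge j\stackrel{i}{\sim}j^*\mid R_{ijs'},R_{ij^*s^*},\tau_{ij^*}]$ once the integral is recognized (by reversing the conditioning on $\tau_{ij}$, which has density $1/p_{ij}$ independent of the remaining conditioning variables) as that joint probability. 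Summing over $s'$ and over $j\neq j^*$, and rewriting $\sum_{j\neq j^*,s}x_{ijs}\hat L_{ijs}(\theta_{ij^*})=\sum_{j,s}x_{ijs}\hat L_{ijs}(\theta_{ij^*})-\sum_s x_{ij^*s}\hat L_{ij^*s}(\theta_{ij^*})$ produces the right-hand side of the lemma.

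The main obstacle will be executing the second step cleanly. One has to justify that the negative-correlation bound from Theorem~\ref{thm:rounding-rephrase} descends to the partial freezing $B$ of the $\theta$-profile, and that the ratio $\Pr[i\leftarrow j\wedge i\leftarrow j^*\mid B]/\Pr[i\leftarrow j^*\mid B]$ simplifies precisely because Observation~\ref{obs:secondStep-properties-hold} forces the denominator to equal $x_{ij^*}$ no matter how we freeze the $\theta$'s. A subsidiary subtlety is that the indicator $\mathbf{1}[j\stackrel{i}{\sim}j^*]$ is itself random given $B$ (depending on the other $\theta$-values, $\rho$, and the grouping coin flips), so one has to cleanly absorb its $B$-conditional expectation into $\Pr[j\stackrel{i}{\sim}j^*\mid B]$.
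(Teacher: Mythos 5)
Your proposal is correct and follows essentially the same route as the paper: linearity of expectation over the per-job delays, splitting according to whether $j\stackrel{i}{\sim}j^*$, invoking the negative and strong negative correlation properties of Theorem~\ref{thm:rounding-rephrase} (valid for any fixed $\theta$-values and $\rho$ by Observation~\ref{obs:secondStep-properties-hold}), and then de-conditioning via Observation~\ref{obs:event_e} to recover the $\hat L_{ijs}$ terms. Note that, exactly as in your derivation, the paper's own proof actually establishes an upper bound (its third displayed line is a ``$\leq$''), so the ``$=$'' in the lemma statement should be read as ``$\leq$''; your version is consistent with what is actually proved and used downstream.
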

\begin{proof}
For brevity, we omit the fixed $\tau_{i\Sj}$ and representative rectangle $R_{i\Sj\Ss}$ from the condition. Then, we have,
\begin{align*}
&E\Big[\tilde C_{\Sj} \;\Big|  \;i \leftarrow \Sj \Big]  - p_{i\Sj} \nonumber\\
=&\sum_{ j \neq j^*, e_{ij} } \Pr\Big[i \leftarrow j \; \Big|\; i \leftarrow \Sj \Big] \cdot p_{ij} \\
=&\;\;\;\;\;\; \sum_{ j \neq j^*,e_{ij}, j \stackrel{i}{\nsim} \Sj  } \Pr\Big[i \leftarrow j \; \Big|\; i \leftarrow \Sj \Big] \cdot p_{ij} 
 +\sum_{ j \neq j^*, e_{ij}, j \stackrel{i}{\sim} \Sj  } \Pr\Big[i \leftarrow j \; \Big|\; i \leftarrow \Sj \Big] \cdot p_{ij} \\
\leq&\;\;\;\;\;\; \sum_{ j \neq j^*, e_{ij}, j \stackrel{i}{\nsim} \Sj  } x_{ij} \cdot p_{ij} 
   +\eta \sum_{ j \neq j^*, e_{ij}, j \stackrel{i}{\sim} \Sj  } x_{ij} \cdot p_{ij} \\
=&\;\;\;\;\;\; \sum_{ j \neq j^*, e_{ij}  } x_{ij} \cdot p_{ij}   -(1-\eta) \sum_{ j \neq j^*, e_{ij}, j \stackrel{i}{\sim} \Sj  } x_{ij} \cdot p_{ij} 
\end{align*}
where the inequality follows from negative and strong negative correlation properties of our randomized rounding algorithm stated in Theorem~\ref{thm:rounding-rephrase}.

Note that fixing the representative rectangle $R_{i\Sj\Ss}$ and the value of $\tau_{ij^*}$ fixes the value of $\theta_{i\Sj}$. Then, thanks to Observation~\ref{obs:event_e}, the first term becomes 
\begin{align*}
&\sum_{j \neq \Sj,s} x_{ij} \cdot p_{ij} \cdot \Pr[R_{ij} = R_{ijs},e_{ij}]=\sum_{j \neq \Sj,s} x_{ij} \cdot p_{ij} \cdot \Pr[e_{ij}\;|\; R_{ijs}] \cdot \Pr[R_{ij} = R_{ijs}]\\
&=\sum_{j \neq \Sj,s}x_{ij}\cdot\frac{x_{ijs}}{x_{ij}}\cdot p_{ij} \cdot\frac{\hat{L}_{ijs}(\theta_{i\Sj})}{p_{ij}}\quad \quad \mbox{[Observation~\ref{obs:event_e}]}
\\
&=\sum_{j \neq \Sj,s}x_{ijs}\cdot \hat{L}_{ijs}(\theta_{i\Sj})\\
&=\sum_{j ,s}x_{ijs}\cdot \hat{L}_{ijs}(\theta_{i\Sj}) - \sum_{s}x_{i\Sj s}\cdot \hat{L}_{i\Sj s}(\theta_{i\Sj})
\end{align*}
The second term becomes
\begin{align*}
&(1 - \eta) \sum_{j \neq \Sj,s} x_{ij} \cdot p_{ij} \cdot \Pr[R_{ijs},e_{ij}, j \stackrel{i}{\sim} \Sj]\\
&=(1 - \eta) \sum_{j \neq \Sj,s} x_{ij} \cdot p_{ij} \cdot \Pr[e_{ij}, j \stackrel{i}{\sim} \Sj\;|\; R_{ijs}] \cdot \Pr[R_{ijs}]\\
&=(1 - \eta) \sum_{j \neq \Sj,s}x_{ijs}\cdot p_{ij} \cdot \Pr[e_{ij}, j \stackrel{i}{\sim} \Sj\;|\; R_{ijs}]
\end{align*}
Thus, bringing back the omitted conditions $\tau_{ij^*}$ and $R_{i\Sj\Ss}$ to the equation, we have the lemma. 
\end{proof}
\begin{lemma}
\label{lem:rect-upper-ind}
For each job $\Sj$, and any representative rectangle $R_{i\Sj\Ss}$,  
we have 
\begin{equation}
\label{eq:rect-upper-ind}
E[\tilde C_{j^*} \;|  \;i \leftarrow j^*, R_{i\Sj\Ss}] \leq \hat{s}^*+1.5p_{i\Sj}-\int_{0}^{p_{i\Sj}}  \frac{1}{p_{i\Sj}}\cdot \sum_{s}x_{i\Sj s}\cdot \hat{L}_{i\Sj s}(\hat{s}^*+\tau)\; d\tau
\end{equation}
\end{lemma}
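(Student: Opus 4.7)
The plan is to combine Lemma~\ref{lem:Expectation-block-tau} with an LP-capacity bound on the total volume of shifted rectangles before a given time, then average out $\tau_{i\Sj}$ using Observation~\ref{obs:expectation-block}.

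First, I would start from Lemma~\ref{lem:Expectation-block-tau}. Since $(1 - \eta) > 0$ and every conditional probability is nonnegative, dropping the entire $(1 - \eta)\sum_{j \neq \Sj,s} \ldots$ term yields the weaker but cleaner upper bound
\[
\E[\tilde C_{\Sj} \mid i \leftarrow \Sj, R_{i\Sj\Ss}, \tau_{i\Sj}] \;\leq\; p_{i\Sj} + \sum_{j,s} x_{ijs}\, \hat L_{ijs}(\theta_{i\Sj}) - \sum_{s} x_{i\Sj s}\, \hat L_{i\Sj s}(\theta_{i\Sj}).
\]
This loses the negative-correlation gain but is enough for the lemma, which is just the ``baseline'' $1.5$-type bound per rectangle.

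Next, I would bound the first sum by $\theta_{i\Sj}$. The quantity $\sum_{j,s} x_{ijs}\, \hat L_{ijs}(\theta)$ is the total area on machine $i$ of the \emph{shifted} rectangles lying before time $\theta$. Because every rectangle is shifted strictly to the right, i.e. $\hat s \geq s$, for any fixed $\theta$ one has pointwise $\hat L_{ijs}(\theta) \leq L_{ijs}(\theta)$, where $L_{ijs}$ denotes the unshifted length before $\theta$. The LP capacity constraint $\sum_{j,\, s \in (t - p_{ij}, t]} x_{ijs} \leq 1$ integrates over $t \in (0,\theta]$ to give $\sum_{j,s} x_{ijs}\, L_{ijs}(\theta) \leq \theta$. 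Combining these and using $\theta_{i\Sj} = \hat s^* + \tau_{i\Sj}$ gives
\[
\E[\tilde C_{\Sj} \mid i \leftarrow \Sj, R_{i\Sj\Ss}, \tau_{i\Sj}] \;\leq\; p_{i\Sj} + \hat s^* + \tau_{i\Sj} - \sum_{s} x_{i\Sj s}\, \hat L_{i\Sj s}(\hat s^* + \tau_{i\Sj}).
\]

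Finally, I would apply Observation~\ref{obs:expectation-block} to average over $\tau_{i\Sj}$ uniformly in $(0, p_{i\Sj}]$. The constant $p_{i\Sj}$ contributes $p_{i\Sj}$, the constant $\hat s^*$ contributes $\hat s^*$, and $\tau_{i\Sj}$ integrates to $p_{i\Sj}/2$, summing to $\hat s^* + 1.5\, p_{i\Sj}$. The remaining $\sum_s x_{i\Sj s}\, \hat L_{i\Sj s}(\hat s^* + \tau)$ stays inside the integral, producing exactly the subtracted term in the statement of Lemma~\ref{lem:rect-upper-ind}.

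The main obstacle is the capacity bound $\sum_{j,s} x_{ijs}\, \hat L_{ijs}(\theta) \leq \theta$ \emph{after} shifting. Everything else is routine bookkeeping: once one observes that shifting rectangles rightward can only move area out of the interval $[0, \theta]$ and never into it, the LP feasibility of the unshifted solution transfers directly, and the rest is a one-line integration.
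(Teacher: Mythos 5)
Your proposal is correct and follows essentially the same route as the paper's proof: drop the nonnegative $(1-\eta)$ term from Lemma~\ref{lem:Expectation-block-tau}, bound the total shifted volume before $\theta_{i\Sj}$ by $\theta_{i\Sj}$ using the LP capacity constraint together with the fact that rightward shifting only removes area from $(0,\theta_{i\Sj}]$, and then integrate over $\tau_{i\Sj}$. Your explicit justification of the post-shift capacity bound (pointwise $\hat L_{ijs}(\theta)\le L_{ijs}(\theta)$ plus integrating the LP constraint over $t\in(0,\theta]$) is a slightly more careful rendering of the step the paper states in one line.
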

\begin{proof}
The trivial lower bound for the second negative term of the lemma~\ref{lem:Expectation-block-tau} is zero. Thus 
$$E[\tilde C_{j^*} \; |  \;i \leftarrow \Sj , R_{i\Sj\Ss},\tau_{i\Sj}]  \leq p_{i\Sj}+
\sum_{j ,s}x_{ijs}\cdot \hat{L}_{ijs}(\theta_{i\Sj}) - \sum_{s}x_{i\Sj s}\cdot \hat{L}_{i\Sj s}(\theta_{i\Sj}) $$

Note that the total volume of all rectangles appearing before time $\theta$ is at most $\theta$. Further, $\sum_{j ,s}x_{ijs}\cdot \hat{L}_{ijs}(\theta)$ is total volume of all rectangles appearing before time $\theta$ after shifting. Since we shift rectangles to the right side, we have $\sum_{j ,s}x_{ijs}\cdot \hat{L}_{ijs}(\theta)\le \theta$. Thus  we have 
\begin{align*}
E[\tilde C_{j^*} \; |  \;i \leftarrow \Sj , R_{i\Sj\Ss},\tau_{i\Sj}]  &= p_{i\Sj}+
\theta_{i\Sj}- \sum_{s}x_{i\Sj s}\cdot \hat{L}_{i\Sj s}(\theta_{i\Sj})\\
&=p_{i\Sj}+\hat{s}^*+\tau_{i\Sj}-\sum_{s}x_{i\Sj s}\cdot \hat{L}_{i\Sj s}(\theta_{i\Sj})
\end{align*}
Thus, taking the integral on the value of $\tau_{ij^*}$, we have, 
\begin{align*}
E[\tilde C_{j^*} \; |  \;i \leftarrow \Sj , R_{i\Sj\Ss}] &\le p_{i\Sj}+ \hat{s}^*+\int_{0}^{p_{i\Sj}} \tau\cdot \frac{1}{p_{i\Sj}} \; d\tau- \int_{0}^{p_{i\Sj}} \big(\sum_{s}x_{i\Sj s}\cdot \hat{L}_{i\Sj s}(\hat{s}^*+\tau)\big)\cdot \frac{1}{p_{i\Sj}} \; d\tau\\
&\le 1.5 \cdot p_{i\Sj}+ \hat{s}^*- \int_{0}^{p_{i\Sj}} \big(\sum_{s}x_{i\Sj s}\cdot \hat{L}_{i\Sj s}(\hat{s}^*+\tau)\big)\cdot \frac{1}{p_{i\Sj}} \; d\tau
\end{align*}
\noindent as desired. 
\end{proof}

\begin{lemma}
	\label{lem:all-rects-upper-ind}
For each job $j$ we have
$$\sum_{\Ss} x_{i\Sj\Ss}E[\tilde C_{j^*} \; |  \;i \leftarrow \Sj , R_{i\Sj\Ss}] \le (1.5-\frac{x_{i\Sj}}{2})x_{i\Sj}p_{i\Sj} \;+\; \sum_{\Ss} x_{i\Sj\Ss} \hat{s}^* $$

\end{lemma}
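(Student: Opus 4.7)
The plan is to integrate the per-rectangle bound from Lemma~\ref{lem:rect-upper-ind} against the distribution over the choice of representative rectangle. First I would multiply both sides of Lemma~\ref{lem:rect-upper-ind} by $x_{i\Sj\Ss}$ and sum over $\Ss$. Since $\sum_{\Ss} x_{i\Sj\Ss} = x_{i\Sj}$, the first two (easy) terms contribute exactly $\sum_{\Ss} x_{i\Sj\Ss}\hat{s}^{*} + 1.5\, x_{i\Sj}\, p_{i\Sj}$. Matching this against the target right-hand side reduces the lemma to establishing
\begin{equation}
  T \;:=\; \sum_{\Ss} x_{i\Sj\Ss}\,\frac{1}{p_{i\Sj}}\int_{0}^{p_{i\Sj}}\sum_{s} x_{i\Sj s}\,\hat{L}_{i\Sj s}(\hat{s}^{*}+\tau)\,d\tau \;\geq\; \tfrac{1}{2}\,x_{i\Sj}^{2}\,p_{i\Sj}, \label{eq:plan-target}
\end{equation}
which I in fact expect to hold with equality. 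Note that $\hat{s}^{*}$ is the shifted start of $R_{i\Sj\Ss}$ and thus implicitly depends on $\Ss$.

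Writing $p := p_{i\Sj}$ and denoting the shifted start of $R_{i\Sj s}$ by $\hat{s}_{s}$, define
$$f(s,\Ss) := \int_{0}^{p}\hat{L}_{i\Sj s}(\hat{s}_{\Ss}+\tau)\,d\tau,$$
so that $T = \tfrac{1}{p}\sum_{s,\Ss}x_{i\Sj s}\,x_{i\Sj\Ss}\,f(s,\Ss)$. The central step is to prove the symmetric identity
$$f(s,\Ss) + f(\Ss,s) \;=\; p^{2} \qquad\text{for every pair } s,\Ss.$$
Once this is in hand, symmetrizing the double sum gives
$$T \;=\; \frac{1}{2p}\sum_{s,\Ss}x_{i\Sj s}\,x_{i\Sj\Ss}\bigl(f(s,\Ss)+f(\Ss,s)\bigr) \;=\; \frac{p}{2}\,x_{i\Sj}^{2},$$
which is exactly \eqref{eq:plan-target} and completes the proof of the lemma.

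To prove the identity I would rewrite $\hat{L}_{i\Sj s}(\theta) = \int_{0}^{p}\mathbf{1}[\hat{s}_{s}+\tau'\le\theta]\,d\tau'$, turning $f(s,\Ss)$ into the area of the planar region $A_{s\to\Ss} := \{(\tau,\tau')\in(0,p]^{2}:\,\hat{s}_{s}+\tau' \le \hat{s}_{\Ss}+\tau\}$. Reflection across the diagonal $\tau\leftrightarrow\tau'$ is a measure-preserving involution of $(0,p]^{2}$ that carries the complement of $A_{s\to\Ss}$ onto a null-set perturbation of $A_{\Ss\to s}$, so $|A_{s\to\Ss}| + |A_{\Ss\to s}| = p^{2}$. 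The fact that both rectangles have the same length $p_{i\Sj}$, since they both correspond to job $\Sj$ on machine $i$, is exactly what makes the reflection a bijection of the \emph{same} square; this is the critical structural feature.

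I anticipate no significant obstacle: the argument reduces to a one-line geometric symmetrization, and no slack is lost in this reduction from Lemma~\ref{lem:rect-upper-ind}. The only items to verify are that swapping the summation indices $s$ and $\Ss$ is legitimate (both range over the rectangles of $\Sj$ on $i$) and that the identity does not depend on the particular shifting rule from Section~\ref{sec:unrelated-algo}; it uses only that the two rectangles share a common length. As a fallback, a short case analysis on $\Delta := \hat{s}_{\Ss}-\hat{s}_{s}$ (distinguishing $|\Delta|\ge p$ from $|\Delta|<p$) provides an elementary verification of $f(s,\Ss)+f(\Ss,s)=p^{2}$ in which the $p\Delta$ and $\Delta^{2}/2$ contributions telescope.
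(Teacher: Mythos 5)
Your proposal is correct and follows essentially the same route as the paper: both reduce the lemma (via Lemma~\ref{lem:rect-upper-ind}) to the pairwise identity $\int_{0}^{p_{i\Sj}}\bigl(\hat{L}_{i\Sj s}(\hat{s}^{*}+\tau)+\hat{L}_{i\Sj\Ss}(\hat{s}+\tau)\bigr)\frac{1}{p_{i\Sj}}\,d\tau=p_{i\Sj}$, i.e.\ Eqn.~(\ref{eq:mutalDelay}), and then collapse the double sum to $\frac{p_{i\Sj}}{2}x_{i\Sj}^{2}$. The only difference is in how that identity is verified: the paper uses a case analysis on whether the two shifted rectangles overlap and handles the diagonal terms separately via Eqn.~(\ref{eq:selfDelay}), whereas your reflection of the square $(0,p_{i\Sj}]^{2}$ across its diagonal proves it uniformly for all pairs (including $s=\Ss$), which is a cleaner verification of the same step and correctly uses only the fact that both rectangles have common length $p_{i\Sj}$.
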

\begin{proof}
Applying  Lemma~\ref{lem:rect-upper-ind} to all rectangles $\{R_{i\Sj\Ss}\}_{\Ss}$ of job $\Sj$ on machine $i$, we can say
\begin{equation}
\label{eq:all-rects-upper-ind}
\sum_{\Ss} x_{i\Sj\Ss}E[\tilde C_{j^*} \; |  \;i \leftarrow \Sj , R_{i\Sj\Ss}] \le \sum_{\Ss} x_{i\Sj\Ss} \Big( \hat{s}^*+1.5p_{i\Sj}-\int_{0}^{p_{i\Sj}}  \big(\sum_{s}x_{i\Sj s}\cdot \hat{L}_{i\Sj s}(\hat{s}^*+\tau)\big)\cdot \frac{1}{p_{i\Sj}} \; d\tau \Big)
\end{equation}
Consider negative term here.
\begin{align}
&\sum_{\Ss} x_{i\Sj\Ss}\int_{0}^{p_{i\Sj}}  \big(\sum_{s}x_{i\Sj s}\cdot \hat{L}_{i\Sj s}(\hat{s}^*+\tau)\big)\cdot \frac{1}{p_{i\Sj}} \; d\tau \nonumber\\
&=\sum_{\{\Ss,s\}:s\neq\Ss} x_{i\Sj\Ss}\cdot x_{i\Sj s} \int_{0}^{p_{i\Sj}}  \Big( \hat{L}_{i\Sj s}(\hat{s}^*+\tau) + \hat{L}_{i\Sj \Ss}(\hat{s}+\tau)\Big)\cdot \frac{1}{p_{i\Sj}} \; d\tau \nonumber\\
&+\sum_{\Ss}x_{i\Sj\Ss}^2 \int_{0}^{p_{i\Sj}} \hat{L}_{i\Sj \Ss}(\hat{s}^*+\tau)\cdot \frac{1}{p_{i\Sj}} \; d\tau \label{main-eq}
\end{align}

We can show that for any two different $\Ss$ and $s$ we have
\begin{equation}
\label{eq:mutalDelay}
\int_{0}^{p_{i\Sj}}  \Big( \hat{L}_{i\Sj s}(\hat{s}^*+\tau) + \hat{L}_{i\Sj \Ss}(\hat{s}+\tau)\Big)\cdot \frac{1}{p_{i\Sj}} \; d\tau = p_{i\Sj}
\end{equation}
W.l.o.g we can assume $\hat{s}^* <\hat{s}$. Consider the case that two rectangles $R_{i\Sj\Ss}$ and $R_{i\Sj s}$ do not overlap after shifting. In this case, for any $\tau$, $\hat{L}_{i\Sj \Ss}(\hat{s}+\tau)=p_{i\Sj}$. Thus the equality (\ref{eq:mutalDelay}) holds in this case. Now consider the case that two rectangles $R_{i\Sj\Ss}$ and $R_{i\Sj s}$ overlap after shifting. In this case when $\tau \in (0,\hat{ s}-\hat{s}^*]$, $\hat{L}_{i\Sj s}(\hat{s}^*+\tau)=0$. When $\tau \in (\hat{ s}-\hat{s}^*,p_{i\Sj}]$, $\hat{L}_{i\Sj s}(\hat{s}^*+\tau)=\hat{s}^*+\tau-\hat{s}$. For any $\tau$, $\hat{L}_{i\Sj \Ss}(\hat{s}+\tau)=\min\{\hat{s}+\tau-\hat{s}^*,p_{i\Sj}\}$. Thus we have  
\begin{align*}
&\int_{0}^{p_{i\Sj}}  \big( \hat{L}_{i\Sj s}(\hat{s}^*+\tau) + \hat{L}_{i\Sj \Ss}(\hat{s}+\tau)\big)\cdot \frac{1}{p_{i\Sj}} \; d\tau\\
&=\int_{\hat{ s}-\hat{s}^*}^{p_{i\Sj}}(\hat{s}^*+\tau-\hat{s})\cdot\frac{1}{p_{i\Sj}} \; d\tau\\
&+\int_{0}^{\Ss+p_{i\Sj}-s}(s+\tau-\Ss)\cdot\frac{1}{p_{i\Sj}} \; d\tau\\
&+\int_{\Ss+p_{i\Sj}-s}^{p_{i\Sj}}(p_{i\Sj})\cdot\frac{1}{p_{i\Sj}} \; d\tau\\
&=p_{i\Sj}
\end{align*}
Since $\hat{L}_{i\Sj \Ss}(\hat{s}^*+\tau)=\tau$ we  have 
\begin{equation}
\label{eq:selfDelay}
\int_{0}^{p_{i\Sj}} \hat{L}_{i\Sj \Ss}(\hat{s}^*+\tau)\big)\cdot \frac{1}{p_{i\Sj}} \; d\tau=p_{i\Sj}/2
\end{equation}
Applying Eqn.(\ref{eq:mutalDelay}) and Eqn.(\ref{eq:selfDelay}) in Eqn.(\ref{main-eq}),the negative term of Eqn.(\ref{eq:all-rects-upper-ind}) becomes:
\begin{align*}
&\sum_{\{\Ss,s\}:s\neq\Ss} x_{i\Sj\Ss}\cdot x_{i\Sj s} \cdot p_{i\Sj} +\sum_{\Ss}x_{i\Sj\Ss}^2 \cdot \frac{p_{i\Sj}}{2}\\
&= \frac{p_{i\Sj}}{2}\Big(\sum_{\Ss}x_{i\Sj\Ss}\Big)^2\\
&= \frac{p_{i\Sj\cdot} x_{i\Sj}^2}{2}
\end{align*}

Plugging this equality in Eqn.(\ref{eq:all-rects-upper-ind}) gives the lemma.
\end{proof}

\medskip
We now consider three cases in the following to prove Lemma~\ref{lem:main-sched}.

\subsubsection{When job $\Sj$ is good on machine $i$ because $x_{i\Sj}>\frac{9}{100}$}

The following corollary follows from Lemma~\ref{lem:all-rects-upper-ind}; the proof is deferred to the full version of this paper. 

\begin{corollary}
When job $\Sj$ is good on $i$ because $x_{i\Sj}>\frac{9}{100}$ we have
$$\sum_{\Ss} x_{i\Sj\Ss}E[\tilde C_{j^*} \; |  \;i \leftarrow \Sj , R_{i\Sj\Ss}] \le
 1.486 \sum_{\Ss}  x_{i\Sj\Ss}(\Ss+p_{i\Sj})$$
\end{corollary}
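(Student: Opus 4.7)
The plan is to invoke Lemma~\ref{lem:all-rects-upper-ind} directly and then substitute the explicit form of $\hat{s}^*$ dictated by the shifting rule for this regime. Recall that when $x_{i\Sj} \geq 9/100$, each rectangle $R_{i\Sj\Ss}$ is shifted to the right by $0.34(\Ss + x_{i\Sj} p_{i\Sj})$, so $\hat{s}^* = 1.34\,\Ss + 0.34\, x_{i\Sj} p_{i\Sj}$. Plugging this into Lemma~\ref{lem:all-rects-upper-ind} and using $\sum_{\Ss} x_{i\Sj\Ss} = x_{i\Sj}$, I would obtain
\[
\sum_{\Ss} x_{i\Sj\Ss} E[\tilde C_{j^*} \mid i\leftarrow \Sj, R_{i\Sj\Ss}]
\;\le\; \Bigl(1.5 - \tfrac{x_{i\Sj}}{2}\Bigr) x_{i\Sj} p_{i\Sj} \;+\; 1.34 \sum_{\Ss} x_{i\Sj\Ss}\, \Ss \;+\; 0.34\, x_{i\Sj}^2 p_{i\Sj}.
\]

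Next I would compare this bound with the target $1.486 \sum_{\Ss} x_{i\Sj\Ss}(\Ss + p_{i\Sj})$ coefficient by coefficient. For the coefficients of $\Ss$ in $\sum_{\Ss} x_{i\Sj\Ss}\,\Ss$, the LHS contributes $1.34$ while the RHS gives $1.486$, so that term is handled immediately. For the coefficients of $p_{i\Sj}$ (after pulling out one factor of $x_{i\Sj}$, which appears on both sides), it remains to check
\[
1.5 - \tfrac{x_{i\Sj}}{2} + 0.34\, x_{i\Sj} \;=\; 1.5 - 0.16\, x_{i\Sj} \;\le\; 1.486.
\]
This inequality is equivalent to $x_{i\Sj} \geq 0.014/0.16 = 0.0875$, which is satisfied under the assumption $x_{i\Sj} \geq 9/100 = 0.09$ (indeed $1.5 - 0.16 \cdot 0.09 = 1.4856 \le 1.486$).

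This step is the entire content of the corollary: the shifting rule was designed precisely so that, when the height $x_{i\Sj}$ is large, the positive $0.34\, x_{i\Sj} p_{i\Sj}$ contribution from $\hat{s}^*$ together with the gain $\tfrac{x_{i\Sj}}{2} p_{i\Sj}$ from Lemma~\ref{lem:all-rects-upper-ind} (which comes from the self-overlap of $j^*$'s own rectangles) together bring the $1.5$ factor down to $1.486$. There is no real obstacle; the only thing to be careful about is the book-keeping when expanding $\hat{s}^*$ inside the sum $\sum_{\Ss} x_{i\Sj\Ss}\hat{s}^*$, so that the $\Ss$ term and the $p_{i\Sj}$ term are separated cleanly before comparing with $1.486(\Ss+p_{i\Sj})$.
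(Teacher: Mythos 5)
Your proposal is correct and follows essentially the same route as the paper: both invoke Lemma~\ref{lem:all-rects-upper-ind}, expand $\hat{s}^* = 1.34\Ss + 0.34\,x_{i\Sj}p_{i\Sj}$ using the shifting rule for the regime $x_{i\Sj}\ge 9/100$, combine the $-\tfrac{x_{i\Sj}}{2}$ and $+0.34\,x_{i\Sj}$ terms into $1.5-0.16\,x_{i\Sj}\le 1.4856$, and compare the $\Ss$-coefficient $1.34$ against $1.486$. The arithmetic and the threshold check $x_{i\Sj}\ge 0.0875$ match the paper's computation exactly.
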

\begin{proof}
\begin{align*}
\sum_{\Ss} x_{i\Sj\Ss}E[\tilde C_{j^*} \; |  \;i \leftarrow \Sj , R_{i\Sj\Ss}] &\le (1.5-\frac{x_{i\Sj}}{2})x_{i\Sj}p_{i\Sj} \;+\; \sum_{\Ss} x_{i\Sj\Ss}\hat{s}^* \\
& = (1.5-\frac{x_{i\Sj}}{2})x_{i\Sj}p_{i\Sj} \;+\; \sum_{\Ss} x_{i\Sj\Ss}(1.34\Ss+0.34x_{i\Sj}p_{i\Sj})\\
&=(1.5-0.16 x_{i\Sj})x_{i\Sj}p_{i\Sj} \;+\; \sum_{\Ss} 1.34 \cdot x_{i\Sj\Ss}\cdot \Ss\\
&\le (1.5-0.16\times 0.09) \cdot x_{i\Sj}\cdot p_{i\Sj} \;+\; 1.34\sum_{\Ss}  x_{i\Sj\Ss}\cdot \Ss \\
&< 1.486 \sum_{\Ss}  x_{i\Sj\Ss}(\Ss+p_{i\Sj})
\end{align*}
The second equality come from the fact that we shift $R_{ij^*s^*}$ to the right side by $0.34(\Ss+x_{i\Sj}p_{i\Sj})$.
\end{proof}

This gives Lemma~\ref{lem:main-sched} for each good job $\Sj$ with $x_{i\Sj}\ge\frac{9}{100}$. Now we focus on proving Lemma~\ref{lem:main-sched} for each job $\Sj$ with $x_{i\Sj}<\frac{9}{100}$ in the following.

\subsubsection{When job $\Sj$ with $x_{i\Sj}< \frac{9}{100}$ is good on machine $i$ because its representative rectangle $R_{i\Sj\Ss}$ is good on $i$}

The goal of this subsection is to show Lemma~\ref{lem:main-sched} for each job $\Sj$ with $x_{i\Sj}<\frac{9}{100}$ and any good representative rectangle $R_{i\Sj\Ss}$. Towards this end, it suffices to show the following.

\begin{align*}
E[\tilde C_{j^*} \;|  \; i \leftarrow \Sj, R_{i\Sj\Ss}] &\le \hat{s}^*+1.5p_{i\Sj}=1.34\Ss+1.5p_{i\Sj}\\
&=1.486\Ss-0.146\Ss+1.5p_{i\Sj}\\
& \le 1.486 \Ss-0.0146p_{i\Sj}+1.5p_{i\Sj}\\
&\le 1.486(\Ss+p_{i\Sj})
\end{align*}
The first inequality is due to Lemma~\ref{lem:rect-upper-ind}.
The first equality comes from this fact as $x_{i\Sj}<\frac{9}{100}$, we shift rectangle $R_{i\Sj\Ss}$ to the right side by $0.34\Ss$, and the inequality follows from $10\Ss \ge p_{i\Sj}$ for any good rectangle $R_{i\Sj\Ss}$. 

Thus, we have shown Lemma~\ref{lem:main-sched} for this case. 

\subsubsection{When job $\Sj$  is bad on machine $i$}
In this subsection we consider the case that $x_{i\Sj}<\frac{9}{100}$ and the representative rectangle of job $\Sj$ is bad on $i$. To show Lemma~\ref{lem:main-sched} for this case, we will show,
\begin{equation}
\label{eq:badjob}
 \E\Big[\tilde C_{j^*} \; \Big|  \;i \leftarrow \Sj , R_{i\Sj\Ss},\tau_{i\Sj}\Big] \le p_{i\Sj}+\hat{s}^*+0.976\tau_{i\Sj}
\end{equation}
Then, by taking the integral on the value of $\tau_{ij^*}$, we have, 
\begin{align*}
E[\tilde C_{j^*} \; |  \;i \leftarrow \Sj , R_{i\Sj\Ss}] &\le p_{i\Sj}+ \hat{s}^*+\int_{0}^{p_{i\Sj}} 0.976\tau\cdot \frac{1}{p_{i\Sj}} \; d\tau=\hat{s}^*+1.488p_{i\Sj}
\end{align*}

Since $\hat{s}^*=1.34\Ss$, this means we have $\E[\tilde C_{j^*} \; |  \;i \leftarrow \Sj , R_{i\Sj\Ss}] \le 1.488(\Ss+p_{i\Sj})$, as desired.

\smallskip
To prove Eqn.(\ref{eq:badjob}), we will reformat the time-indexed LP solution $\{x_{ijs}\}_{ijs}$, so that we have a linear combination of non-overlapping rectangles on each machine. This view will make our analysis easier. Formally, we will define a collection $\cF_i$ of subsets of rectangles on each machine $i$ that satisfies the following properties.

\begin{enumerate}
\item $\sum_{f \in \cF_i} z_{if} \leq 1$
\item A configuration $f \in \cF_i$ is  a set of disjoint rectangles; that is, for any two  distinct rectangles $R_{ijs}, R_{ij's'}\in f$, $(s,s+p_{ij}]$ and $(s',s'+p_{ij'}]$ are disjoint. 
\item For each rectangle $R_{ijs}$, we have $\sum_{f \ni R_{ijs}}z_{if}=x_{ijs}$.
\end{enumerate}

An easy way to obtain this linear combination decomposition to replace each rectangle with sufficiently many copies of the same height preserving its total height. Then, it is not difficult to see that we can group rectangles as desired. We note that we can directly obtain this type of solution by solving a configuration LP \cite{sviridenko2013approximating}.\footnote{The configuration LP solution is more structured than the above decomposition in that each configuration has at most one rectangle of each job. However, we do not need such a strong property for our rounding.}

\medskip
Now, we restate Lemma~\ref{lem:Expectation-block-tau} using the above configuration view. As we will focus on a fixed machine $i$, we may omit $\cF_i$. Due to the space constraints, we defer the proof of the following corollary to the full version of this paper.

\begin{corollary}
\label{cor:Expectation-block-tau}
For any job $\Sj$, representative rectangle $R_{i\Sj\Ss}$ and any fixed $\tau_{i\Sj} \in (0, p_{i\Sj}]$, we have,
\begin{align*}
 	& E\Big[\tilde C_{j^*} \; \Big|  \;i \leftarrow \Sj , R_{i\Sj\Ss},\tau_{i\Sj}\Big]  - p_{i\Sj} \\
=&\sum_fz_{if}\Big( \hat{L}_{f}(\theta_{i\Sj})- \sum_{R_{i\Sj s}\in f}\hat{L}_{i\Sj s}(\theta_{i\Sj})\Big)\\
&-(1-\eta)\sum_f\sum_{R_{ijs}\in f: j \neq \Sj}z_{if}\cdot p_{ij}  \Pr[e_{ij}, j \stackrel{i}{\sim} \Sj\;|\; R_{ijs},R_{i\Sj\Ss},\tau_{i\Sj}]
\end{align*}
\noindent 
where $\hat{L}_{f}(\theta_{i\Sj})$ is the total length of all rectangles in $f$ appearing up to  time $\theta_{i\Sj}$ after shifting.
\end{corollary}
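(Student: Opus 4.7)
The plan is to derive Corollary~\ref{cor:Expectation-block-tau} as an essentially algebraic reformulation of Lemma~\ref{lem:Expectation-block-tau}, by substituting the decomposition $x_{ijs} = \sum_{f \ni R_{ijs}} z_{if}$ (the third property of the configuration decomposition $\cF_i$) into every occurrence of $x_{ijs}$ on the right-hand side of Lemma~\ref{lem:Expectation-block-tau} and then exchanging the order of summation from ``sum over $(j,s)$, then over configurations $f$ containing $R_{ijs}$'' to ``sum over $f$, then over rectangles $R_{ijs}\in f$''. No new probabilistic reasoning is required, since Lemma~\ref{lem:Expectation-block-tau} already carries out the full probabilistic argument in the original rectangle formulation.

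Concretely, first I would rewrite the ``delay volume'' term. Using property 3 of the decomposition and swapping sums,
\[
\sum_{j,s} x_{ijs}\, \hat L_{ijs}(\theta_{i\Sj}) \;=\; \sum_{j,s} \Big(\sum_{f \ni R_{ijs}} z_{if}\Big)\, \hat L_{ijs}(\theta_{i\Sj}) \;=\; \sum_{f} z_{if} \sum_{R_{ijs}\in f} \hat L_{ijs}(\theta_{i\Sj}).
\]
At this point I would invoke property 2 (the rectangles in $f$ are pairwise disjoint in their time support) to identify $\sum_{R_{ijs}\in f} \hat L_{ijs}(\theta_{i\Sj}) = \hat L_f(\theta_{i\Sj})$ by the definition of $\hat L_f$ as the total length of rectangles of $f$ before time $\theta_{i\Sj}$ (after shifting). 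Disjointness is what prevents any overcounting in passing from the sum of lengths to the total length.

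Next I would handle the correction term that removes $\Sj$'s own contribution in the same way:
\[
\sum_{s} x_{i\Sj s}\, \hat L_{i\Sj s}(\theta_{i\Sj}) \;=\; \sum_{f} z_{if} \sum_{R_{i\Sj s}\in f} \hat L_{i\Sj s}(\theta_{i\Sj}),
\]
and similarly for the grouping term,
\[
(1-\eta)\sum_{j\neq \Sj, s} x_{ijs}\, p_{ij}\, \Pr[e_{ij}, j \stackrel{i}{\sim} \Sj \mid R_{ijs}, R_{i\Sj\Ss}, \tau_{i\Sj}] \;=\; (1-\eta)\sum_{f} \sum_{R_{ijs}\in f:\, j\neq \Sj} z_{if}\, p_{ij}\, \Pr[\cdots].
\]

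Combining the first two yields $\sum_f z_{if}\big(\hat L_f(\theta_{i\Sj}) - \sum_{R_{i\Sj s}\in f} \hat L_{i\Sj s}(\theta_{i\Sj})\big)$, which is exactly the first bracketed expression in the corollary, and substituting all three back into Lemma~\ref{lem:Expectation-block-tau} yields the claimed identity. There is no real obstacle here; the only point to be careful about is ensuring that the disjointness of rectangles within a configuration justifies the identification of $\sum_{R_{ijs}\in f} \hat L_{ijs}(\theta_{i\Sj})$ with $\hat L_f(\theta_{i\Sj})$, so I would mention this explicitly when invoking property 2.
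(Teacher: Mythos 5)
Your proposal is correct and matches the paper's own proof: the paper likewise starts from Lemma~\ref{lem:Expectation-block-tau}, substitutes $x_{ijs}=\sum_{f\ni R_{ijs}}z_{if}$ in each of the three terms, swaps the order of summation, and identifies $\sum_{R_{ijs}\in f}\hat L_{ijs}(\theta_{i\Sj})$ with $\hat L_f(\theta_{i\Sj})$ using the disjointness of rectangles within a configuration. Your explicit remark that property 2 is what justifies this last identification is a point the paper leaves implicit, but otherwise the two arguments are identical.
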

\begin{proof}
\begin{align*}
 	& E\Big[\tilde C_{j^*} \; \Big|  \;i \leftarrow \Sj , R_{i\Sj\Ss},\tau_{i\Sj}\Big]  - p_{i\Sj} \\
=& \sum_{j ,s}x_{ijs}\cdot \hat{L}_{ijs}(\theta_{i\Sj}) - \sum_{s}x_{i\Sj s}\cdot \hat{L}_{i\Sj s}(\theta_{i\Sj})- (1 - \eta) \sum_{j \neq \Sj,s}x_{ijs}\cdot p_{ij} \cdot \Pr[e_{ij}, j \stackrel{i}{\sim} \Sj\;|\; R_{ijs},R_{i\Sj\Ss},\tau_{i\Sj}]\\
=&\sum_{f}\sum_{R_{ijs}\in f}z_{if}\hat{L}_{ijs}(\theta_{i\Sj}) -\sum_f\sum_{R_{i\Sj s}\in f}z_{if}\hat{L}_{i\Sj s}(\theta_{i\Sj})\\
&-(1-\eta)\sum_f\sum_{R_{ijs}\in f: j \neq \Sj}z_{if}\cdot p_{ij}  \Pr[e_{ij}, j \stackrel{i}{\sim} \Sj\;|\; R_{ijs},R_{i\Sj\Ss},\tau_{i\Sj}]\\
=&\sum_{f}z_{if}\hat{L}_{f}(\theta_{i\Sj}) -\sum_f\sum_{R_{i\Sj s}\in f}z_{if}\hat{L}_{i \Sj s}(\theta_{i\Sj})\\
&-(1-\eta)\sum_f\sum_{R_{ijs}\in f: j \neq \Sj}z_{if}\cdot p_{ij}  \Pr[e_{ij}, j \stackrel{i}{\sim} \Sj\;|\; R_{ijs},R_{i\Sj\Ss},\tau_{i\Sj}]\\
=&\sum_fz_{if}\Big( \hat{L}_{f}(\theta_{i\Sj})- \sum_{R_{i\Sj s}\in f}\hat{L}_{i\Sj s}(\theta_{i\Sj})\Big)\\
&-(1-\eta)\sum_f\sum_{R_{ijs}\in f: j \neq \Sj}z_{if}\cdot p_{ij}  \Pr[e_{ij}, j \stackrel{i}{\sim} \Sj\;|\; R_{ijs},R_{i\Sj\Ss},\tau_{i\Sj}]
\end{align*}

\end{proof}

We define $D_f$ as the contribution of each configuration $f$ to $\E\big[\tilde C_{j^*} \; \Big|  \;i \leftarrow \Sj , R_{i\Sj\Ss},\tau_{i\Sj}\big] $ as following.
$$
D_f=z_{if}\Big( \hat{L}_{f}(\theta_{i\Sj})- \sum_{R_{i\Sj s}\in f}\hat{L}_{i\Sj s}(\theta_{i\Sj})-(1-\eta)\sum_{R_{ijs}\in f:j\neq \Sj} p_{ij} \cdot  \Pr[e_{ij}, j \stackrel{i}{\sim} \Sj\;|\; R_{ijs},R_{i\Sj\Ss},\tau_{i\Sj}]\Big)
$$
With this definition, we can say,
\begin{equation}
\label{eq:expect-based-conf-contribution}
E\Big[\tilde C_{j^*} \; \Big|  \;i \leftarrow \Sj , R_{i\Sj\Ss},\tau_{i\Sj}\Big]=p_{i\Sj}+\sum_f D_f
\end{equation}
Since $\sum_f z_{if}=1$ and $0.976 \theta_{i\Sj} \le \hat{s}^*+0.976\tau_{i\Sj} $, if we show that 
\begin{equation}
    \label{eqn:conf-ultimate}
 D_f \le z_{if}(0.976\theta_{i\Sj}) \text{ for all configurations } f,
\end{equation}
then we can prove Eqn. (\ref{eq:badjob}), and consequently we can get Lemma~\ref{lem:main-sched} for a bad job $\Sj$.\\

\medskip
Henceforth, the goal of our analysis is to show Eqn.~(\ref{eqn:conf-ultimate}). As mentioned there is a random variable $\rho$ in our algorithm that affects grid points,  and therefore, affects the grouping of bad jobs on each machine. Thus, for each $f$, $R_{ijs}, R_{ij^*s} \in f$, where $j \neq j^*$, the probability $ \Pr[e_{ij}, j \stackrel{i}{\sim} \Sj\;|\; R_{ijs},R_{i\Sj\Ss},\tau_{i\Sj}]$ in $D_f$ depends on the value of $\rho$.  As $\rho$ is sampled from  $(\frac{1}{10},1]$ uniformly at random, applying marginal probability and then Bayes' rule, we have 
\begin{equation}
\label{eq:conf-cont-depend-rho}
D_f=z_{if}\Big( \hat{L}_{f}(\theta_{i\Sj})- \sum_{R_{i\Sj s}\in f}\hat{L}_{i\Sj s}(\theta_{i\Sj})-(1-\eta)\sum_{R_{ijs}\in f:j \neq \Sj} p_{ij} \cdot \int_{\frac{1}{10}}^1 \frac{10}{9} \Pr[e_{ij}, j \stackrel{i}{\sim} \Sj\;|\; R_{ijs},R_{i\Sj\Ss},\tau_{i\Sj}, \rho] \; d\rho\Big)
\end{equation}

Let $I \in \{I_k\}$ be the interval containing $\theta_{i\Sj}$ and $g$ be the starting point of this interval. Note that the rectangle $R_{i\Sj\Ss}$ is associated with $I$ with probability $u = 1/2$. It is worth mentioning that, when $\theta_{i\Sj}$ is fixed, fixing $\rho$ determines $I$, and therefore, $g$ as well. We first show that $g$ is considerably smaller than $\theta_{i\Sj}$ in expectation. Note that for any $\rho$ value, $g>0.1 \theta_{i\Sj}$.  
\begin{lemma}
\label{obs:start-interval}
For any value of $\theta_{i\Sj}$, $\E[g] \leq \frac{11}{20}\theta_{i\Sj} $.
\end{lemma}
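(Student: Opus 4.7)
The plan is to prove the bound by a scale-invariance reduction followed by an explicit integration and a convexity argument. First, I claim that $\E[g]/\theta_{i\Sj}$ is invariant under $\theta_{i\Sj} \mapsto 10\,\theta_{i\Sj}$: for every $\rho \in (1/10, 1)$, the unique integer $k$ with $\theta_{i\Sj} \in I_k = (\rho\cdot 10^k, \rho\cdot 10^{k+1})$ shifts by $1$ when $\theta_{i\Sj}$ is scaled by $10$, so $g = \rho\cdot 10^k$ also scales by $10$, giving $\E[g(10\,\theta_{i\Sj})] = 10\,\E[g(\theta_{i\Sj})]$. It therefore suffices to prove $\E[g] \leq \tfrac{11}{20}\theta_{i\Sj}$ for $\theta_{i\Sj}$ in a single decade, and I will take $\theta_{i\Sj} \in [1, 10]$.

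Second, for such $\theta_{i\Sj}$ and $\rho \in (1/10, 1)$ exactly one of two intervals contains $\theta_{i\Sj}$: either $I_0 = (\rho, 10\rho)$ when $\rho > \theta_{i\Sj}/10$ (giving $g = \rho$), or $I_1 = (10\rho, 100\rho)$ when $\rho < \theta_{i\Sj}/10$ (giving $g = 10\rho$). Integrating against the uniform density $10/9$ on $(1/10, 1)$ and splitting at $\rho = \theta_{i\Sj}/10$ yields
\[
\E[g] \;=\; \frac{10}{9}\left(\int_{1/10}^{\theta_{i\Sj}/10} 10\rho\, d\rho \;+\; \int_{\theta_{i\Sj}/10}^{1} \rho\, d\rho\right) \;=\; 5u^2 + \tfrac{1}{2},
\]
where $u := \theta_{i\Sj}/10 \in [1/10, 1]$, and hence $\E[g]/\theta_{i\Sj} \;=\; \tfrac{u}{2} + \tfrac{1}{20u}$.

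Finally, $h(u) := \tfrac{u}{2} + \tfrac{1}{20u}$ is strictly convex on $(0, \infty)$ (since $h''(u) = 1/(10u^3) > 0$), so on the closed interval $[1/10, 1]$ it attains its maximum at one of the endpoints. A direct check gives $h(1/10) = h(1) = \tfrac{11}{20}$, which establishes $\E[g] \leq \tfrac{11}{20}\,\theta_{i\Sj}$ for $\theta_{i\Sj} \in [1, 10]$ and, by the scaling reduction, for all $\theta_{i\Sj} > 0$. No step looks particularly hard; the only point requiring care is the pointwise scaling identity $g(10\,\theta_{i\Sj}, \rho) = 10\,g(\theta_{i\Sj}, \rho)$ together with the observation that for $\theta_{i\Sj} \in [1, 10]$ and $\rho \in (1/10, 1)$ only two values of $k$ can occur, which is immediate from the ranges of $\theta_{i\Sj}$ and $\rho$.
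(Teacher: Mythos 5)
Your proposal is correct and follows essentially the same route as the paper: both split the integral over $\rho$ at the value where the grid interval containing $\theta_{i\Sj}$ switches, compute $\E[g]$ explicitly (your $5u^2+\tfrac12$ with $u=\theta_{i\Sj}/10$ is exactly the paper's $10^{k+1}(\tfrac{\alpha^2}{2}+\tfrac{1}{20})$ with $\alpha=u$), and then verify an elementary inequality in the normalized parameter. The only cosmetic differences are your explicit scale-invariance reduction versus the paper's parametrization $\theta_{i\Sj}=\alpha 10^{k+1}$, and your convexity-plus-endpoints check of $\tfrac{u}{2}+\tfrac{1}{20u}\le\tfrac{11}{20}$ versus the paper's equivalent quadratic inequality $\tfrac{\alpha^2}{2}+\tfrac{1}{20}\le\tfrac{11}{20}\alpha$.
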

\begin{proof}
Let $k$ be an integer such that  $10^{k} \le \theta_{i\Sj} < 10^{k+1}$. Let $\alpha$ be such that $\theta_{i\Sj}=\alpha 10^{k+1}$. Note that by definition of $\theta_{i\Sj}$ and $\alpha$, we have $\frac{1}{10} \le \alpha <1$. Since $\rho$ is sampled from $(\frac{1}{10},1)$ uniformly at random, the interval including $\theta_{i\Sj}$ must be either $I_k = (\rho 10^k , \rho 10^{k+1})$ or $I_{k+1} = (\rho 10^{k+1} , \rho 10^{k+2})$. Precisely, we have $g=\rho 10^{k+1}$ when $\rho \le \alpha$, otherwise $g = \rho 10^k$. Therefore, we derive, 
\begin{align*}
	 E[g]&=\frac{10}{9}\int_{\frac{1}{10}}^\alpha \rho \cdot 10^{k+1} \;d\rho+ \int_{\alpha}^1  \rho \cdot 10^{k}\; d\rho =10^{k+1}(\frac{\alpha^2}{2}+\frac{1}{20}) \le 10^{k+1} \frac{11}{20}\alpha	 =\frac{11}{20}\theta_{i\Sj}
	\end{align*}
The inequality follows from the fact that $\frac{\alpha^2}{2}+\frac{1}{20} \le \frac{11}{20}\alpha$ when $\frac{1}{10} \le \alpha <1$. The last equality is immediate from the definition of $\theta_{i\Sj}$.
\end{proof}
The following observation will be useful in our analysis.

\begin{observation}
\label{ob:notInterior}
For any configuration $f$ and value $\theta' \le \theta_{i\Sj}$, such that no rectangle $R_{ij's'}$ in $f$ includes $\theta'$, i.e. $\theta' \notin (s', s' + p_{ij'})$ we have $\hat{L}_f(\theta_{i\Sj}) \le \theta_{i\Sj} - \min(0.34 \theta', \theta_{i\Sj}-\theta')$. 
\end{observation}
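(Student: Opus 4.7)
The plan is to split the rectangles of $f$ according to their position relative to $\theta'$. Since no rectangle in $f$ contains $\theta'$ in its interior, every $R_{ij's'} \in f$ falls into exactly one of $f_A := \{R_{ij's'} \in f : s' + p_{ij'} \le \theta'\}$ or $f_B := \{R_{ij's'} \in f : s' \ge \theta'\}$. Because the rectangles in a configuration are pairwise disjoint, the members of $f_A$ have total length $\sum_{R_{ij's'} \in f_A} p_{ij'} \le \theta'$; this elementary volume bound will do the heavy lifting.

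Next, I would track how shifting places each group on the time axis. In both shift rules the shift applied to $R_{ij's'}$ is at least $0.34 s'$, so $\hat{s'} \ge 1.34 s'$. For $R_{ij's'} \in f_A$, I would verify $\hat{s'} + p_{ij'} \le 1.34(s' + p_{ij'}) \le 1.34 \theta'$: under the good-shift rule this reduces to $0.34 x_{ij'} p_{ij'} \le 0.34 p_{ij'}$, which uses $x_{ij'} \le 1$, and under the bad-shift rule it is immediate. For $R_{ij's'} \in f_B$, the bound $\hat{s'} \ge 1.34 s' \ge 1.34 \theta'$ is direct. Thus after shifting, every rectangle in $f_A$ sits inside $(0, 1.34\theta']$ while every rectangle in $f_B$ starts no earlier than $1.34 \theta'$.

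I would then split into two cases. If $\theta_{i\Sj} \le 1.34 \theta'$, then no rectangle of $f_B$ contributes to $\hat{L}_f(\theta_{i\Sj})$ and $f_A$ contributes at most its total length $\theta'$; combined with $\theta_{i\Sj} - \theta' \le 0.34 \theta'$ in this regime, this gives $\hat{L}_f(\theta_{i\Sj}) \le \theta' = \theta_{i\Sj} - \min(0.34\theta', \theta_{i\Sj} - \theta')$. If instead $\theta_{i\Sj} > 1.34 \theta'$, then $f_A$ still contributes at most $\theta'$, while the $f_B$ rectangles (pairwise disjoint, all starting at or after $1.34\theta'$) contribute at most $\theta_{i\Sj} - 1.34\theta'$, summing to $\theta_{i\Sj} - 0.34\theta' = \theta_{i\Sj} - \min(0.34\theta', \theta_{i\Sj} - \theta')$.

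I do not anticipate any serious obstacle. The one subtlety is handling the two different shift formulas uniformly when bounding the shifted right endpoint of the $f_A$ rectangles; once one notes that both give $\hat{s'} + p_{ij'} \le 1.34(s' + p_{ij'})$, the remainder of the argument rests only on the observations that (i) non-overlapping rectangles contained in $(0,\theta']$ have total length at most $\theta'$ and (ii) shifting preserves each rectangle's length.
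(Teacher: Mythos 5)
Your argument is correct and follows essentially the same route as the paper's proof: both hinge on the case split $\theta_{i\Sj} \lessgtr 1.34\theta'$, bounding the contribution of rectangles ending by $\theta'$ via their total (unshifted) length $\le \theta'$, and showing rectangles starting at or after $\theta'$ are pushed to start at or after $1.34\theta'$. The one assertion you should justify is that the shifted $f_B$ rectangles are \emph{pairwise disjoint} (your Case 2 needs this to bound their total overlap with $(1.34\theta', \theta_{i\Sj}]$ by the interval's length); since different rectangles receive different shifts, this is not automatic from the disjointness of the configuration. It does follow in one line from the two inequalities you already derived — for consecutive rectangles $R_{ijs}$ before $R_{ij's'}$ one has $\hat{s}+p_{ij} \le 1.34(s+p_{ij}) \le 1.34 s' \le \hat{s}'$ — or you can sidestep it entirely by accounting in original coordinates as the paper does: any point of an $f_B$ rectangle landing before $\theta_{i\Sj}$ after a shift of at least $0.34\theta'$ originally lay in $(\theta', \theta_{i\Sj}-0.34\theta']$, and original disjointness bounds that mass by $\theta_{i\Sj}-1.34\theta'$.
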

\begin{proof}
Since $\theta'$ is not contained in the interior of any rectangle in $f$, a rectangle in $f$ is either to the left or to the right of $\theta'$. The rectangles in $f$ after $\theta'$ will be shifted to the right side by at least $0.34 \theta'$. When $0.34\theta'>\theta_{i\Sj}-\theta'$, all rectangles after $\theta'$ in $f$ will be shifted to the right side of $\theta_{i\Sj}$. Thus $\hat{L}_f(\theta_{i\Sj}) \le \theta'$. When $0.34\theta'<\theta_{i\Sj}-\theta'$, the rectangles in $f$ from $\theta_{i\Sj}-0.34\theta'$ to $\theta_{i\Sj}$ will be shifted to the right side of $\theta_{i\Sj}$. Thus $\hat{L}_f(\theta_{i\Sj}) \le \theta_{i\Sj} -0.34\theta'$. Combining two cases gives the observation. 
\end{proof}

Consider the the interval $H=(0.1\theta_{i\Sj},0.97\theta_{i\Sj}]$. We upper bound $D_f$ by considering three cases.\\

\smallskip
\noindent
{\bf{Case 1: }} The interval $H$ is not a sub-interval of any rectangle (more precisely, the interval defined by the rectangle's starting and ending times) in $f$. In this case there is $\theta' \in H$ that is not interior of any rectangle in $f$. By observation~\ref{ob:notInterior}, $\hat{L}_f(\theta_{i\Sj}) \le \theta_{i\Sj}-\min\{0.03\theta_{i\Sj},0.34\theta'\} \le \theta_{i\Sj}-\min\{0.03\theta_{i\Sj},0.34\times 0.1 \theta_{i\Sj}\} \le 0.97\theta_{i\Sj}$. Thus $D_f \le z_{if}0.97 \theta_{i\Sj}$
\\
\\
\smallskip
\noindent
{\bf{Case 2: }} The interval $H$ is a sub-interval of a good rectangle $R_{ijs}\in f$. We first observe that the rectangle $R_{ijs}$ is shifted to the right side by at least $0.03\theta_{i\Sj}$. If $x_{ij}\ge \frac{9}{100}$, the rectangle $R_{ijs}$ is shifted to the right side by at least $0.34(s+0.09p_{ij}) \ge 0.34(0.09(s+p_{ij})) > 0.03\theta_{i\Sj}$. When $x_{ij}<\frac{9}{100}$, $R_{ijs}$ is good because $s > \frac{1}{10}p_{ij}$. In this case $R_{ijs}$ is shifted to the right side by $0.34s \ge 0.34(0.09s+0.91 s) \ge 0.34\times 0.09(s+p_{ij}) >0.03 \theta_{i\Sj}$. 
As $s+ 0.03\theta_{ij^*} < \theta_{ij^*}$, this means the total length of rectangles in $f$ up to time $\theta_{ij^*}$ decreases by at least $0.03\theta_{ij^*}$ due to the shifting. Thus, we have  $D_f \le z_{if}0.97\theta_{i\Sj}$.\\

\smallskip
\noindent
{\bf{Case 3: }} The interval $H$ is a sub-interval of a bad rectangle $R_{ijs}\in f$. This is the case where we utilize strong negative correlations. 
\begin{align*}
&\Pr[e_{ij}, j \stackrel{i}{\sim} \Sj\;|\; R_{ijs},R_{i\Sj\Ss},\tau_{i\Sj}, \rho]\\
=&\Pr[ e_{ij},\theta_{ij}\in I, j\leadsto_i I, \Sj\leadsto_i I, \sum_{j'\leadsto_i I}x_{ij'}\le 1 \mbox{ for $I$  s.t. } \theta_{ij^*} \in I \;|\; R_{ijs},R_{i\Sj\Ss},\tau_{i\Sj}, \rho]   \\
 =& \Pr[g<\theta_{ij}<\min\{\theta_{i\Sj},\hat{s}+p_{ij}\} \;|\; R_{ijs},R_{i\Sj\Ss},\tau_{i\Sj}, \rho ] \\
&\cdot \Pr[j \leadsto_i I\;|\; e_{ij}, \theta_{ij}\in I,  R_{ijs},R_{i\Sj\Ss},\tau_{i\Sj}, \rho] \\
&\cdot \Pr[\Sj \leadsto_i I\;|\; j \leadsto_i I, e_{ij}, \theta_{ij}\in I,  R_{ijs},R_{i\Sj\Ss},\tau_{i\Sj}, \rho]
 \;\;\; \\
&\cdot \Pr[\sum_{j' \leadsto_i I}x_{ij'}\le 1\;|\; \Sj \leadsto_i I, j \leadsto_i I, e_{ij}, \theta_{ij}\in I,  R_{ijs},R_{i\Sj\Ss},\tau_{i\Sj}, \rho]  \\
\ge &\frac{\max\{\min\{\theta_{i\Sj},\hat{s}+p_{ij}\}-g,0\}}{p_{ij}} \cdot u \cdot u \cdot \Pr[\sum_{j' \neq j,\Sj\leadsto_i I}x_{ij'}\le 0.82] 
\end{align*}

Note that in the above equations, $I$ is a grid interval including $\theta_{ij^*}$.
Thus, fixing $R_{ij^*s^*}$ and $\tau_{ij^*} $ means fixing $I$ and its starting point $g$. In the third equation, the first probability is for the event $e_{ij}$ and $\theta_{ij} \in I$. The second and third probabilities are for the events that $j$ and $j^*$ are associated with $I$ respectively, conditioned on their $\theta$ values being in $I$ -- they are both $u = 1/2$. The last probability is lower bounded by the probability that the total height of jobs, other than $j, j^*$, assigned to $I$ is at most 0.82, as only bad jobs are associated with $I$, meaning $x_{ij},x_{ij^*} \leq 9/100$.

To keep the flow of the analysis, we defer the proof of the following lemma to the subsequent section.
\begin{lemma}
    \label{lem:pair-negative-bound}    
$\Pr[\sum_{j' \neq j,\Sj\leadsto_i I}x_{ij'}\le 0.82] \geq 0.5317$.
\end{lemma}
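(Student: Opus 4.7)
The plan is to invoke Markov's inequality: since $X := \sum_{j' \neq j, j^*\,:\, j' \leadsto_i I} x_{ij'}$ is a nonnegative random variable, $\Pr[X \le 0.82] \ge 1 - \E[X]/0.82$, so the lemma will follow once I establish $\E[X] \le 0.384 = 0.82 \cdot (1 - 0.5317)$.

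To compute $\E[X]$, I first note that the events $\{j' \leadsto_i I\}_{j' \neq j, j^*}$ are mutually independent given $I$, because each depends only on $j'$'s own random variables: its representative rectangle $R_{ij'}$, the offset $\tau_{ij'}$, and its independent coin. Linearity together with $\Pr[j' \leadsto_i I] = u \cdot \Pr[\theta_{ij'} \in I] = (1/2)\,\Pr[\theta_{ij'} \in I]$, and unrolling $\Pr[\theta_{ij'} \in I]$ over the choice of representative rectangle, gives
\[
\E[X] \;\le\; \frac{1}{2}\sum_{\text{bad } j',\, s'} x_{ij's'} \cdot \frac{|I \cap (\hat{s}', \hat{s}' + p_{ij'}]|}{p_{ij'}},
\]
where I have dropped the exclusion $j' \neq j, j^*$ since this can only enlarge the right-hand side.

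To bound this sum, I would use the configuration LP decomposition $\{(f, z_{if})\}_{f}$ (with $\sum_f z_{if} \le 1$ and each configuration $f$ a collection of non-overlapping rectangles), rewriting it as $(1/2)\sum_f z_{if} K_f$, where $K_f$ sums $|I \cap (\hat{s}', \hat{s}' + p_{ij'})|/p_{ij'}$ over the bad rectangles $R_{ij's'} \in f$. The structural facts I would exploit are: (a) two consecutive bad rectangles in $f$ remain non-overlapping after shifting, with an induced gap of at least $0.34 p$ of the left rectangle's length (since $\hat{s}'_2 - (\hat{s}'_1 + p_1) \ge 1.34(s'_2 - s'_1) - p_1 \ge 0.34 p_1$); (b) any bad rectangle whose shifted interval intersects $I = (g, 10g)$ must satisfy $p_{ij'} \ge g/1.134$, because $g \le \hat{s}' + p_{ij'} \le 1.134\, p_{ij'}$; and (c) the shifted bad load at any time $t$ is upper bounded by the original LP bad load at $t/1.34$, which is at most $1$ by the LP capacity constraint.

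The main obstacle will be turning facts (a)--(c) into a sharp enough bound on $\E[X]$. A naive per-configuration argument combining (a) and (b) would give only $K_f \le |I|/p_{\min} \approx 10$, which is far too loose. The tight analysis must exploit the gap structure induced by the shift together with the LP load bound (c), distinguishing \emph{long} bad rectangles (those with $p_{ij'}$ comparable to or larger than $|I|$, of which only a handful can fit in any configuration intersecting $I$) from \emph{short} ones (which pack tightly inside $I$ but whose aggregate density is capped by (c)); a careful case analysis along these lines should deliver the constant $0.384$ needed to invoke Markov's inequality.
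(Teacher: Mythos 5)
There is a genuine gap, and it is in the final step. Your setup is on the right track: the bound $\E[X] \le \frac{1}{2}\sum_f z_{if} K_f$ is exactly the paper's route, and the paper proves (its Lemma~\ref{lem:conf-two-interval}) that $K_f \le 1$ for every configuration $f$, whence $\E[X] \le u = 1/2$ (Lemma~\ref{lem:expect-height-interval}). But your plan to push this down to $\E[X] \le 0.384$ so that Markov's inequality closes the argument cannot succeed: the per-configuration bound $K_f \le 1$ is tight (a bad rectangle fully contained in $I$ after shifting contributes exactly $1$), so $\E[X]$ can be as large as $1/2$ minus at most the contributions of $j$ and $j^*$ themselves (each at most $u \cdot 9/100$), i.e.\ roughly $0.41$ or more. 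Markov with $\E[X]=1/2$ gives only $1 - 0.5/0.82 \approx 0.39$, well short of $0.5317$. No refinement of facts (a)--(c) can rescue this, because the obstruction is the worst-case value of $\E[X]$ itself, not slack in your estimate of it.

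The missing idea is to use concentration rather than a first-moment bound. You already noted that the events $\{j' \leadsto_i I\}_{j'}$ are mutually independent given $I$; combine this with the observation that each summand $x_{ij'} Z_{j'}$ is bounded by $9/100$ (only bad jobs are ever associated with a grid interval), and apply a Chernoff--Hoeffding bound for sums of independent bounded variables. With $\mu = \E[X] \le 1/2$, $K = 9/100$, and $\lambda = 0.64$ (so that $(1+\lambda)\mu = 0.82$), one gets
$\Pr[X \ge 0.82] \le \exp\!\left(-\lambda^2 \mu / (3K)\right) = \exp\!\left(-0.64^2 \cdot 0.5/(3\cdot 0.09)\right) < 0.4683$,
which yields the claimed $0.5317$. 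This is exactly why the constant $9/100$ in the definition of bad jobs matters here: it controls the parameter $K$ in the concentration bound.
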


Assuming Lemma~\ref{lem:pair-negative-bound} holds true, we can complete our analysis. 
We have 
$$\Pr[e_{ij}, j \stackrel{i}{\sim} \Sj\;|\; R_{ijs},R_{i\Sj\Ss},\tau_{i\Sj}, \rho]> 0.1323\cdot \frac{\max\{\min\{\theta_{i\Sj},\hat{s}+p_{ij}\}-g,0\}}{p_{ij}}$$
Applying the above inequality to Eqn. (\ref{eq:conf-cont-depend-rho}) we have. 
\begin{align*}
D_f &\le z_{if}\Big( \hat{L}_{f}(\theta_{i\Sj})- \sum_{R_{i\Sj s}\in f}\hat{L}_{i\Sj s}(\theta_{i\Sj})-(1-\eta) \int_{\frac{1}{10}}^1 \frac{10}{9} \cdot 0.1323\cdot (\max\{\min\{\theta_{i\Sj},\hat{s}+p_{ij}\}-g,0\}) \; d\rho\Big)\\
&\le z_{if}\Big( \hat{L}_{f}(\theta_{i\Sj})-\sum_{R_{i\Sj s}\in f}\hat{L}_{i\Sj s}(\theta_{i\Sj})-(1-\eta) \int_{\frac{1}{10}}^1 \frac{10}{9} \cdot 0.1323\cdot (\min\{\theta_{i\Sj},\hat{s}+p_{ij}\}-g) \; d\rho\Big)\\
&\le z_{if}\Big( \hat{L}_{f}(\theta_{i\Sj})-0.1323\cdot(1-\eta)\cdot\big(\min\{\theta_{i\Sj},\hat{s}+p_{ij}\}-E[g]\big) \Big)\\
&\le z_{if}\Big( \hat{L}_{f}(\theta_{i\Sj})-0.1323\cdot(1-\eta)\cdot\big(\min\{\theta_{i\Sj},s+p_{ij}\}-E[g]\big) \Big)
\end{align*}
The last inequalities come from the fact $\min\{\theta_{i\Sj},\hat{s}+p_{ij}\} \ge \min\{\theta_{i\Sj},s+p_{ij}\}$ and Lemma~\ref{obs:start-interval}. We observe that $\hat{L}_{f}(\theta_{i\Sj})$  is also upper bounded by $ \min\{\theta_{i\Sj},s+p_{ij}\}$. To see this, assume $s+p_{ij}<\theta_{i\Sj}$ since otherwise this claim is immediate from the definition of $\hat{L}_{f}(\theta_{i\Sj})$. Note that all rectangles in $f$ starting at time $s+p_{ij}$ or later are shifted to the right by 
at least $0.34(s+p_{ij}) \geq 0.34 * 0.97 \theta_{ij^*}$. Since $s+p_{ij} \geq 0.97 \theta_{ij^*}$, this means
all those rectangles are shifted to the right side of $\theta_{i\Sj}$. Thus, we have the claim. Thus, we have, 
\begin{align*}
D_f &\le z_{if}\Big( (1-0.1323\cdot(1-\eta))\cdot\min\{\theta_{i\Sj},s+p_{ij}\}+0.1323\cdot(1-\eta) \cdot \frac{11}{20} \cdot \theta_{i\Sj}\Big)\\
&\le z_{if}\Big( (1-0.1323\cdot(1-\eta)+0.1323\cdot(1-\eta)\cdot\frac{11}{20})\theta_{i\Sj}\Big)
\end{align*}
Having $\eta < 0.589$, we can say $D_f \le z_{if} \cdot 0.976~\theta_{i\Sj} $.

\subsubsection{Proof of Lemma~\ref{lem:pair-negative-bound}}

It now remains to prove Lemma~\ref{lem:pair-negative-bound}.

\begin{lemma}
\label{lem:conf-two-interval}
For any configuration $f$ on machine $i$ and any grid interval $I$,  we have 
$$\sum_{R_{ijs} \in f: R_{ijs} \text { is bad on $i$}} \frac{|I \cap (\hat{s},\hat{s}+p_{ij}]|}{p_{ij}} \le 1$$
\end{lemma}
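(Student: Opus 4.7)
The plan is to exploit the strong exponential separation that the bad criterion imposes on rectangles inside a single configuration. For any bad rectangle $R_{ijs}$ in $f$ the criterion gives $s < p_{ij}/10$ and $x_{ij} < 9/100$, so the shift applied is exactly $0.34\, s$, yielding $\hat s = 1.34\, s < 0.134\, p_{ij}$. Since rectangles in $f$ are pairwise disjoint, ordering the bad ones by their $s$-values produces a sequence with $s_{k+1} \ge s_k + p_k > 11\, s_k$; in particular the shifted bad rectangles remain disjoint, with the gap between consecutive shifted rectangles at least $0.34\, p_k$.

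The first main step is to limit how many bad rectangles of $f$ can meet the interval $I = (g, 10g]$, where $g := \rho\, 10^k$ is the left endpoint of $I$. If a bad $R_k$'s shifted interval meets $I$, then $\hat s_k + p_k > g$ combined with $\hat s_k < 0.134\, p_k$ forces $p_k > g/1.134 > 0.882\, g$. Suppose three bad rectangles $R_a, R_b, R_c$ (in increasing $s$-order) all met $I$; then
\[
s_c > 11\, s_b \;\ge\; 11\,(s_a + p_a) \;>\; 11\, p_a \;>\; 9.7\, g,
\]
so $\hat s_c = 1.34\, s_c > 13\, g > 10\, g$, contradicting the assumption that $R_c$'s shifted interval meets $I$. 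Hence at most two bad rectangles of $f$ contribute to the sum.

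If at most one bad rectangle contributes, the trivial bound $|I \cap (\hat s, \hat s + p]| \le p$ already yields the desired inequality. If exactly two bad rectangles $R_a, R_b$ contribute (with $s_a < s_b$), a short analogous exponential-growth argument shows $\hat s_a < g$ (otherwise $\hat s_b > 10\, g$) and that $R_b$ cannot be fully contained in $I$: the latter would force $11.34\, s_b < 10\, g$, i.e.\ $s_b < 0.882\, g$, contradicting $s_b \ge p_a > 0.882\, g$. Consequently $\ell_a := |I \cap \hat R_a| = \hat s_a + p_a - g$ and $\ell_b := |I \cap \hat R_b| = 10\, g - \hat s_b$. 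Using $1.34\, s_a/p_a \le 0.134$ (from $s_a \le p_a/10$) and $p_b > 10\, s_b \ge 10\,(s_a + p_a)$, I would bound each term separately and then combine:
\[
\frac{\ell_a}{p_a} + \frac{\ell_b}{p_b} \;\le\; \Bigl(1.134 - \tfrac{g}{p_a}\Bigr) + \Bigl(\tfrac{g}{s_a + p_a} - 0.134\Bigr) \;=\; 1 - \frac{g\, s_a}{p_a\,(s_a + p_a)} \;\le\; 1.
\]

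The main obstacle is the two-rectangle case: the crude disjointness bound $\ell_a + \ell_b \le 9\, g$ does not by itself imply the weighted estimate, since each individual $p_k$ could be of order $g$ and so each ratio $\ell_k/p_k$ close to $1$. What makes the numbers work is the precise calibration between the shift factor $0.34$ and the $1/10$ badness threshold: this calibration simultaneously caps the number of contributing rectangles at two and produces matching constants $0.134$ in the separate bounds for $\ell_a/p_a$ and $\ell_b/p_b$ that cancel exactly when summed.
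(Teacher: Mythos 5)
Your proposal is correct. The first half --- showing that the exponential separation $s_{k+1} \geq s_k + p_k > 11 s_k$ forced by badness, combined with $\hat s = 1.34 s < 0.134\,p$, caps the number of bad rectangles of $f$ meeting a grid interval $I=(g,10g]$ at two --- is the same argument as the paper's (the paper normalizes $I$ to $(1/10,1)$ and reasons about the last and second-to-last overlapping rectangles, but the content is identical). Where you diverge is in closing the two-rectangle case: the paper argues informally that sliding $I$ so as to transfer overlap from the shorter rectangle to the longer one can only decrease the sum, so the maximum is attained when a single bad rectangle is fully contained in $I$ (giving exactly $1$); you instead compute explicit bounds $\ell_a/p_a \leq 1.134 - g/p_a$ and $\ell_b/p_b \leq g/(s_a+p_a) - 0.134$ and observe that the $0.134$'s cancel, leaving $1 - \tfrac{g s_a}{p_a(s_a+p_a)} \leq 1$. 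The second of these bounds is valid because the whole fraction $(10g-1.34 s_b)/p_b$ has nonnegative numerator and is therefore maximized by substituting $p_b = 10 s_b \geq 10(s_a+p_a)$; this is the one step a reader might stumble on, since bounding the two terms of the numerator separately against $p_b$ would fail. Your version is arguably tighter than the paper's: it makes the exchange argument's ``it is easy to see'' step fully explicit and exhibits exactly how the shift factor $0.34$ and the badness threshold $1/10$ conspire to make the constants cancel. The only details left tacit (that in the two-rectangle case $\hat s_a + p_a \leq 10g$ and $\hat s_b > g$, so the overlap formulas for $\ell_a,\ell_b$ are as written) follow immediately from disjointness of the shifted rectangles.
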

\begin{proof}
Fix a machine $i$.
We first show that at most two bad rectangles from the same configuration $f$, after shifting, can overlap with a fixed grid interval $I$ on a fixed machine $i$. To streamline our analysis, by scaling all time points uniformly, we assume wlog that $I = (1/10, 1)$. For the sake of contradiction, say there are more than two bad rectangles from $f$ overlapping with $I$ after shifting. Suppose $\hat{R}_{ij's'}$ is the last bad rectangle overlapping with $I$ and $\hat{R}_{ijs}$ is the second to the last bad rectangle overlapping with $I$. Here we use $\hat{R}$ to denote the rectangle $R$ after shifting. 

The  proof idea is to show that even if $\hat{R}_{ij's'}$ barely overlaps with $I$, the second to the last rectangle $\hat{R}_{ijs}$ must start before $I$. So, we can assume $\hat{s}' = 1$.
For notational convenience, let $p := p_{ij}$ and $p' := p_{ij'}$. 
As $R_{ij's'}$ is bad, we know $\hat{s}' = 1.34s'$.
From the observation that the two rectangles do not overlap and their relative order doesn't change by the shifting, we have $s' \geq s+ p$. Since ${R}_{ijs}$ is bad, we have $s \leq (1/10)p$. Therefore, we have $1 = 1.34 s' \geq 1.34 (s+p) \geq 1.34 (11s) = 14.75s$. Thus, we have $\hat{s} = 1.34s \leq 1/11$. This means $\hat{R}_{ijs}$ starts before $I$ does. This, $\hat{R}_{ijs}$ is the first bad rectangle from $f$ overlapping with $I$, a contradiction. 

From the above argument, it is straightforward to see that if the first bad rectangle overlapping with $I$ is fully contained in $I$, then it is the only bad rectangle from $f$ overlapping with $I$. In this case, the summation is exactly one. To see this is in fact the maximum of the summation, fix two adjacent bad rectangles $\hat R_{ijs}$ and $\hat R_{ij's'}$ from the same configuration, move $I$ to the left or to the right. Since $p' > p$ from the above argument, it is easy to see that when we increase $\hat R_{ij's'}$'s overlap with $I$ by one unit and decrease $\hat R_{ijs}$'s overlap with $I$ by one unit, the summation decreases. This implies the summation is at most one, as desired.  
\end{proof}

\begin{lemma}
\label{lem:expect-height-interval}
    For any machine, the expected total height of jobs associated with a grid interval $I$ on the machine is at most $u$.
\end{lemma}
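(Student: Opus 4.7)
The plan is to express the expected total height of associated jobs as a sum over bad rectangles, and then pass through the configuration decomposition $\{z_{if}\}$ to invoke Lemma~\ref{lem:conf-two-interval}. First I would observe that, since good jobs form singleton groups, only bad jobs can ever be associated with an interval; moreover a job $j$ with $x_{ij} \geq 9/100$ is deterministically good on $i$ and contributes $0$, so I can restrict attention to jobs with $x_{ij} < 9/100$. For such a job, being bad on $i$ is equivalent to its representative rectangle $R_{ij}=R_{ijs}$ being a bad rectangle (i.e.\ $s < p_{ij}/10$). The representative rectangle is chosen independently of the offset $\tau_{ij}$, and $\tau_{ij}$ is uniform on $(0,p_{ij}]$, so by Observation~\ref{obs:event_e}-style reasoning,
\[
\Pr[R_{ij}=R_{ijs}\wedge \theta_{ij}\in I] \;=\; \frac{x_{ijs}}{x_{ij}}\cdot \frac{|I\cap(\hat s,\hat s+p_{ij}]|}{p_{ij}}.
\]

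Multiplying by the independent coin probability $u$ and summing the contribution $x_{ij}$ of each job $j$ with $x_{ij}<9/100$ over all bad $R_{ijs}$, the factor $x_{ij}$ cancels and I obtain
\[
\E\Big[\sum_{j\leadsto_i I} x_{ij}\Big] \;=\; u\sum_{R_{ijs}\text{ bad on }i} x_{ijs}\cdot \frac{|I\cap(\hat s,\hat s+p_{ij}]|}{p_{ij}}.
\]
Next I would substitute the configuration decomposition $x_{ijs}=\sum_{f\ni R_{ijs}} z_{if}$ and swap the order of summation, which yields
\[
u\sum_f z_{if}\sum_{R_{ijs}\in f:\ \text{bad on }i} \frac{|I\cap(\hat s,\hat s+p_{ij}]|}{p_{ij}}.
\]
Applying Lemma~\ref{lem:conf-two-interval} to each configuration $f$ bounds the inner sum by $1$, and using $\sum_f z_{if}\leq 1$ from the properties of $\cF_i$ gives the desired bound of $u$.

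The only real subtlety is justifying the independence structure used in the first displayed equation (the choice of representative rectangle, the offset $\tau_{ij}$, and the coin flip are mutually independent and independent across jobs), which is immediate from the algorithm description. Beyond that, the argument is a clean linearity-of-expectation calculation that relies entirely on Lemma~\ref{lem:conf-two-interval} as its main combinatorial ingredient; I do not anticipate any further obstacle.
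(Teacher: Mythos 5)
Your proposal is correct and matches the paper's own proof essentially step for step: both compute $\E[\sum_{j \leadsto_i I} x_{ij}]$ by linearity of expectation, condition on the representative rectangle to get the factor $u \cdot \frac{|I \cap (\hat{s},\hat{s}+p_{ij}]|}{p_{ij}}$ per bad rectangle, pass to the configuration decomposition $x_{ijs} = \sum_{f \ni R_{ijs}} z_{if}$, and invoke Lemma~\ref{lem:conf-two-interval} together with $\sum_f z_{if} \leq 1$. No gaps.
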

\begin{proof}
We have
\begin{align*}
\E[\sum_{j \leadsto_i I} x_{ij}]&=\sum_{j} x_{ij}\Pr[j \leadsto_i I]\\
&=\sum_j x_{ij}\sum_{s: R_{ijs} \text { is bad on $i$}} \frac{x_{ijs}}{x_{ij}} \Pr[j \leadsto_i I\;|\; R_{ijs}]\\
&=\sum_j x_{ij}\sum_{s : R_{ijs} \text { is bad on $i$}}\frac{x_{ijs}}{x_{ij}} \cdot u \cdot \frac{|I \cap (\hat{s},\hat{s}+p_{ij}]|}{p_{ij}}\\
&=u \sum_f \sum_{R_{ijs} \in f:  R_{ijs} \text { is bad on $i$}} z_{if}  \cdot \frac{|I \cap (\hat{s},\hat{s}+p_{ij}]|}{p_{ij}}\\
&\leq u \sum_f z_{if} \quad \quad \quad \text{(Lemma~\ref{lem:conf-two-interval})}
\\
& \le u
\end{align*}
The third equality follows since $j$ is associated with $I$ when $\theta_{ij}\in I$, which occurs with probability $\frac{|I \cap (\hat{s},\hat{s}+p_{ij}]|}{p_{ij}}$ conditioned on $R_{ij} = R_{ijs}$, and an independent fair coin $(u = 1/2)$ gives a head.
\end{proof}

To prove Lemma~\ref{lem:pair-negative-bound}, we use the following well-known concentration inequality.
\begin{theorem}[Theorem 2.3 of \cite{mcdiarmid1998concentration}]
    \label{thm:cher_b}
Let $Z$ be the sum of $n$ independent random variables where each random variable takes value in $[0, K]$. Let $\mu = \E[Z]$. Then for any $\lambda \in [0, 1]$, we have
\begin{equation*}
\Pr\Big[Z\geq (1 +
\lambda) \mu \Big] \leq e^{-\lambda^2\mu/3K}.
\end{equation*}
\end{theorem}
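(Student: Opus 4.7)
The plan is to apply the standard Chernoff--Hoeffding argument via moment generating functions, after rescaling to reduce to the $[0,1]$ case. First I would set $Z_i' := Z_i/K \in [0,1]$ and $Z' := Z/K$, so that $\mu' := \E[Z'] = \mu/K$ and $\{Z \geq (1+\lambda)\mu\} = \{Z' \geq (1+\lambda)\mu'\}$. Therefore it suffices to show
\[
\Pr\bigl[Z' \geq (1+\lambda)\mu'\bigr] \leq \exp(-\lambda^2 \mu'/3),
\]
after which substituting $\mu' = \mu/K$ yields the claimed $\exp(-\lambda^2 \mu/(3K))$ bound.

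Next, for any $t > 0$ I would apply Markov's inequality to $e^{tZ'}$ and exploit independence of the $Z_i'$:
\[
\Pr\bigl[Z' \geq (1+\lambda)\mu'\bigr] = \Pr\bigl[e^{tZ'} \geq e^{t(1+\lambda)\mu'}\bigr] \leq \frac{\prod_i \E[e^{tZ_i'}]}{e^{t(1+\lambda)\mu'}}.
\]
To bound each moment generating factor I would use the convexity of $z \mapsto e^{tz}$ on $[0,1]$: for any $z \in [0,1]$, $e^{tz} \leq 1 + z(e^t - 1)$, hence $\E[e^{tZ_i'}] \leq 1 + \mu_i'(e^t - 1) \leq \exp\bigl(\mu_i'(e^t - 1)\bigr)$ where $\mu_i' := \E[Z_i']$. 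Taking the product and using $\sum_i \mu_i' = \mu'$ gives $\E[e^{tZ'}] \leq \exp\bigl(\mu'(e^t - 1)\bigr)$, so
\[
\Pr\bigl[Z' \geq (1+\lambda)\mu'\bigr] \leq \exp\bigl(\mu'(e^t - 1) - t(1+\lambda)\mu'\bigr).
\]
I would then optimize by picking $t = \ln(1+\lambda) > 0$ (the case $\lambda = 0$ being trivial), which makes $e^t - 1 = \lambda$ and reduces the exponent to $\mu'\bigl(\lambda - (1+\lambda)\ln(1+\lambda)\bigr)$.

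The one calculus step that remains—and the main obstacle, albeit a routine one—is the scalar inequality
\[
\lambda - (1+\lambda)\ln(1+\lambda) \leq -\lambda^2/3 \quad \text{for all } \lambda \in [0,1].
\]
I would prove this by setting $\phi(\lambda) := \lambda - (1+\lambda)\ln(1+\lambda) + \lambda^2/3$ and showing $\phi \leq 0$ on $[0,1]$. Direct computation gives $\phi(0) = 0$ and $\phi'(\lambda) = 2\lambda/3 - \ln(1+\lambda)$, with $\phi'(0) = 0$. The second derivative $\phi''(\lambda) = 2/3 - 1/(1+\lambda)$ is negative on $(0, 1/2)$ and positive on $(1/2, 1]$, so $\phi'$ first decreases from $0$ to a minimum at $\lambda = 1/2$ and then increases; it therefore suffices to verify the endpoint $\phi'(1) = 2/3 - \ln 2 < 0$, which ensures $\phi'(\lambda) \leq 0$ throughout $[0,1]$, whence $\phi$ is nonincreasing and $\phi(\lambda) \leq \phi(0) = 0$. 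Plugging this back in gives $\Pr[Z' \geq (1+\lambda)\mu'] \leq \exp(-\lambda^2 \mu'/3)$, and undoing the rescaling completes the proof.
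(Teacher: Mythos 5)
Your proof is correct. Note that the paper does not prove this statement at all: it is quoted verbatim as Theorem 2.3 of the cited McDiarmid survey and used as a black box, so there is no in-paper argument to compare against. Your derivation is the standard Chernoff--Hoeffding route: rescaling by $K$ is legitimate because both sides of the inequality are invariant in the right way ($\mu' = \mu/K$ turns $e^{-\lambda^2\mu'/3}$ into the claimed $e^{-\lambda^2\mu/(3K)}$), the convexity bound $e^{tz} \leq 1 + z(e^t-1)$ on $[0,1]$ together with $1+x \leq e^x$ gives the usual MGF estimate, and the choice $t = \ln(1+\lambda)$ reduces everything to the scalar inequality $(1+\lambda)\ln(1+\lambda) - \lambda \geq \lambda^2/3$ on $[0,1]$. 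Your calculus verification of that inequality is sound: $\phi'(\lambda) = 2\lambda/3 - \ln(1+\lambda)$ has $\phi'(0)=0$, is concave-then-convex with the turning point at $\lambda = 1/2$, and $\phi'(1) = 2/3 - \ln 2 < 0$, so $\phi' \leq 0$ on the whole interval and $\phi \leq \phi(0) = 0$; this is exactly the restriction $\lambda \leq 1$ that makes the constant $3$ (rather than the general $2 + 2\lambda/3$ denominator) valid, and it is the regime in which the paper applies the bound ($\lambda = 0.64$).
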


Let $Z_{j'}$ denote the indicator variable for the event $j' \leadsto_i I$. Then, we know $\E [Z := \sum_{j' \neq j, j^*} x_{ij'} Z_{j'}] \leq u = 1/2$. Note that $x_{ij'} Z_{j'} \leq 9/100$, as $j'$ is associated with interval $I$ only when it is bad on machine $i$; thus, we can set $K = 9/100$. Further, $\{Z_{j'}\}_{j' \neq j, j^*}$ are independent from one another. Therefore, by setting $\lambda = 0.64$, we obtain,
\begin{align*}
&\Pr[\sum_{j' \neq j,\Sj\leadsto_i I}x_{ij'}\le 0.82] 
= 1 - \Pr[\sum_{j' \neq j,\Sj\leadsto_i I}x_{ij'} \geq 0.82] \\
&\geq  1 -  \exp( -0.64^2 \cdot 0.5 / (3 * 0.09)) > 0.5317.
\end{align*}
This completes the proof of  Lemma~\ref{lem:pair-negative-bound}.
\bibliographystyle{abbrv}
\bibliography{unrelated}
\end{document}